\documentclass{scrartcl}
\KOMAoptions{paper=letter}

\RequirePackage{fix-cm}
\RequirePackage[utf8]{inputenc}
\RequirePackage[T1]{fontenc}
\RequirePackage{microtype} 
\RequirePackage{xspace}
\RequirePackage{amssymb}
\RequirePackage{amsmath}
\RequirePackage{amsthm}
\RequirePackage[hidelinks]{hyperref} 
\RequirePackage[round]{natbib}

\RequirePackage{calc}
\RequirePackage{nicefrac}
\RequirePackage{tikz}
\usetikzlibrary{positioning,arrows,shapes,decorations,calc,fit,arrows.meta,colorbrewer} 
\RequirePackage{colortbl}
\RequirePackage{booktabs}
\RequirePackage{cleveref}
\RequirePackage{array}
\RequirePackage{graphicx}
\RequirePackage{xcolor}
\RequirePackage{enumitem}
\RequirePackage{tabularx}

\AddToHook{cmd/appendix/before}{%
     \crefalias{section}{appendix}%
     \crefalias{subsection}{appendix}
}

\date{} 

\theoremstyle{definition}

\newtheorem{example}{Example}
\theoremstyle{plain}
\newtheorem{theorem}{Theorem}
\newtheorem{lemma}{Lemma}
\newtheorem{proposition}{Proposition}
\newtheorem{corollary}{Corollary}

\newtheoremstyle{bfnote}
{}{}%
{}{}%
{\bfseries}{.}%
{ }%
{\thmname{#1}\thmnumber{ #2}\thmnote{\textnormal{ (#3)}}}
\theoremstyle{bfnote}
\newtheorem{remark}{Remark}

\newcommand{\rsd}{\ensuremath{\mathit{RSD}}\xspace}

\newcommand{\ps}{\ensuremath{\mathit{PS}}\xspace}
\newcommand{\pop}{\ensuremath{\mathit{POP}}\xspace}

\newcommand{\sd}{\ensuremath{\mathit{SD}}\xspace}

\newcommand{\pc}{\ensuremath{\mathit{PC}}\xspace}

\newcommand{\supp}{\mathrm{supp}} 

\newcommand{\Omit}[1]{} 

\usepackage{mathtools, thmtools}

\title{Efficient and Envy-free Random Assignment Beyond Expected Utility}

\author{
Patrick Becker \quad Felix Brandt \quad Satyanand Rammohan\\
Technical University of Munich%
}

\newcommand{\mN}{N} %
\newcommand{\mO}{O} %
\newcommand{\norm}[1]{\left\lVert#1\right\rVert}

\DeclareMathOperator*{\argmin}{arg\,min}

\newcommand{\R}{\mathbb{R}}

\newcommand{\pchz}{\ensuremath{\mathit{PCHZ}}\xspace}

\begin{document}

\maketitle

\begin{abstract}
We consider the random assignment problem with abstract continuous and convex preferences. In particular, we admit preference relations that are not constrained by independence or transitivity. By extending the Hylland--Zeckhauser pseudo-market mechanism, we show that weakly efficient and envy-free random assignments always exist. For preferences that can be represented via skew-symmetric bilinear (SSB) utility functions---which generalize linear expected utility functions---we prove the existence of efficient and approximately envy-free random assignments. Efficient and envy-free random assignments exist under a mild additional assumption on preferences. These findings have notable implications for ordinal random assignment, where ordinal preferences are extended to preferences over lotteries via the pairwise comparison (\pc) extension. While the probabilistic serial rule and popular random assignments frequently and significantly violate \pc-efficiency and \pc-envy-freeness, respectively, random assignments that satisfy both conditions do exist.
\end{abstract}

\section{Introduction} \label{sec:intro}

A central problem in microeconomic theory concerns the fair and efficient assignment of objects to agents based on their preferences over the objects. 
The formal study of the canonical formulation of this problem with $n$ agents and $n$ objects, such that each object must be allocated to exactly one agent, goes back to \citet{Birk46a} and \citet{vNeu53a}. \citet{Kuhn55a} provided an early algorithmic solution to this problem for cardinal preferences. 
When objects are indivisible, it is impossible to deterministically assign objects such that agents with the same preferences receive the same objects. ``Equal treatment of equals'' is usually ensured via randomization, i.e., by assigning lotteries over objects to the agents. An extensive body of research has investigated random assignment rules with respect to properties such as Pareto efficiency, envy-freeness, and strategyproofness.

In contrast to existing work on random assignment, we only impose minimal restrictions on preference relations. In particular, we allow more general preference relations on lotteries than those that can be represented by expected (vNM) utility functions, which \citet{vNM47a} have axiomatically characterized using completeness, transitivity, continuity, and independence.
Independence prescribes that a lottery $x$ is preferred to lottery $y$ if and only if a coin toss between $x$ and a third lottery $z$ is preferred to a coin toss between $y$ and $z$ (with the same coin used in both cases). 
There is experimental evidence that human decision makers systematically violate the independence axiom. Allais's Paradox is the most famous example \citep{Alla53a}.
\citet{Mach83a,Mach89a} and \citet{McCl88a} provide detailed reviews of such violations, including those reported by \citet{KaTv79a}. 
Similarly, a number of scholars have concluded that transitivity can be unnecessarily demanding \citep[see, e.g.,][]{May54a,Fish70c,BaMa88a,Fish91a,Anan93a,Anan09a,HOR19a}. For example, \citet{Anan09a} states that ``once considered a cornerstone of rational choice theory, the status of transitivity has been dramatically reevaluated by economists and philosophers in recent years.''
In a similar vein, \citet[][p.~115]{Fish91a} proclaims that ``transitivity is obviously a great practical convenience and a nice thing to have for mathematical purposes, but long ago this author ceased to understand why it should be a cornerstone of normative decision theory.''

Our most general result merely requires that preferences are continuous and convex. \citet{Sonn71a} has shown that such relations always admit maximal elements, even in the absence of transitivity. Generalizing the pseudo-market mechanism by \citet{HyZe79a}, we prove the existence of weakly efficient and envy-free random assignments. We then move on to the subset of skew-symmetric bilinear (SSB) preferences. SSB preferences, which are significantly more general than vNM preferences, have been characterized by \citet{Fish82c} and \citet{BrBr17a} and admit a convenient compact representation via skew-symmetric matrices \citep[see also][]{Fish84c,Fish88a}. We construct an SSB preference profile that does not admit a (strongly) efficient and envy-free random assignment and suggest two methods to circumvent this negative result. First, we propose a mild additional restriction on preferences that guarantees the existence of efficient and envy-free random assignments. Secondly, for general SSB preferences, we leverage Kakutani's fixed-point theorem to prove the existence of efficient and approximately envy-free random assignments. 

Our findings have noteworthy implications for the well-explored setting where random assignments only depend on the agents' ordinal preferences over objects. 
Identifying ordinal preferences with their canonical SSB utility function leads to the pairwise comparison (\pc) preference extension, which refines the widely studied stochastic dominance (\sd) preference extension. A lottery is \pc-preferred to another lottery if the former is more likely to return a better alternative than the latter. While the probabilistic serial (\ps) rule is \sd-efficient, we point out that it frequently and significantly violates weak \pc-efficiency, i.e., there is another random assignment in which every agent strictly \pc-prefers her lottery to the one she receives under \ps. At the same time, random serial dictatorship and popular random assignments fail to satisfy \pc-envy-freeness. These results lead to the natural question of whether there are rules that satisfy both \pc-efficiency and \pc-envy-freeness, which we answer in the affirmative using our generalization of the Hylland--Zeckhauser pseudo-market.

\subsection*{Related Work}

The work most directly related to ours is the seminal contribution of \citet{HyZe79a}, who introduced the pseudo-market approach to random assignment by combining equal artificial budgets with price-supported lotteries.
Their mechanism can be viewed as an assignment market: competitive prices decentralize individually optimal choices and together with market clearing lead to an efficient allocation \citep{Debr59a}. 

A related line of work, originating with \citet{ArDe54a}, weakens the preference assumptions under which competitive equilibria exist. \citet{MasC74a} and \citet{GaMA75a} establish existence results for economies without transitivity under continuity, convexity, and non-satiation-type assumptions. 
This last assumption is not well suited to the assignment problem, in which agents choose lotteries over a finite set of objects. 
This difficulty has been addressed in different ways: \citet{MasC92a} relaxes the notion of equilibrium,
whereas \citet{Sato10b} provides additional conditions that restore equilibrium existence.
Closest to our approach is the abstract-economy theorem of \citet{ShSo75a}. Our baseline fixed-point argument in \Cref{theorem:market_existence} can be viewed as a specialization of this abstract-economy logic to the Hylland--Zeckhauser pseudo-market.
Our work further contributes to the literature on pseudo-markets as mechanisms for achieving normatively desirable outcomes \citep{HMPY18a, Mira17a}.
The idea to obtain efficient and envy-free allocations via competitive-equilibrium-from-equal-incomes mechanisms has also been applied in other fair division settings \citep{Budi11a,EMZ21a}.

Several papers have explored extensions of classic results for vNM utilities to the more general model of SSB utility functions. For instance, \citet{FiRo86a} have generalized the existence of Nash equilibria \citep{Nash50a}. \citet{ABB14b} have proved an efficiency-welfare theorem connecting efficiency to affine welfare maximization \citep{Carr10a}, which we utilize in the proof of \Cref{theorem:alpha-envy-free}. \citet{BBH15c} have shown that the no-show paradox \citep{Moul88b} disappears for SSB preferences, as a randomized Condorcet extension called \emph{maximal lotteries} satisfies participation. \citet{BrBr17a} characterized a rich subdomain of SSB preferences---the \pc domain---that allows for Arrovian aggregation \citep{Arro51a} and a corresponding social welfare function. \citet{ABBB15a} and \citet{BLS22c} have shown negative results for efficient and strategyproof social choice functions, similar in spirit to classic results by \citet{Gibb77a} and \citet{Hyll80a}.

\section{Preliminaries}

Let $\mN = \{1,\dots,n\}$ be a set of $n$ agents and $\mO$ be a set of $n$ objects.
A \emph{deterministic assignment (or pure matching)} is a %
permutation matrix in $\R^{n\times n}$.
A \emph{random assignment} is a probability distribution over deterministic assignments, which we represent as a bistochastic matrix $X = (x_i(o))_{i \in N, o \in O}$ where $x_i(o)$ is the probability with which agent $i$ receives object $o$.\footnote{A matrix is bistochastic if all entries are non-negative and every row and every column sums up to $1$.}
The set of all random assignments is denoted by $\mathcal{M}$.
By the Birkhoff--von Neumann decomposition, every bistochastic matrix can be written as a probability distribution over deterministic assignments \citep{Birk46a,vNeu53a}.
The set of all probability distributions (or lotteries) over $\mO$ is denoted by $\Delta$. For a random assignment $X\in\mathcal{M}$ and $i\in N$, we write $x_i$ for the $i$th row of $X$, i.e., the lottery over $\mO$ assigned to agent $i$; thus, $x_i\in\Delta$.
A lottery is \emph{degenerate} if it puts all probability on a single object.

As an example, consider the following random assignment $X$ where $n=3$ and $\mO=\{a,b,c\}$.
	\[X = \begin{pmatrix}
	\nicefrac{1}{3} & \nicefrac{2}{3} & 0\\
	\nicefrac{2}{3} & \nicefrac{1}{12} & \nicefrac{1}{4} \\
	0 & \nicefrac{1}{4} & \nicefrac{3}{4}
	\end{pmatrix}
	\]
Here, $x_1(a)=\nicefrac{1}{3}$ and $x_3=(0,\nicefrac{1}{4}, \nicefrac{3}{4})$.

Every agent $i\in \mN$ has an asymmetric, binary preference relation $\succ_i$ over the elements of $\Delta$. As a consequence, there are no externalities, and agents are only concerned with their own assignment.

For lotteries $x,y\in\Delta$, write $x\sim y$ if neither $x\succ y$ nor $y\succ x$, and $x\succsim y$ if either $x\succ y$ or $x\sim y$. %
A \emph{preference profile} $(\succ_1, \dots, \succ_n)$ is an $n$-tuple of preference relations.

Throughout, we assume that agents have continuous and convex preferences in the sense that the graph of $\succ$ is open and both weak and strict upper contour sets are convex.
More precisely, we demand that
\[
G(\succ) \coloneqq \{(x, y) \in \Delta \times \Delta \colon x \succ y\} \text{ is open.}
\tag{continuity}
\]
Furthermore, we require that for all $x\in\Delta$,
\[
\text{$W(x) \coloneqq \{y\in\Delta \colon y\succsim x\}$ and $U(x) \coloneqq \{y\in\Delta \colon y\succ x\}$ are convex.}\tag{convexity}
\]
Continuity implies that both strict upper contour sets and strict lower contour sets are open.

\citet{Sonn71a} has shown that convexity of strict upper contour sets and openness of strict lower contour sets suffice to guarantee the existence of maximal elements in every non-empty, compact, and convex set of lotteries, even when preferences are intransitive \citep[see also][]{Berg92a,Llin98a}. 
Furthermore, demanding that weak upper contour sets are convex ensures that sets of maximal elements are convex.
Continuity ensures that strict preference comparisons are robust to small perturbations of both lotteries.

A subset of the domain of continuous and convex preferences admits a representation by skew-symmetric bilinear (SSB) utility functions.
A preference relation can be expressed by an SSB utility function $\phi\colon\Delta\times\Delta\rightarrow \mathbb R$ if for all $x,y\in\Delta$,
\[
	x\succ y \text{ if and only if }\phi(x,y) > 0\text.
	\tag{SSB preferences}
\]
Skew-symmetry requires that $\phi(x,y) = - \phi(y,x)$ for all $x,y\in\Delta$ and bilinearity that $\phi$ is linear in both arguments.
Note that, by skew-symmetry, linearity in the first argument implies linearity in the second argument and that, due to bilinearity, $\phi$ is completely determined by its function values for degenerate lotteries.
Thus, with slight abuse of notation, we will represent every SSB utility function $\phi$ by a skew-symmetric matrix $\phi\in \mathbb{R}^{O\times O}$.
As mentioned in \Cref{sec:intro}, SSB utility was introduced by \citet{Fish82c} as a generalization of classic linear expected utility that does not require the somewhat controversial axioms of independence and transitivity \citep[see also][]{Fish84c,Fish88a}. When $\phi$ is separable, i.e., $\phi(x,y) = u(x) - u(y)$ for some $u\in \mathbb{R}^O$, the preferences represented by $\phi$ boil down to vNM expected utility with utility function $u$.
Through the representation of ${\succ}$ as a skew-symmetric matrix $\phi$, it becomes apparent that the Minimax Theorem \citep{vNeu28a} implies the existence of maximal elements of $\succ$. This was noted by \citet[Theorem 4]{Fish84c} and already follows from \citet{Sonn71a}.

The central question pursued in this paper is under which conditions on preferences the existence of efficient and envy-free random assignments can be guaranteed.

A random assignment $X\in\mathcal{M}$ is efficient if there is no $Y\in\mathcal{M}$ such that
\[
	y_i \succsim_i x_i \text{ for all }i\in N \text{ and }y_i \succ_i x_i \text{ for some } 
	i\in N\text.
	\tag{efficiency}
\]
If there is such an $Y\in\mathcal{M}$, we say that $Y$ \emph{Pareto dominates} $X$.

$X\in\mathcal{M}$ is weakly efficient if there is no $Y\in\mathcal{M}$ such that
\[
	y_i \succ_i x_i \text{ for all }i\in N\text.
	\tag{weak efficiency}
\]
If there is such an $Y\in\mathcal{M}$, we say that $Y$ \emph{strongly Pareto dominates} $X$.

A random assignment $X\in\mathcal{M}$ is envy-free if 
\[
	x_i\succsim_i x_j \text{ for all }i,j\in N\text.
	\tag{envy-freeness}
\]

\section{Incompatibility of Efficiency and Envy-Freeness} \label{section:counterexample}

We first show that efficiency and envy-freeness cannot always be satisfied simultaneously, even when preferences are represented by SSB utility functions. This is demonstrated by the following preference profile for three agents.

\begin{restatable}{theorem}{CounterExample} \label{theorem:counter-example}
	Let $O=\{a,b,c\}$ and consider the preference profile $(\succ_1, \succ_2, \succ_3)$, represented by the SSB utility functions
	\begin{equation*}
        \phi_1 = \begin{pmatrix}
            0 & 1 & 1 \\
            -1 & 0 & 1 \\
            -1 & -1 & 0
        \end{pmatrix}\text, \quad
        \phi_2 = \begin{pmatrix}
            0 & 1 & 1 \\
            -1 & 0 & 1 \\
            -1 & -1 & 0
        \end{pmatrix}\text{, and} \quad
        \phi_3 = \begin{pmatrix}
            0 & -1 & 1 \\
            1 & 0 & 0 \\
            -1 & 0 & 0
        \end{pmatrix}.
    \end{equation*}
	For this profile, no random assignment satisfies both efficiency and envy-freeness.
\end{restatable}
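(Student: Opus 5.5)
We argue by contradiction. Suppose $X\in\mathcal{M}$ is efficient and envy-free for the given profile. Agents $1$ and $2$ have the SSB function $\phi_1=\phi_2$, which is the \pc extension of $a\succ b\succ c$. For agent $3$ we have $\phi_3(b,a)=1$, $\phi_3(a,c)=1$ and $\phi_3(b,c)=0$: agent $3$ ranks $b$ above $a$ above $c$ but is indifferent between $b$ and $c$. The plan is to show that envy-freeness pins $X$ down to a small family, and then to Pareto-improve every member of that family. Since the number of free entries is small, I will parametrize $X$ by $x_1$ and $x_2$ (then $x_3=\mathbf{1}-x_1-x_2$). I will write each condition as an explicit polynomial inequality using bilinearity:
\[\phi(x,y)=\sum_{o,o'}x(o)\,\phi(o,o')\,y(o').\]

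\emph{Step 1 (envy-freeness).} Because $\phi_1$ is skew-symmetric and $\phi_1=\phi_2$, the conditions $x_1\succsim_1 x_2$ and $x_2\succsim_2 x_1$ together force $\phi_1(x_1,x_2)=0$. The remaining constraints are $\phi_1(x_1,x_3)\ge 0$, $\phi_1(x_2,x_3)\ge 0$, $\phi_3(x_3,x_1)\ge 0$ and $\phi_3(x_3,x_2)\ge 0$. Adding the first two and using $x_1+x_2=\mathbf{1}-x_3$ gives a single inequality in $x_3$. The same aggregation works for agent $3$'s two inequalities. Together these should confine $x_3$ to a small region. My guess is that agent $3$ must receive a substantial share of $a$ or $b$ so as not to envy the agents holding $a$.

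\emph{Step 2 (efficiency via local trades).} I will use the fact that $\phi(x+\varepsilon d,x)=\varepsilon\,\phi(d,x)$. Hence a small bilateral exchange strictly benefits an agent exactly when the directional term $\phi(d,x)$ is positive. It leaves her indifferent when that term is zero and the second-order term $\varepsilon^2\phi(d,d)=0$ also vanishes; this holds automatically by skew-symmetry. For agent $i\in\{1,2\}$, trading $b$ for $a$ yields $\phi_1(a-b,x_i)=x_i(a)+x_i(b)$. Trading $c$ for $b$ yields $\phi_1(b-c,x_i)=x_i(b)+x_i(c)$. For agent $3$, $\phi_3(b-a,x_3)=x_3(a)+x_3(b)-x_3(c)$ and $\phi_3(c-a,x_3)=x_3(b)-x_3(a)-x_3(c)$, and so on. Efficiency thus forbids every exchange where one side gains strictly and the other does not lose. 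This gives support conditions of the form ``if $x_3(o)>0$ and $x_i(o')>0$ then a certain linear expression is $\le 0$.'' I will also use exchanges between agents $1$ and $2$ themselves. Because the two agents share the same preferences, I expect this to force $x_1$ and $x_2$ into a comparable shape.

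\emph{Step 3 (combining).} Intersecting the region from Step 1 with the support conditions from Step 2, I expect to reach a case split on whether $x_3(a)>0$ and whether $x_3(c)>0$. In each case either an envy inequality is violated, or an explicit trade gives a Pareto improvement. The most promising such trade moves agent $3$ from $a$ toward $b$ or $c$ and compensates agents $1$ and $2$ with $a$. Agent $3$'s indifference between $b$ and $c$ is the lever: it lets her absorb $c$ at no cost to herself while $1$ and $2$ strictly gain.

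The main obstacle is that non-linearity of $\phi(\cdot,x)$ in the reference point means first-order conditions alone may not be decisive. When a directional term is exactly zero, the improvement must come from a non-local deviation, for instance moving to a specific deterministic-support assignment. So in the boundary cases I would try to write down the dominating $Y$ explicitly and check $\phi_i(y_i,x_i)\ge 0$ directly rather than rely on derivatives.
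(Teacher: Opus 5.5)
Your route is different from the paper's and can be completed, in fact into a shorter proof. As written, though, it is a plan rather than a proof. You never compute what Step~1 yields, and your guess about it is not the fact that is needed. Step~3 is only anticipated.

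The missing computation is short. Summing agent~3's two envy constraints gives $\phi_3(x_3,x_1)+\phi_3(x_3,x_2)=\phi_3(x_3,\mathbf 1-x_3)=x_3(b)-x_3(c)\ge 0$, since the row sums of $\phi_3$ are $(0,1,-1)$. Summing those of agents~1 and~2 against agent~3 gives $\phi_1(\mathbf 1,x_3)=2\bigl(x_3(c)-x_3(a)\bigr)\ge 0$, since the column sums of $\phi_1$ are $(-2,0,2)$. Hence $x_3(b)\ge x_3(c)\ge x_3(a)$. Now split only on $x_3(a)$.
\begin{itemize}
\item If $x_3(a)>0$, then $x_3(b)<1$, so some $i\in\{1,2\}$ has $x_i(b)>0$. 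Exchange $\delta=\min\{x_3(a),x_i(b)\}$ of agent~3's $a$ for $i$'s $b$. Agent~3 gains $\delta\bigl(x_3(a)+x_3(b)-x_3(c)\bigr)\ge\delta x_3(a)>0$, and agent~$i$ gains $\delta\bigl(x_i(a)+x_i(b)\bigr)>0$. Moving agent~3 from $a$ toward $c$ instead need not help her: her directional gain there is $x_3(b)-x_3(a)-x_3(c)$, which can be negative.
\item If $x_3(a)=0$, then $x_3(c)\le\nicefrac{1}{2}\le x_3(b)$, so some $i\in\{1,2\}$ has $x_i(c)>0$. Exchange $\delta=\min\{x_3(b),x_i(c)\}$ of agent~3's $b$ for $i$'s $c$. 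Agent~$i$ gains $\delta\bigl(x_i(b)+x_i(c)\bigr)>0$, and agent~3's change is $-2\delta x_3(a)=0$. It is $x_3(a)=0$, not merely her indifference between the objects $b$ and $c$, that makes this swap free for her.
\end{itemize}
Trades between agents~1 and~2, and a split on $x_3(c)$, are unnecessary.

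The obstacle you anticipate does not exist. The identity $\phi(x+\varepsilon d,x)=\varepsilon\phi(d,x)$ holds exactly, with no $\varepsilon^2$ term, so a vanishing directional term means exact indifference; the second case uses exactly this. More generally, for SSB preferences $Y$ Pareto dominates $X$ if and only if $(1-\varepsilon)X+\varepsilon Y$ does, for any $\varepsilon\in(0,1]$. So non-local deviations are never needed; the only question is whether bilateral swaps suffice, and here they do.

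The paper proceeds differently. It keeps the unaggregated conditions $\phi_1(x_1,x_2)=0$, $\phi_3(x_3,x_1)\ge 0$ and $\phi_3(x_3,x_2)\ge 0$. From these it derives $5x_i(a)+2x_i(b)\le 3$ for $i=1,2$, which says agents~1 and~2 weakly prefer $(\nicefrac{1}{2},\nicefrac{1}{4},\nicefrac{1}{4})$ to their own lotteries. It then dominates $X$ by a fixed assignment giving agent~3 the lottery $(0,\nicefrac{1}{2},\nicefrac{1}{2})$, which she finds indifferent to every lottery, with a second fixed assignment for the equality case.

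Your completed argument is more elementary. Since it only uses $x_3(b)\ge x_3(c)$, it also proves more: no efficient assignment leaves agent~3 envy-free, regardless of agents~1 and~2. The paper's version instead gives $X$-independent dominating assignments and isolates the universally indifferent lottery, which reappears in its pseudo-market example.
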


The proof of \Cref{theorem:counter-example} can be found in \Cref{appendix:counter-example}.
The intransitivity, even over degenerate lotteries, of agent~$3$'s preference relation plays a key role in the proof. 
In particular, when restricted to degenerate lotteries (identified with pure objects in $O$), agents~$1$ and~$2$ share the transitive strict order $a\succ b\succ c$, while agent~$3$'s preference is given by $b\succ_3 a$, $a\succ_3 c$, and $b\sim_3 c$.
This has the consequence that agent~$3$ is indifferent between the lottery $x_3 = (0, \nicefrac12, \nicefrac12)$ and every other lottery.

\section{A Generalized Pseudo-Market} \label{section:pseudo-markets}

Pseudo-markets provide a natural framework for decentralizing allocation problems without monetary transfers.
Agents are endowed with artificial budgets and use these to purchase probabilistic shares of indivisible objects; in the assignment problem, each object is available in unit supply. 
\citet{HyZe79a} proposed a competitive mechanism for vNM utilities and showed that every random assignment at market equilibrium satisfies efficiency and envy-freeness, assuming agents are given equal budgets and each agent selects a cost-minimal element from the set of budget-feasible utility maximizers. 
We revisit this construction at the level of abstract preferences over lotteries, and prove that our versions of continuity and convexity suffice to retain existence of a pseudo-market equilibrium.
\Cref{theorem:counter-example} already suggests that the axiomatic properties of the equilibrium cannot carry over verbatim to our more general preference model; nevertheless, we prove that the equilibrium assignment satisfies \emph{weak} efficiency in conjunction with envy-freeness.

In \Cref{section:SSB_with_strict_separation}, we further identify a preference domain restriction which restores (strong) efficiency of the pseudo-market equilibrium by ruling out obstructions of the type seen in \Cref{theorem:counter-example}.
As we show, this domain restriction still includes all weak preference relations that are transitive on degenerate lotteries.
Hence, this domain remains a significant generalization of the vNM domain originally considered by \citet{HyZe79a}.

\medskip

Assume each agent $i \in N$ receives a budget $b_i>0$ to buy probability shares of the objects in $O$; assume also that $\sum_{i\in N} b_i=1$.
There is a common price vector $p \in \R^n_{\geq 0}$, where the component $p_o$ denotes the price of one full probability share of object $o \in O$.
The price vectors are normalized so that they lie in a compact \emph{price set} $P \subseteq \R^n_{\geq0}$ that satisfies $\min_{o \in O} p_o = 0$ for all $p \in P$.\footnote{Since this price set is not convex, we will obtain prices from a compact and convex parameter space via a continuous normalization map in a manner similar to \citet{HyZe79a}.}
Given a price vector $p\in P$ and a lottery $x \in \Delta$, the \emph{cost} of $x$ under the price vector $p$ is given by the inner product $p \cdot x$.
Each agent $i$ can choose a lottery whose cost does not exceed her budget $b_i$. 
Therefore, the \emph{budget set} of agent $i$ is defined as
\begin{equation*}
	B_i(p) \coloneqq \{x\in \Delta\colon p\cdot x \le b_i\}.
	\tag{budget set}
\end{equation*}
We assume that each agent $i$ is rational in the sense that she selects, among the lotteries in $B_i(p)$, a maximal element according to her preference relation $\succ_i$. 
The \emph{demand set} of agent $i$ at prices $p$ is the set of maximal affordable lotteries,
\begin{equation*}
	D_i(p) \coloneqq \{x\in B_i(p)\colon x \succsim_i y \text{ for all } y\in B_i(p)\}.
	\tag{demand set}
\end{equation*}

A pseudo-market equilibrium is then a pair consisting of a price vector $p^\ast$ and a profile $X^\ast = (x^\ast_i)_{i \in N}$ of lotteries with $x^\ast_i \in D_i(p^\ast)$ for each agent $i$, such that aggregate demand exactly exhausts the unit supply of every object.

Formally, a pair $(X^\ast,p^\ast) \in \Delta^n \times P$ is a \emph{pseudo-market equilibrium} if it satisfies
\begin{enumerate}[leftmargin=*,label=\textit{(\roman*)}]
	\item $x_i^\ast \in D_i(p^\ast)$ for every agent $i \in N$
	\hfill \textnormal{(rationality)}	
	\item $\sum_{i\in N} x_i^\ast(o) = 1$ for every object $o \in \mO$
	\hfill \textnormal{(market-clearing)}
\end{enumerate}
By the market-clearing condition, we can identify $X^\ast = (x^\ast_i)_{i \in N}$ as a random assignment, that is,  $X^\ast \in\mathcal{M}$.

\begin{restatable}{theorem}{MarketExistence}\label{theorem:market_existence}
	For every continuous and convex preference profile, a pseudo-market equilibrium exists.
\end{restatable}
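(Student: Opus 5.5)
The plan is a Kakutani fixed-point argument in the spirit of \citet{HyZe79a} and the abstract-economy approach of \citet{ShSo75a}. Prices are parametrized by a compact convex box and then normalized. Fix a constant $M>1$ and let $Q\coloneqq[0,M]^n$. Define the continuous normalization $\pi\colon Q\to P$ by $\pi(q)_o \coloneqq q_o-\min_{o'\in O} q_{o'}$, so that $\min_o \pi(q)_o=0$. On the compact convex set $Q\times\Delta^n$ I will define the correspondence
\[
F(q,X)\coloneqq \Bigl\{\bigl(\mathrm{clip}_{[0,M]}(q+z(X))\bigr)\Bigr\}\times \prod_{i\in N} D_i(\pi(q)),
\]
where $z(X)_o\coloneqq\sum_{i} x_i(o)-1$ is the excess demand and the clipping is coordinatewise. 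Kakutani's theorem then gives a fixed point $(q^\ast,X^\ast)$, and I will show that $(X^\ast,\pi(q^\ast))$ is a pseudo-market equilibrium.

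Kakutani requires nonempty, convex, closed values and a closed graph.
\begin{itemize}
\item The price component is a single-valued continuous map.
\item Each budget set $B_i(p)$ is nonempty, compact and convex.
\item $D_i(p)$ is nonempty by Sonnenschein's theorem \citep{Sonn71a}, since strict upper contour sets are convex and strict lower contour sets are open.
\item $D_i(p)$ is convex because $D_i(p)=B_i(p)\cap\bigcap_{y\in B_i(p)}\{x\colon x\succsim_i y\}$. Here $\{x\colon x\succsim_i y\}$ is the weak upper contour set $W(y)$ of $y$, which is convex by assumption.
\end{itemize}

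The main obstacle is upper hemicontinuity (closed graph) of $p\mapsto D_i(p)$ without transitivity. I would argue directly with the open graph of $\succ_i$. Suppose $p^k\to p$, $x^k\in D_i(p^k)$, $x^k\to x$, and assume $x\notin D_i(p)$. Then some $y\in B_i(p)$ satisfies $y\succ_i x$. The crucial ingredient is lower hemicontinuity of the budget correspondence. It holds because $\min_o p_o=0$, so the degenerate lottery on a free object costs $0<b_i$ (a Slater point). Mixing $y$ slightly with this free lottery yields $y'$ with $p\cdot y'<b_i$ and, by openness of $G(\succ_i)$, still $y'\succ_i x$. For large $k$ we then have $y'\in B_i(p^k)$, and again by openness $y'\succ_i x^k$. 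This contradicts $x^k\in D_i(p^k)$. Closedness of $B_i$ in $p$ gives $x\in B_i(p)$, and the composition with the continuous map $\pi$ preserves the closed graph.

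Finally, I verify market clearing at the fixed point. Since every $x_i^\ast$ is a lottery, $\sum_o z(X^\ast)_o = n-n=0$; this replaces Walras' law, which may fail because agents need not exhaust their budgets. The fixed-point condition $q^\ast=\mathrm{clip}(q^\ast+z)$ gives:
\begin{itemize}
\item $z_o>0$ implies $q^\ast_o=M$;
\item $z_o<0$ implies $q^\ast_o=0$.
\end{itemize}
If $z\neq 0$, then, because the coordinates of $z$ sum to zero, some $z_o>0$ and some $z_{o'}<0$. Hence $\min q^\ast=0$ and $\pi(q^\ast)_o=M$. Each agent then buys at most $b_i/M$ of $o$, so total demand for $o$ is at most $1/M<1$. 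This contradicts $z_o>0$, so $z=0$ and the market clears. Together with $x_i^\ast\in D_i(\pi(q^\ast))$ this yields the equilibrium.
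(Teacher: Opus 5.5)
Your proof is correct, and the demand side matches the paper's. You get nonemptiness of $D_i(p)$ from \citet{Sonn71a}, convexity from convexity of the weak upper contour sets $W(y)$, and a closed graph from openness of $G(\succ_i)$ combined with a free object $e_{o^\ast}$ as a Slater point. The paper proves lower hemicontinuity of $B_i$ as a separate lemma; you inline it.

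The genuine difference is on the price side. The paper parametrizes prices by the zero-sum $\ell_1$-ball $S$ and normalizes them via $f(s)=s-\min_o s_o\mathbf 1$. It then updates prices with the argmax correspondence $\Pi(X)=\arg\max_{s\in S}s\cdot z(X)$, the ``price player'' of the \citet{ShSo75a} abstract-economy approach. That route needs Berge's maximum theorem for upper hemicontinuity of $\Pi$. Its market-clearing step uses three further facts:
\begin{itemize}
\item the identity $f(s)\cdot z=s\cdot z$ for zero-sum $z$;
\item the fact that a maximizer satisfies $\|s^\ast\|_1=2$;
\item the resulting bound $p^\ast\cdot\mathbf 1\ge 1$, which contradicts affordability.
\end{itemize}

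Your clipped t\^{a}tonnement on the box $[0,M]^n$ is a single-valued continuous map, so no maximum theorem is needed. Market clearing also becomes almost immediate: an overdemanded object must be priced at $M>1=\sum_i b_i$, so aggregate demand for it is at most $1/M<1$. Both arguments use the same substitute for Walras' law, namely $\sum_o z_o=0$.

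One cosmetic point: your normalized price set $\pi(Q)$ need not lie inside the specific $P$ defined in the appendix. This is harmless. The main text only asks for some compact normalized price set, and at any equilibrium $p^\ast\cdot\mathbf 1=p^\ast\cdot\sum_i x_i^\ast\le 1$, which places $\pi(q^\ast)$ in the paper's $P$ anyway.
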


The proof of \Cref{theorem:market_existence}, using Kakutani's fixed-point theorem, is deferred to \Cref{appendix:pseudomarket}.
The key ingredients are: \emph{(i)} the budget set correspondence $B_i(\cdot)$ is continuous with non-empty, compact and convex values, \emph{(ii)} the demand correspondence $D_i(\cdot)$ is upper hemicontinuous with non-empty, compact and convex values, and \emph{(iii)} the normalized price-selection correspondence is upper hemicontinuous with non-empty, compact and convex values.
Kakutani's theorem then yields the existence of a pseudo-market equilibrium. 

Even in our generalized preference model, the pseudo-market equilibrium satisfies envy-freeness, under the assumption that agents' budgets are equal.

\begin{proposition} \label{proposition:equilibrium-envy-freeness}
	Suppose agents have equal budgets, i.e., $b_i=b_j$ for all $i,j\in N$. Then for every pseudo-market equilibrium $(X^\ast, p^\ast)$, the random assignment $X^\ast$ is envy-free.
\end{proposition}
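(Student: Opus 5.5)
The argument is the standard "equal budgets imply equal budget sets" argument of Hylland and Zeckhauser. It uses only the rationality condition of the equilibrium and needs none of the continuity or convexity assumptions beyond what is already encoded in the definition of $D_i$.

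Fix a pseudo-market equilibrium $(X^\ast,p^\ast)$ and two agents $i,j\in N$. By rationality, $x_j^\ast\in D_j(p^\ast)\subseteq B_j(p^\ast)$, so $p^\ast\cdot x_j^\ast\le b_j$. Since budgets are equal, $b_j=b_i$. Hence $p^\ast\cdot x_j^\ast\le b_i$, and together with $x_j^\ast\in\Delta$ this gives $x_j^\ast\in B_i(p^\ast)$. In fact $B_i(p^\ast)=B_j(p^\ast)$ as sets, because the definition of the budget set depends on the agent only through $b_i$.

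Next, again by rationality, $x_i^\ast\in D_i(p^\ast)$. By the definition of the demand set, $x_i^\ast\succsim_i y$ for every $y\in B_i(p^\ast)$. Taking $y=x_j^\ast$ gives $x_i^\ast\succsim_i x_j^\ast$. Since $i$ and $j$ were arbitrary, this is exactly envy-freeness of $X^\ast$. By market clearing, $X^\ast\in\mathcal{M}$, so it is indeed a random assignment.

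I do not expect a real obstacle. The only point to watch is that $D_i$ is defined via the weak relation $\succsim_i$ (no $y$ strictly preferred), which matches the weak relation in the definition of envy-freeness; no transitivity is used anywhere. The intransitivity allowed here therefore plays no role in the argument.
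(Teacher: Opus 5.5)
Your proof is correct and is the same as the paper's argument, essentially line for line. Equal budgets make $x_j^\ast$ affordable for agent $i$, so $x_j^\ast \in B_i(p^\ast)$, and maximality of $x_i^\ast$ in $B_i(p^\ast)$ then gives $x_i^\ast \succsim_i x_j^\ast$ without any use of transitivity.
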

\begin{proof}
	Let $(X^\ast,p^\ast)$ be a pseudo-market equilibrium and $i, j \in N$ any two agents. 
	By assumption we have $b_i = b_j$.
	Since $x_j^\ast \in B_j(p^\ast)$, we have $p^\ast \cdot x_j^\ast \le b_j = b_i$.
	This implies $x_j^\ast \in B_i(p^\ast)$, i.e., $x^\ast_j$ is also affordable for agent $i$. 
	Since $x_i^\ast \in D_i(p^\ast)$, $x^\ast_i$ is maximal in $B_i(p^\ast)$, and we have $x_i^\ast \succsim_i x_j^\ast$.
	Hence, $X^\ast$ is envy-free.
\end{proof}

As mentioned, the pseudo-market equilibrium does not directly inherit efficiency from the Hylland--Zeckhauser mechanism for vNM utilities.
The key property that holds for vNM utilities and fails in our more general model is: if $x_i, y_i \in B_i(p)$, with $x_i$ maximal and $y_i$ not maximal, then $x_i \succ_i y_i$ (this follows directly from the linearity of vNM utility functions). 
Even if we strengthen the rationality assumption to require that agents break ties between multiple maximal affordable lotteries by minimizing the cost $p \cdot x$ (following \citealp{HyZe79a}), the equilibrium assignment may be Pareto-dominated.
This violation can be seen as a consequence of \Cref{theorem:counter-example}.

\begin{example}
	Consider the preference profile for $n = 3$ from \Cref{theorem:counter-example}, and let all agents have equal budgets ($b_i =\nicefrac{1}{3}$ for all $i \in N$).
	A pseudo-market equilibrium is given by the pair $(X^\ast, p^\ast)$ with 
	\[
		X^\ast=
		\begin{pmatrix}
		\nicefrac{1}{2} & \nicefrac{1}{4} & \nicefrac{1}{4}\\
		\nicefrac{1}{2} & \nicefrac{1}{4} & \nicefrac{1}{4}\\
		0 & \nicefrac{1}{2} & \nicefrac{1}{2}
		\end{pmatrix}
		\quad\text{and}\quad
		p^\ast=\left(\nicefrac{5}{9},\nicefrac{2}{9},0\right).
	\]
	This pseudo-market equilibrium even satisfies the stronger rationality requirement of cost-minimization for each agent.
	Nevertheless, $X^\ast$ is Pareto dominated by the random assignment
	\[	
		Y=
		\begin{pmatrix}
		\nicefrac{1}{2} & \nicefrac{1}{2} & 0\\
		\nicefrac{1}{2} & \nicefrac{1}{2} & 0\\
		0 & 0 & 1
		\end{pmatrix}.
	\]
\end{example}

On the other hand, as we show below, the pseudo-market equilibrium satisfies weak efficiency.

\begin{proposition} \label{proposition:equilibrium-wefficiency}
	For every pseudo-market equilibrium $(X^\ast, p^\ast)$, the random assignment $X^\ast$ is weakly efficient.
\end{proposition}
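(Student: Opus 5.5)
We argue by contradiction, in the spirit of the first welfare theorem. Suppose $(X^\ast,p^\ast)$ is a pseudo-market equilibrium and some $Y\in\mathcal{M}$ strongly Pareto dominates $X^\ast$, i.e., $y_i\succ_i x_i^\ast$ for every $i\in N$. The plan is to show that every $y_i$ must cost strictly more than the budget $b_i$. Summing these strict inequalities over agents and comparing with the cost of the total supply will then give a contradiction.

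\textbf{Step 1 (each $y_i$ is unaffordable).} Since $x_i^\ast\in D_i(p^\ast)$, we have $x_i^\ast\succsim_i z$ for all $z\in B_i(p^\ast)$. If $y_i\in B_i(p^\ast)$, then $x_i^\ast\succsim_i y_i$, which means $y_i\not\succ_i x_i^\ast$ because $\succ_i$ is asymmetric. This contradicts $y_i\succ_i x_i^\ast$. Hence $p^\ast\cdot y_i>b_i$ for all $i$. This step uses only the definition of the demand set; no convexity, continuity, or linearity of preferences is needed. That is why weak efficiency survives while strong efficiency does not: for strong efficiency we would need $y_i\sim_i x_i^\ast$ to imply $p^\ast\cdot y_i\ge p^\ast\cdot x_i^\ast$, which fails in our model.

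\textbf{Step 2 (aggregate costs).} Summing over $i$ gives $\sum_i p^\ast\cdot y_i>\sum_i b_i=1$. Since $Y$ is bistochastic, its columns sum to one, so $\sum_i p^\ast\cdot y_i=\sum_{o\in O}p^\ast_o$. By market clearing the same identity holds for $X^\ast$: $\sum_{o}p^\ast_o=\sum_i p^\ast\cdot x_i^\ast\le\sum_i b_i=1$, where the inequality is feasibility $x_i^\ast\in B_i(p^\ast)$. Combining the two yields $1<\sum_o p^\ast_o\le 1$, a contradiction. This completes the argument.

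\textbf{Main obstacle.} The only delicate point is the aggregation in Step 2. It needs the strict inequality from Step 1 for \emph{every} agent, which is exactly what strong Pareto dominance provides. It also relies on the identity that total expenditure on any bistochastic matrix equals $\sum_o p^\ast_o$, for which both the column sums of $Y$ and market clearing for $X^\ast$ are used. I do not expect needing any normalization of $P$ (such as $\min_o p_o=0$), nor any additional assumption on the budgets beyond $\sum_i b_i=1$.
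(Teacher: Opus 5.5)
Your proof is correct and follows the paper's argument essentially step for step. Maximality of $x_i^\ast$ in $B_i(p^\ast)$ forces $p^\ast\cdot y_i>b_i$ for every agent, and summing over agents contradicts $\sum_i y_i=\mathbf{1}=\sum_i x_i^\ast$ together with $p^\ast\cdot\sum_i x_i^\ast\le\sum_i b_i$.
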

\begin{proof}
	Let $(X^\ast,p^\ast)$ be a pseudo-market equilibrium, and suppose, towards a contradiction, that $X^\ast$ is not weakly efficient. 
	Then there exists another random assignment $Y$ such that
	\begin{equation*}
		y_i\succ_i x_i^\ast \qquad \text{for all }i \in N.
	\end{equation*}
	Since $x_i^\ast\in D_i(p^\ast)$, we have $p^\ast \cdot x^\ast_i \leq b_i$, and $x^\ast_i$ is maximal in the budget set $B_i(p^\ast)$. 
	Hence, $y_i \succ_i x_i^\ast$ implies $y_i \notin B_i(p^\ast)$. 
	In particular, $p^\ast\cdot y_i > b_i$ holds for every agent $i \in N$.
	Summing over agents yields
	\begin{equation*}
		p^\ast\cdot \sum_{i\in N}y_i > \sum_{i \in N} b_i \geq p^\ast \cdot \sum_{i\in N} x^\ast_i.
	\end{equation*}
	However, since $Y$ is a random assignment and $X^\ast$ satisfies market-clearing, we have $\sum_{i\in N}y_i = \mathbf{1} = \sum_{i\in N} x^\ast_i$, and thus both sides of the inequality coincide, a contradiction.
	Thus, no such $Y$ exists, and $X^\ast$ is weakly efficient.
\end{proof}

As a consequence of \Cref{proposition:equilibrium-envy-freeness,proposition:equilibrium-wefficiency}, the existence of pseudo-market equilibria, as shown by \Cref{theorem:market_existence}, immediately yields random assignments that are both weakly efficient and envy-free.
\begin{corollary} \label{corollary:weak-eff-plus-EF}
	For every continuous and convex preference profile, there exists a random assignment satisfying weak efficiency and envy-freeness.
\end{corollary}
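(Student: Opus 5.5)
The corollary follows by combining the three preceding results, with one point to check. Fix an arbitrary continuous and convex preference profile $(\succ_1,\dots,\succ_n)$. Give every agent the same budget $b_i=\nicefrac{1}{n}$. This satisfies the standing assumptions $b_i>0$ and $\sum_{i\in N} b_i = 1$, so the pseudo-market is well defined for this profile.

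By \Cref{theorem:market_existence}, there is a pseudo-market equilibrium $(X^\ast,p^\ast)$. The only thing to verify is that the existence theorem is available for \emph{this particular} budget vector. The theorem is stated for an arbitrary budget vector satisfying the standing assumptions, and the equal-budget vector is one of them. The key ingredients listed after \Cref{theorem:market_existence} depend only on $b_i>0$, which makes the budget sets non-empty because $\min_o p_o=0$. They do not depend on the budgets being unequal.

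By market clearing, $X^\ast\in\mathcal{M}$. Since the budgets are equal, \Cref{proposition:equilibrium-envy-freeness} gives that $X^\ast$ is envy-free. \Cref{proposition:equilibrium-wefficiency} holds for any budgets, and gives that $X^\ast$ is weakly efficient. Hence $X^\ast$ is a random assignment satisfying both properties, as claimed.

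The only potential obstacle is the compatibility check above: the equal-budget specialization must be admissible in \Cref{theorem:market_existence}. Since the theorem quantifies over all positive budgets summing to one, this is immediate, and no further argument is needed.
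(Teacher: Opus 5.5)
Your proposal is correct and is exactly the paper's argument: take equal budgets $b_i=\nicefrac{1}{n}$, obtain a pseudo-market equilibrium from \Cref{theorem:market_existence}, and then apply \Cref{proposition:equilibrium-envy-freeness} for envy-freeness and \Cref{proposition:equilibrium-wefficiency} for weak efficiency. Your compatibility check is harmless but not really needed, since the equilibrium proof only uses $b_i>0$ and $\sum_{i\in N} b_i=1$, both of which you verified.
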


\subsection{SSB Preferences with Strict Maximality}\label{section:SSB_with_strict_separation}

\Cref{theorem:counter-example} has established that even the restricted domain of SSB preferences allows large indifference regions that prevent pseudo-market equilibria from being efficient. 
We now identify a further domain restriction within SSB preferences that precisely rules out this obstruction. 
Specifically, we impose a \emph{strict maximality} condition that requires, for each agent $i \in N$, that the strict part $\succ_i$ separates the maximal elements in $\Delta$ from the rest of the simplex $\Delta$. 
Denote the set of \emph{maximal elements} of the preference relation $\succ_i$ in the simplex $\Delta$ by
\[
	M_i \coloneqq \{x\in\Delta\colon x \succsim_i y \text{ for all }y\in\Delta\}
	\tag{maximal elements}
\]
Then $\succ_i$ satisfies \emph{strict maximality} if 
\[
	x \succ_i y \text{ for all }x\in M_i,\ y\notin M_i\text.
	\tag{strict maximality}
\]

Within the domain of SSB preferences, the strict maximality condition can be characterized by the following equivalent condition on the matrix representation $\phi_i$.

\begin{restatable}{proposition}{strictMaximalityCharacterization} \label{proposition:characterization_SSB_strict_maximality}
	Let $\phi_i$ be an SSB matrix. 
	The preference relation $\succ_i$ represented by $\phi_i$ satisfies strict maximality if and only if there exists a reordering of the objects $\mO$ after which 
	\[
		\phi_i =
		\begin{pmatrix}
		0 & Q\\
		-Q^\top & C
		\end{pmatrix},
	\]
	where every entry of $Q$ is strictly positive.
\end{restatable}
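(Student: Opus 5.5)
We prove both directions directly from the bilinear representation $\phi_i(x,y)=x^\top\phi_i y$. The key observation is that, by bilinearity, $x\in M_i$ holds if and only if $\phi_i(x,e_o)\ge 0$ for every degenerate lottery $e_o$, $o\in\mO$. In particular, $M_i$ is a polytope, and it is non-empty by the Minimax Theorem argument mentioned in the preliminaries. Throughout, $T$ denotes the set of objects in the first block of the reordering, and $\Delta(T)$ the set of lotteries supported on $T$.

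\emph{Sufficiency.} Assume $\phi_i$ has the stated block form with $Q>0$ entrywise. We first show $\Delta(T)\subseteq M_i$. Take $x\in\Delta(T)$. For $o\in T$, the zero block gives $\phi_i(x,e_o)=0$. For $o\notin T$, we get $\phi_i(x,e_o)=(x^\top Q)_o>0$, since $x$ is a nonzero non-negative vector and $Q$ is strictly positive. Hence $x\in M_i$.

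Now take $y\notin\Delta(T)$, so $y$ puts positive mass on some $o\notin T$. Then
\[
\phi_i(x,y)=\sum_{o\notin T} y_o\,(x^\top Q)_o>0 .
\]
So $x\succ_i y$, which also shows $y\notin M_i$. Therefore $M_i=\Delta(T)$, and every element of $M_i$ is strictly preferred to every element outside it. This is exactly strict maximality.

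\emph{Necessity.} Assume strict maximality. Any two $x,x'\in M_i$ satisfy $\phi_i(x,x')\ge0$ and $\phi_i(x',x)\ge0$, so skew-symmetry gives $\phi_i(x,x')=0$. Since $M_i$ is convex, we can pick $\bar x\in M_i$ whose support $T$ equals the union of the supports of all elements of $M_i$; for instance, take an average of finitely many elements of $M_i$ covering all of these supports.

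The main step is to show that $e_o\in M_i$ for every $o\in T$. Suppose $e_o\notin M_i$ for some $o\in T$. Strict maximality then gives $\phi_i(\bar x,e_o)>0$. On the other hand, $\phi_i(\bar x,e_{o'})\ge 0$ for every $o'$ because $\bar x$ is maximal. Combining these with $\bar x_o>0$ yields
\[
0=\phi_i(\bar x,\bar x)=\sum_{o'\in T}\bar x_{o'}\,\phi_i(\bar x,e_{o'})>0,
\]
a contradiction. Hence $e_o\in M_i$ for all $o\in T$. Conversely, $e_o\notin M_i$ for $o\notin T$ by the choice of $T$.

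We now read off the block form. For $o,o'\in T$, both $e_o$ and $e_{o'}$ lie in $M_i$, so $\phi_i(e_o,e_{o'})=0$ by the first observation in this direction. Thus the $T\times T$ block is zero. For $o\in T$ and $o'\notin T$, strict maximality gives $e_o\succ_i e_{o'}$, i.e., $Q_{oo'}=\phi_i(e_o,e_{o'})>0$. Skew-symmetry forces the lower-left block to be $-Q^\top$, and the remaining block $C$ is unconstrained. Listing the objects of $T$ first gives the claimed form.

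The delicate point in this plan is the step showing $e_o\in M_i$ for all $o\in T$. The averaging argument with $\phi_i(\bar x,\bar x)=0$ handles it, but only because strict maximality turns the weak inequality $\phi_i(\bar x,e_o)\ge 0$ into a strict one whenever $e_o\notin M_i$.
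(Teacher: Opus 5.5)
Your proof is correct and takes essentially the paper's route: sufficiency is the same bilinear expansion over degenerate lotteries, and for necessity both arguments rest on the fact that, for $x\in M_i$, the map $y\mapsto\phi_i(x,y)$ is a non-negative linear functional on $\Delta$ that vanishes exactly on $M_i$ (this uses strict maximality). The paper fixes an arbitrary $x\in M_i$ and cites that the zero set of such a functional is a face of the simplex, so it needs no maximal-support element $\bar x$; your computation $0=\phi_i(\bar x,\bar x)=\sum_{o'}\bar x_{o'}\phi_i(\bar x,e_{o'})$ is simply a hands-on proof of that face property.
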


A proof of \Cref{proposition:characterization_SSB_strict_maximality} can be found in \Cref{appendix:strictness_assumption}.
Note that by this characterization, strict maximality implies that the set $M_i$ of maximal elements is a face of the simplex, with vertices given exactly by the set $T \subseteq O$ of objects corresponding to the top-left zero block of the reordered matrix $\phi_i$.

\begin{corollary}
	Let~$\succ$ be a preference relation represented by an SSB utility function. If~$\succsim$ is transitive on degenerate lotteries, then~$\succ$ satisfies strict maximality.
\end{corollary}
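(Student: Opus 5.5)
We derive this from the characterization in \Cref{proposition:characterization_SSB_strict_maximality}. It therefore suffices to produce a reordering of $\mO$ under which $\phi$ has the block form $\begin{pmatrix} 0 & Q\\ -Q^\top & C\end{pmatrix}$ with $Q$ entrywise strictly positive.

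Transitivity of $\succsim$ on degenerate lotteries means the following. For objects $a,b,c$, if $\phi(a,b)\ge 0$ and $\phi(b,c)\ge 0$, then $\phi(a,c)\ge 0$. Since $\succsim$ is also complete, it is a weak order on $\mO$.

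Let $T\subseteq\mO$ be the top indifference class of this weak order, which is nonempty since $\mO$ is finite. Order the objects so that those in $T$ come first.
\begin{itemize}
\item For $a,a'\in T$ we have $a\sim a'$, so $\phi(a,a')=0$. Hence the top-left block is zero.
\item For $a\in T$ and $o\notin T$, maximality of $T$ gives $a\succsim o$. Moreover $o\succsim a$ would place $o$ in $T$ by transitivity. Hence $a\succ o$, that is, $\phi(a,o)>0$. So every entry of $Q$ is strictly positive.
\end{itemize}
Skew-symmetry then forces the bottom-left block to equal $-Q^\top$. The block $C$ is unconstrained.

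Applying \Cref{proposition:characterization_SSB_strict_maximality} yields strict maximality. The only delicate point is to use transitivity of the weak relation $\succsim$, not merely of $\succ$, so that the top class is well defined and strictly beats every object outside it. Note that $C$ need not reflect any transitivity for the characterization to apply.

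As a sanity check, one can verify the conclusion directly. Take $x$ supported on $T$ and $y\notin M$. Then $\phi(x,y)=\sum_{a\in T,\,o\notin T} x(a)\,y(o)\,\phi(a,o)$, which is positive as soon as $y$ puts positive mass outside $T$. This in turn holds because $M$ equals the face spanned by $T$.
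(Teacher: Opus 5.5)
Your proof is correct and follows the route the paper intends: the corollary comes straight out of \Cref{proposition:characterization_SSB_strict_maximality} once you take $T$ to be the top indifference class of the weak order that $\succsim$ induces on $\mO$. Transitivity then gives both the zero top-left block and the strict positivity of $Q$ (including the degenerate case $T=\mO$, where $Q$ is empty). Your closing ``sanity check'' relies on $M$ being the face spanned by $T$, which is exactly what the converse direction of that proposition establishes; it is harmless but adds nothing to the main argument.
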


It turns out that the pseudo-market equilibrium returns a random assignment satisfying efficiency and envy-freeness in the domain of SSB preferences satisfying strict maximality.
In addition to the rationality and market-clearing conditions of our defined pseudo-market equilibrium, we require that each agent $i$ selects a cost-minimizing element from her demand set $D_i(p)$.
That is to say, we refine the demand correspondence $D_i(\cdot)$. 
Given a price vector $p$, we assume that agent $i$ selects from
\[
	\widehat D_i(p) \coloneqq \argmin_{x\in D_i(p)} p\cdot x.
\]

Under SSB preferences, the correspondence $\widehat{D}_i(\cdot)$ inherits the properties required for our fixed-point existence argument, which is otherwise similar to the proof of \Cref{theorem:market_existence}.

\begin{restatable}{theorem}{CostMinimizationExistence} \label{theorem:existence-with-cost-minimization}
	For every SSB preference profile, a pseudo-market equilibrium with cost-minimization exists.
\end{restatable}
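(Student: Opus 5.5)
We reuse the Kakutani argument behind \Cref{theorem:market_existence} and replace the demand correspondence $D_i(\cdot)$ by the refined correspondence $\widehat D_i(\cdot)$. Every other ingredient carries over verbatim: the budget correspondence $B_i(\cdot)$, the compact convex parameter space with its continuous normalization onto $P$, and the price-adjustment correspondence. It therefore suffices to show that, for SSB preferences, $\widehat D_i(\cdot)$ has non-empty, compact and convex values and is upper hemicontinuous. Given this, the fixed point again yields market clearing, and each $x_i^\ast$ lies in $\widehat D_i(p^\ast)\subseteq D_i(p^\ast)$.

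\textbf{Values.} $D_i(p)$ is non-empty by Sonnenschein's theorem. Under SSB it equals $\{x\in B_i(p)\colon \phi_i(x,y)\ge 0 \text{ for all } y\in B_i(p)\}$. This is an intersection of closed half-spaces with the polytope $B_i(p)$, hence a compact convex polytope. Minimizing the linear function $p\cdot x$ over it gives a non-empty face, which is compact and convex.

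\textbf{Upper hemicontinuity.} This is the main obstacle, because an argmin over a correspondence that is only upper hemicontinuous need not be upper hemicontinuous. Lower hemicontinuity of $D_i$ would be needed, and it can fail. I would use the SSB structure as follows. Consider the minimax game on $B_i(p)$ with payoff $\phi_i$. Its value is $0$ by skew-symmetry, and $D_i(p)$ is its set of optimal strategies. Write $x\in\widehat D_i(p)$ as the solution of a linear program in $x$ whose constraint data depend continuously on $p$. The constraints $\phi_i(x,v)\ge0$ need only range over the vertices $v$ of $B_i(p)$, and these vertices vary continuously with $p$ whenever $b_i>0$ and $\min_o p_o=0$. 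Parametric linear programs can still have discontinuous solution sets when the feasible region degenerates.

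I would therefore first attempt a perturbation-and-limit argument rather than prove upper hemicontinuity directly. For $\varepsilon>0$, replace $\widehat D_i$ by the correspondence of $\varepsilon$-maximal affordable lotteries of minimal cost plus $\varepsilon$. For each such perturbed correspondence, prove lower hemicontinuity of the feasible set via a Slater-type interior point: any maximal element, slightly mixed toward a cheap object, remains $\varepsilon$-maximal. Berge's maximum theorem then gives an upper hemicontinuous argmin. Obtain an $\varepsilon$-equilibrium $(X^\varepsilon,p^\varepsilon)$ from Kakutani, and pass to a convergent subsequence. Finally verify that the limit $(X^\ast,p^\ast)$ is an equilibrium with cost-minimization. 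Rationality passes to the limit by closedness of $G(\succsim_i)$ under bilinearity. Cost-minimality is the delicate part: if some $z\in D_i(p^\ast)$ were strictly cheaper than $x_i^\ast$, perturb $z$ toward an object with price $0$, or along the face structure, to produce nearby $\varepsilon$-maximal bundles that are cheaper than $x_i^\varepsilon$ for small $\varepsilon$, a contradiction. The budget slack supplied by the zero-price object, guaranteed by the normalization of $P$, is what I expect to make this last step work.
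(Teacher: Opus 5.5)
Your reduction to the demand side is right, and you handle the values of $\widehat D_i$ correctly. However, there is a genuine gap at exactly the step you flag as delicate.

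The perturbed equilibria $(X^\varepsilon,p^\varepsilon)$ can be obtained as you describe. Concavity of $x\mapsto\min_{y\in B_i(p)}\phi_i(x,y)$ makes your Slater argument go through, and market clearing only uses affordability. What you do not establish is cost-minimality of the limit. Take $z\in D_i(p^\ast)$ with $p^\ast\cdot z<p^\ast\cdot x_i^\ast$. Affordability is not the issue: $z$ itself lies in $B_i(p^\varepsilon)$ for small $\varepsilon$, because $p^\ast\cdot z<b_i$. The issue is $\varepsilon$-maximality of (something near) $z$ at $p^\varepsilon$. Continuity only yields $\min_{y\in B_i(p^\varepsilon)}\phi_i(z,y)\to 0$, at a rate tied to $\|p^\varepsilon-p^\ast\|$ rather than to $\varepsilon$. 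Nothing in your sketch excludes lotteries that are affordable at $p^\varepsilon$ but not at $p^\ast$ and beat $z$ by more than $\varepsilon$. Mixing $z$ toward a zero-price object only buys budget slack, which you already have, and it can lower $\min_y\phi_i(\cdot,y)$. So your limit step needs exactly the lower hemicontinuity of $D_i$ at $p^\ast$ that you set out to avoid; the difficulty has only been postponed.

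The missing idea is an SSB-specific lemma: if $y\in D_i(p)$ and $p\cdot y<b_i$, then $y$ is maximal in all of $\Delta$. Indeed, if $w\succ_i y$, bilinearity gives $\phi_i((1-t)y+tw,y)=t\,\phi_i(w,y)>0$ for all $t\in(0,1]$. For small $t$ the mixture is still affordable, contradicting $y\in D_i(p)$.

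With this lemma, upper hemicontinuity of $\widehat D_i$ follows directly, with no lower hemicontinuity of $D_i$ needed. Let $p^k\to p$, $x^k\in\widehat D_i(p^k)$ and $x^k\to x$; then $x\in D_i(p)$ by upper hemicontinuity of $D_i$. Any $y\in D_i(p)$ with $p\cdot y<p\cdot x\le b_i$ is globally maximal by the lemma. Hence $y$ belongs to $D_i(p^k)$ as soon as $p^k\cdot y<p^k\cdot x^k\le b_i$, contradicting cost-minimality of $x^k$. This is the paper's argument, after which the Kakutani proof applies verbatim.

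The same lemma would also close your limit step, since the cheaper $z$ is globally maximal and hence affordable and exactly maximal at $p^\varepsilon$. But then the $\varepsilon$-perturbation is superfluous. Some use of bilinearity of this kind is unavoidable: \Cref{example:non-existence-cost-minimization} gives continuous convex non-SSB preferences with no cost-minimizing equilibrium.
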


The proof of \Cref{theorem:existence-with-cost-minimization} is given in \Cref{appendix:strictness_assumption}.

\begin{remark}
	The assumption of SSB preferences cannot be dropped entirely; one can construct non-SSB preference profiles that, in spite of \Cref{theorem:market_existence}, do not admit pseudo-market equilibria with the additional cost-minimization requirement.
	Such a profile is constructed in \Cref{example:non-existence-cost-minimization} in \Cref{appendix:strictness_assumption}.
\end{remark}

For SSB preferences satisfying strict maximality, we strengthen the efficiency guarantee of \Cref{proposition:equilibrium-wefficiency} from weak efficiency to (strong) efficiency for the refined version of the pseudo-market equilibrium.
The necessity of tie-breaking by cost-minimization to obtain efficiency of the equilibrium assignment is already recognized by \citet{HyZe79a}. 

\begin{restatable}{proposition}{PseudoMarketStrongEfficiency} \label{proposition:equilibrium-sefficiency}
	For every SSB preference profile satisfying strict maximality, and every pseudo-market equilibrium $(X^\ast, p^\ast)$ with cost-minimization, the random assignment $X^\ast$ is efficient.
\end{restatable}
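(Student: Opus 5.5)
The plan is to adapt the classical Hylland--Zeckhauser argument, following the proof of \Cref{proposition:equilibrium-wefficiency}. Suppose, towards a contradiction, that some $Y\in\mathcal{M}$ Pareto dominates $X^\ast$: we have $y_i\succsim_i x_i^\ast$ for all $i$, and $y_j\succ_j x_j^\ast$ for some $j$. It suffices to show two inequalities. First, $p^\ast\cdot y_i>p^\ast\cdot x_i^\ast$ whenever $y_i\succ_i x_i^\ast$. Second, $p^\ast\cdot y_i\ge p^\ast\cdot x_i^\ast$ whenever $y_i\sim_i x_i^\ast$. Summing over $i$ then gives $p^\ast\cdot\sum_i y_i>p^\ast\cdot\sum_i x_i^\ast$. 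This is impossible, since both sums equal $\mathbf 1$ by market-clearing and bistochasticity.

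The strict case is immediate. Because $x_i^\ast\in D_i(p^\ast)$ is maximal in $B_i(p^\ast)$, any $y_i\succ_i x_i^\ast$ is unaffordable, so $p^\ast\cdot y_i>b_i\ge p^\ast\cdot x_i^\ast$. The indifferent case is where the SSB structure, strict maximality and cost-minimization all enter. Throughout, write $\phi_i$ for the SSB function and fix some $z\in M_i$, which is nonempty by \citet{Sonn71a}. I split according to whether $x_i^\ast\in M_i$.

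\emph{Case $x_i^\ast\in M_i$.} If $y_i\notin M_i$, strict maximality gives $x_i^\ast\succ_i y_i$, contradicting $y_i\sim_i x_i^\ast$; hence $y_i\in M_i$. If $y_i$ is unaffordable, then $p^\ast\cdot y_i>b_i\ge p^\ast\cdot x_i^\ast$. Otherwise $y_i$ is globally maximal and affordable, so $y_i\in D_i(p^\ast)$. Cost-minimization of $x_i^\ast\in\widehat D_i(p^\ast)$ then yields $p^\ast\cdot y_i\ge p^\ast\cdot x_i^\ast$.

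\emph{Case $x_i^\ast\notin M_i$.} Here strict maximality gives $z\succ_i x_i^\ast$, i.e.\ $\phi_i(z,x_i^\ast)>0$. I first show that the budget binds, $p^\ast\cdot x_i^\ast=b_i$. If $p^\ast\cdot x_i^\ast<b_i$, then $x_\varepsilon=x_i^\ast+\varepsilon(z-x_i^\ast)$ is affordable for small $\varepsilon>0$. By bilinearity and skew-symmetry, $\phi_i(x_\varepsilon,x_i^\ast)=\varepsilon\,\phi_i(z,x_i^\ast)>0$, which contradicts maximality of $x_i^\ast$ in $B_i(p^\ast)$.

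Next suppose $y_i\sim_i x_i^\ast$ and $p^\ast\cdot y_i<b_i$. Then $y_\varepsilon=y_i+\varepsilon(z-y_i)$ is affordable for small $\varepsilon>0$. Bilinearity gives
\[\phi_i(y_\varepsilon,x_i^\ast)=(1-\varepsilon)\phi_i(y_i,x_i^\ast)+\varepsilon\phi_i(z,x_i^\ast)=\varepsilon\phi_i(z,x_i^\ast)>0,\]
using $\phi_i(y_i,x_i^\ast)=0$. This again contradicts maximality of $x_i^\ast$. Hence $p^\ast\cdot y_i\ge b_i=p^\ast\cdot x_i^\ast$.

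The main obstacle is this indifferent case. In general, an affordable lottery indifferent to a maximal one need not itself be maximal, which is exactly the failure exhibited in \Cref{theorem:counter-example}. The case split resolves it. When $x_i^\ast$ is globally maximal, strict maximality forces indifferent lotteries into $M_i$, where cost-minimization applies. When $x_i^\ast$ is not globally maximal, strict maximality supplies a direction $z$ that is strictly better than $x_i^\ast$. Bilinearity then converts any slack in the budget into a strict improvement.
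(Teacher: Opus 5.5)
Your proof is correct and follows essentially the same route as the paper. Both arguments derive the strict and weak price inequalities, split on whether $x_i^\ast\in M_i$ (cost-minimization in the first case, a bilinear perturbation toward a globally maximal lottery in the second), and close with the summation contradiction via market clearing. The only difference is cosmetic: in the second case you first show the budget binds ($p^\ast\cdot x_i^\ast=b_i$, the content of the paper's budget-spending lemma for SSB preferences) and then rule out $p^\ast\cdot y_i<b_i$, whereas the paper directly perturbs a hypothetical $y$ with $p^\ast\cdot y<p^\ast\cdot x_i^\ast$ toward some $m\in M_i$.
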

\begin{proof}
	Let $(X^\ast, p^\ast)$ be a pseudo-market equilibrium with cost-minimization.
	For each agent $i \in N$ and each lottery $y \in \Delta$, $y \succ_i x^\ast_i$ implies $y \notin B_i(p^\ast)$; otherwise, the maximality of $x^\ast_i$ in $B_i(p^\ast)$ is violated.
	Hence, 
	\[
		y \succ_i x^\ast_i \quad\Longrightarrow\quad p^\ast\cdot y > p^\ast\cdot x_i^\ast.
	\]
	We also claim that
	\[
		y \succsim_i x^\ast_i \quad\Longrightarrow\quad p^\ast\cdot y\ge p^\ast\cdot x_i^\ast.
	\]
	Suppose, towards a contradiction, that $y \succsim_i x^\ast_i$ and $p^\ast\cdot y<p^\ast\cdot x_i^\ast$.
	Since $x_i^\ast \in B_i(p^\ast)$, this implies $y \in B_i(p^\ast)$ as well.
	We distinguish two cases.

	First, suppose that $x_i^\ast\in M_i$, i.e., the lottery $x^\ast_i$ is maximal for agent $i$ in $\Delta$. 
	Since we assumed $y \succsim_i x^\ast_i$, we get $y \sim_i x^\ast_i$, which by strict maximality implies $y \in M_i$, i.e., $y$ is also maximal in $\Delta$.
	Hence, $y$ is also maximal in the affordable set $B_i(p^\ast) \subseteq \Delta$, i.e., $y\in D_i(p^\ast)$.
	But then $p^\ast\cdot y<p^\ast\cdot x_i^\ast$ contradicts that $x_i^\ast$ is a cost-minimal element of $D_i(p^\ast)$.

	Second, suppose that $x_i^\ast\notin M_i$. 
	Choose some maximal element $m\in M_i$. 
	By strict maximality, $m \succ_i x^\ast_i$.
	For $\varepsilon\in(0,1)$, define $z^\varepsilon \coloneqq (1-\varepsilon)y +\varepsilon m$.
	By bilinearity,
	\[
		\phi_i(z^\varepsilon,x_i^\ast) = (1-\varepsilon)\phi_i(y,x_i^\ast) + \varepsilon\phi_i(m,x_i^\ast) > 0,
	\]
	where we used $\phi_i(y,x_i^\ast) \geq 0$ and $\phi_i(m,x_i^\ast) > 0$.
	Thus, $z^\varepsilon \succ_i x_i^\ast$.
	Moreover, since $p^\ast\cdot y < p^\ast\cdot x_i^\ast$, the continuity of the dot product implies that we can choose $\varepsilon$ sufficiently small so that $p^\ast\cdot z^\varepsilon<p^\ast\cdot x_i^\ast\le b_i$.
	Hence, $z^\varepsilon\in B_i(p^\ast)$, contradicting the maximality of $x_i^\ast$ in $B_i(p^\ast)$.
	This proves the claim.

	Finally, suppose, towards a contradiction, that $X^\ast$ is not efficient.
	Then there exists a random assignment $Y$ such that $y_i \succsim_i x^\ast_i$ for all $i \in N$, and $y_i\succ_i x^\ast_i$ for some $i \in N$.
	By the claim, 
	\begin{align*}
		p^\ast \cdot y_i &\geq p^\ast \cdot x^\ast_i \qquad \text{for all }i\in N, \\
		p^\ast \cdot y_i &> p^\ast \cdot x^\ast_i \qquad \text{for some }i\in N.
	\end{align*}
	Summing over agents gives
	\[
		p^\ast\cdot\sum_{i\in N}y_i > p^\ast\cdot\sum_{i\in N}x_i^\ast.
	\]
	However, since $Y$ is a random assignment and $X^\ast$ satisfies market-clearing, we have $\sum_{i\in N}y_i = \mathbf 1 = \sum_{i\in N}x_i^\ast$, and thus both sides of the inequality coincide, a contradiction.
	Thus, no such $Y$ exists, and $X^\ast$ is efficient.
\end{proof}

Combining the existence of a pseudo-market equilibrium with cost minimization (\Cref{theorem:existence-with-cost-minimization}), envy-freeness under equal budgets (\Cref{proposition:equilibrium-envy-freeness}), and the strengthened efficiency guarantee of \Cref{proposition:equilibrium-sefficiency}, we obtain the existence of efficient and envy-free random assignments for the domain of SSB preferences satisfying strict maximality.

\begin{corollary}
	For every SSB preference profile satisfying strict maximality, there exists a random assignment satisfying efficiency and envy-freeness.
\end{corollary}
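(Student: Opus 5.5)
The corollary follows by chaining three earlier results. I would take all budgets equal, $b_i = \nicefrac{1}{n}$ for every $i \in N$; this satisfies the standing requirement $\sum_{i\in N} b_i = 1$ with every $b_i > 0$.

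Every SSB preference profile is continuous and convex, so \Cref{theorem:existence-with-cost-minimization} applies. It gives a pseudo-market equilibrium with cost-minimization $(X^\ast, p^\ast)$. This means $x_i^\ast \in \widehat D_i(p^\ast)$ for all $i$, and market-clearing holds. By market-clearing, $X^\ast$ is bistochastic, i.e., $X^\ast \in \mathcal{M}$.

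For envy-freeness, I would note that $\widehat D_i(p^\ast) \subseteq D_i(p^\ast)$. So $(X^\ast, p^\ast)$ is in particular a pseudo-market equilibrium in the original sense. Since the budgets are equal, \Cref{proposition:equilibrium-envy-freeness} gives that $X^\ast$ is envy-free.

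For efficiency, the profile is assumed to satisfy strict maximality, and $(X^\ast,p^\ast)$ is an equilibrium with cost-minimization. \Cref{proposition:equilibrium-sefficiency} then gives that $X^\ast$ is efficient.

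The only point needing care is that \Cref{theorem:existence-with-cost-minimization} holds for an arbitrary budget vector, in particular equal budgets. As stated, it places no restriction on budgets beyond the standing assumptions, so no real obstacle remains.
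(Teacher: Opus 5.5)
Your proposal is correct and matches the paper's own argument: fix equal budgets, obtain an equilibrium with cost-minimization from \Cref{theorem:existence-with-cost-minimization}, and combine \Cref{proposition:equilibrium-envy-freeness} (using $\widehat D_i(p) \subseteq D_i(p)$) with \Cref{proposition:equilibrium-sefficiency}. Your remark about budgets also holds: the existence proof uses only $b_i>0$ and $\sum_{i\in N} b_i = 1$, and equal budgets $b_i = 1/n$ satisfy both.
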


\begin{remark}
	Neither \Cref{theorem:existence-with-cost-minimization} nor \Cref{proposition:equilibrium-sefficiency} rely \emph{per se} on preferences being represented by SSB utility functions. 
	Beyond strict maximality and the maintained continuity and convexity assumptions, our argument only requires the following axiom of \citet[Axiom D2]{Fish84c}: for all $x,y,z\in \Delta$ with $x\succ y$ and $y \succsim z$, it holds that $(\lambda x + (1-\lambda) y) \succ z$ for every $\lambda \in (0, 1]$.
	SSB preferences satisfy this axiom by bilinearity, but it can be satisfied by non-SSB preference relations as well.
	One natural class of non-SSB preferences that satisfies both strict maximality and Axiom D2 are those represented by \emph{$\ell_p$-disutilities} for $p \geq 1$.
	Such preferences allow for strictly maximal elements in the interior of the simplex $\Delta$, which is ruled out by SSB preferences (see \Cref{proposition:characterization_SSB_strict_maximality}).
\end{remark}

\section{Efficiency and $\alpha$-Envy-Freeness for SSB Preferences}
In \Cref{section:counterexample}, we showed that the existence of efficient and envy-free random assignments is unattainable in the full domain of SSB preferences.
We bypassed this impossibility in \Cref{section:SSB_with_strict_separation} by restricting the SSB domain using strict maximality.
In this section, we show that \Cref{theorem:counter-example} can be circumvented in the full SSB domain by relaxing envy-freeness.
In particular, we prove, for the full SSB domain, the existence of random assignments that are efficient and induce arbitrarily small envy among agents.

Let $\phi = (\phi_1, \dots, \phi_n)$ be a profile of SSB utility functions, and let $\alpha$ be a positive constant.
A random assignment $X \in \mathcal{M}$ is $\alpha$-envy-free if 
\[
\phi_i(x_j, x_i) \leq \alpha \text{ for all }i, j \in N\text.
\tag{$\alpha$-envy-freeness}
\]

We provide a fixed-point argument inspired by \citet{CoTa21a}, while leveraging an efficiency-welfare theorem for SSB preferences due to \citet{ABB14b}.
This theorem is restated below.
\begin{theorem}[\citealp{ABB14b}] \label{theorem:efficiency-welfare}
	Let $(\phi_1, \dots, \phi_n)$ be a profile of SSB utility functions. 
	A random assignment $X \in \mathcal{M}$ is efficient if and only if there exists a vector $\omega \in \R^n_{>0}$ of positive weights that satisfies
	\[
		\sum_{i \in N} \omega_i \phi_i(x_i, y_i) \geq 0 \qquad \text{for all } Y \in \mathcal{M}.
	\]
\end{theorem}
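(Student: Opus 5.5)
The statement has two directions, and I would prove them separately. The space of random assignments $\mathcal{M}$ is a polytope, and each map $Y \mapsto \phi_i(x_i, y_i)$ is linear in $Y$.

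\emph{Sufficiency.} Suppose positive weights $\omega$ satisfy $\sum_i \omega_i \phi_i(x_i,y_i)\ge 0$ for all $Y\in\mathcal{M}$, and suppose $Y$ Pareto dominates $X$. Then $\phi_i(y_i,x_i)\ge 0$ for all $i$, with strict inequality for some $i$. By skew-symmetry, $\phi_i(x_i,y_i)\le 0$ for all $i$, again with one strict inequality. Because every $\omega_i>0$, the weighted sum is strictly negative, which is a contradiction.

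\emph{Necessity.} Let $X$ be efficient. Consider the linear map $L\colon \mathcal{M}\to\R^n$ given by $L(Y)=(\phi_i(y_i,x_i))_{i\in N}$. Its image $K=L(\mathcal{M})$ is a polytope, and it contains $0=L(X)$. Efficiency of $X$ says exactly that $K\cap(\R^n_{\ge0}\setminus\{0\})=\emptyset$.

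I would then apply a strict-positivity separation theorem for polyhedral sets, i.e., a Stiemke/Gordan-type theorem of the alternative. Because $K$ is polyhedral, the cone $C$ it generates from the origin, $\mathrm{cone}(K)=\mathrm{cone}(K-L(X))$, is finitely generated and hence closed. Moreover $C\cap\R^n_{\ge0}=\{0\}$: a positive multiple of an element of $K$ that is nonnegative and nonzero would itself be such an element of $K$. Stiemke's lemma then gives a vector $\omega\in\R^n_{>0}$ with $\omega\cdot v\le 0$ for all $v\in C$.

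Applied to $v=L(Y)$, this yields $\sum_i\omega_i\phi_i(y_i,x_i)\le 0$, which by skew-symmetry is the desired inequality.

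The main obstacle is obtaining \emph{strictly} positive weights. Ordinary separation of $K$ from the orthant only gives $\omega\ge 0$. Here polyhedrality is essential, since the result fails for general convex sets. To make this rigorous I would state Stiemke's alternative in the following form: for a matrix $A$ whose columns generate $C$, either some $v\in C$ satisfies $v\ge0$ and $v\ne0$, or there exists $\omega>0$ with $\omega^\top A\le 0$. I would then check that the generators are the vectors $L(P)$ for the vertices $P$ of $\mathcal{M}$, which are the permutation matrices by Birkhoff--von Neumann.
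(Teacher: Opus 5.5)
Your proof is correct. The paper does not prove this statement itself; it imports it from \citet{ABB14b}. Their theorem is formulated for lotteries over a finite set of outcomes, and more generally for sets of plausible SSB functions, generalizing \citet{Carr10a}. Applying it here requires three translation steps:
\begin{itemize}
\item take the outcomes to be the deterministic assignments;
\item give each agent the SSB function on lotteries over assignments obtained by evaluating $\phi_i$ on her row marginals, which is bilinear because marginals depend linearly on the lottery;
\item invoke Birkhoff--von Neumann to see that efficiency of $X$ in $\mathcal{M}$ coincides with efficiency of any lottery inducing $X$.
\end{itemize}

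Your argument skips this translation by working directly with the linear image $K=L(\mathcal{M})$ of the Birkhoff polytope. It also makes explicit the one ingredient the citation hides. Polyhedrality makes the generated cone $C$ closed, and closedness is exactly what upgrades the weights from $\omega\ge 0$ (which plain separation gives) to $\omega\in\R^n_{>0}$. The citation buys generality and brevity; your route buys a self-contained proof tailored to $\mathcal{M}$.

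When you write it up, add two details:
\begin{itemize}
\item State that every element of $C$ is a nonnegative multiple of a single element of $K$. This uses convexity of $K$ and $0\in K$.
\item Note how the nontrivial half of your alternative follows. If $C^\circ$ and the open orthant $\R^n_{>0}$ are disjoint, separating them gives $v\neq 0$ with $v\ge 0$ and $v\in C^{\circ\circ}=C$. The last equality is precisely where closedness of $C$ enters.
\end{itemize}
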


The main theorem of this section is the following.
\begin{restatable}{theorem}{AlphaEnvyFree} \label{theorem:alpha-envy-free}
	Let $(\phi_1, \dots, \phi_n)$ be a profile of SSB utility functions. For any $\alpha > 0$, there exists a random assignment that is efficient and $\alpha$-envy-free.
\end{restatable}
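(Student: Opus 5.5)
We propose a fixed-point argument over welfare weights, in the spirit of \citet{CoTa21a}, combined with \Cref{theorem:efficiency-welfare}. Fix $\alpha>0$ and a small $\delta>0$ to be chosen in terms of $\alpha$ and $\max_i\norm{\phi_i}_\infty$. Consider the truncated simplex of weights $W_\delta \coloneqq \{\omega\in\R^n\colon \sum_i\omega_i=1,\ \omega_i\ge\delta\}$. For each $\omega\in W_\delta$ define the weighted SSB function on assignments $\Phi_\omega(X,Y)\coloneqq\sum_i\omega_i\phi_i(x_i,y_i)$. This function is skew-symmetric and bilinear on $\mathcal{M}\times\mathcal{M}$. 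By the Minimax Theorem applied to the symmetric zero-sum game with payoff $\Phi_\omega$ on the polytope $\mathcal{M}$, the set
\[
S(\omega)\coloneqq\{X\in\mathcal{M}\colon \Phi_\omega(X,Y)\ge 0 \text{ for all } Y\in\mathcal{M}\}
\]
is non-empty, compact and convex, since it is an intersection of closed half-spaces. By \Cref{theorem:efficiency-welfare}, every $X\in S(\omega)$ is efficient because $\omega>0$. Moreover, $S(\cdot)$ is upper hemicontinuous: its graph is closed because $\Phi_\omega(X,Y)$ is jointly continuous.

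Next we define a weight-update correspondence that raises the weight of envious agents. Given $X$, let $e_i(X)\coloneqq\max_j\phi_i(x_j,x_i)\ge 0$ denote the envy of agent $i$, which is nonnegative since $j=i$ gives $0$. Set
\[
g(\omega,X)_i\coloneqq\frac{\omega_i+\delta' e_i(X)}{1+\delta'\sum_k e_k(X)},
\]
or a projection onto $W_\delta$ of $\omega+\text{(envy)}$. This map is continuous and stays in $W_\delta$ after a suitable affine adjustment. Apply Kakutani's theorem to $(\omega,X)\mapsto \{g(\omega,X)\}\times S(\omega)$ on $W_\delta\times\mathcal{M}$ to obtain $(\omega^\ast,X^\ast)$ with $X^\ast\in S(\omega^\ast)$ and $\omega^\ast=g(\omega^\ast,X^\ast)$.

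At the fixed point we exploit a permutation test. Suppose agent $i$ envies $j$ by more than $\alpha$. Consider $Y$ obtained from $X^\ast$ by swapping rows $i$ and $j$. Then $Y\in\mathcal{M}$, and
\[
0\le\Phi_{\omega^\ast}(X^\ast,Y)=\omega^\ast_i\phi_i(x_i,x_j)+\omega^\ast_j\phi_j(x_j,x_i).
\]
Hence $\omega^\ast_j\phi_j(x_j,x_i)\ge\omega^\ast_i\phi_i(x_j,x_i)>\omega^\ast_i\alpha$. So the envied agent $j$ strictly prefers her own bundle over $i$'s, and this forces $\omega^\ast_j>\omega^\ast_i\alpha/\norm{\phi_j}_\infty$. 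Intuitively, envy can persist only when the envied agent carries large weight relative to the envier.

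The fixed-point condition of the update (in the projection version, $\omega^\ast$ equals its projection after adding envies) says that envious agents already sit at the upper part of the weight distribution. Equivalently, every agent with positive envy has maximal (equalized) weight, or the envy is absorbed by the truncation at $\delta$. Combining this with the swap inequality yields a contradiction once $\delta$ is small relative to $\alpha$: if $i$ has top weight, then $\omega_j\le\omega_i$, so $\phi_j(x_j,x_i)\ge\phi_i(x_j,x_i)>\alpha$. We would then need a cycle or chain argument along the envy graph, using the swaps, to show that all envy is at most $\alpha$.

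\textbf{Main obstacle.} Designing $g$ so that the fixed-point equations genuinely pin down envy is the hard step. The weights must stay bounded away from $0$ so that \Cref{theorem:efficiency-welfare} applies, and this truncation is exactly what permits the residual $\alpha$ rather than zero envy; this matches \Cref{theorem:counter-example}. The first thing to try is the projection-based map $\omega\mapsto\pi_{W_\delta}(\omega+e(X))$. Its fixed points satisfy KKT conditions: there is a threshold $\tau$ such that $e_i=\tau$ for agents with $\omega_i>\delta$, and $e_i\le\tau$ for agents with $\omega_i=\delta$. Summing the swap inequalities over a highest-weight envious agent would then show $\tau\le\alpha$ for $\delta$ small enough.
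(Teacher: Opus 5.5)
Your framework matches the paper's. Truncated weights, the set $S(\omega)$ (the paper's $F(\omega)$) of maximin assignments for $\Phi_\omega$, efficiency via \Cref{theorem:efficiency-welfare}, a projection-based update and Kakutani are all the same. So is the row-swap inequality, which gives $\omega_j>\rho\,\omega_i$ whenever $\phi_i(x_j,x_i)>\alpha$, where $\rho=\min\{\nicefrac12,\alpha/\sigma\}$ and $\sigma$ is the largest entry of all $\phi_i$.

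Using raw envy $e_i$ instead of the paper's thresholded $\nu_i=\max_j\max\{\phi_i(x_j,x_i)-\alpha,0\}$ is harmless. Your KKT description of a fixed point ($e_i=\tau$ if $\omega_i>\delta$, $e_i\le\tau$ if $\omega_i=\delta$) is correct. Note that it equalizes the \emph{envies} of agents above the floor, not their weights, so your earlier sentence about envious agents having ``maximal (equalized) weight'' misreads it. Your first, normalized update also does not map into $W_\delta$: an agent at the floor with zero envy is pushed below $\delta$.

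The genuine gap is the step you flag yourself: you never show that a fixed point is $\alpha$-envy-free. Summing swap inequalities at a highest-weight envious agent does not do it. For the top agent $i$ and the agent $j$ she envies, the swap only yields $\omega_j>\rho\,\omega_i$ and $\phi_j(x_j,x_i)>\alpha$, and these are perfectly consistent.

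Two ingredients are missing.

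\emph{Acyclicity.} For $X\in S(\omega)$, the $\alpha$-envy graph (edge $i\to j$ iff $\phi_i(x_j,x_i)>\alpha$) is acyclic. This needs a cyclic permutation of rows along a putative cycle, which makes $\Phi_\omega(X,Y)<0$. Pairwise swaps do not rule out cycles of length at least three.

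\emph{Choice of $\delta$.} $\delta$ must depend on $n$, not only on $\alpha$ and $\max_i\norm{\phi_i}_\infty$, because weight can drop by a factor $\rho$ at each step of an envy chain. With, e.g., $\delta\le\rho^{n-1}/n$, your KKT version closes as follows. If some envy exceeds $\alpha$, then $\tau>\alpha$, so every agent with weight above $\delta$ has an outgoing edge. Walk along edges starting from an agent of weight at least $1/n>\delta$. The swap bound gives the $k$-th agent on the walk weight above $\rho^k/n\ge\delta$ for $1\le k\le n-1$. So the walk never stalls at the floor, and within $n$ steps it revisits an agent, creating a cycle and a contradiction.

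The paper reaches the same contradiction by pigeonhole with $\varepsilon=\rho^{n+1}/n$. Some interval $[\rho^{k+1}/n,\rho^k/n)$ contains no weight, which splits $N$ into a heavy set $A$ and a light set $D$ with no $\alpha$-envy edge from $A$ to $D$. Acyclicity then yields a global sink $i^\ast\in A$ with $\omega_{i^\ast}>\varepsilon$. The projection has the form $g_i=\max\{\varepsilon,\mu_i-\tau\}$ with $\tau>0$, so it strictly lowers $\omega_{i^\ast}$, contradicting the fixed point.
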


Let $\alpha > 0$ be arbitrary.
The high-level proof idea is to construct a Kakutani mapping $\Gamma$ which acts on a pair $(X, \omega)$ of a random assignment $X$ and a weight vector $\omega \in \R^n_{>0}$. At a fixed point, we show that the random assignment $X$ is efficient and $\alpha$-envy-free. 

Fix $\varepsilon>0$\footnote{We later specify $\varepsilon$ such that $\Omega_\varepsilon$ is non-empty.} and define
\begin{equation*}
	\Omega_\varepsilon \coloneqq \left\{ \omega \in \mathbb{R}^n \colon \omega_i \geq \varepsilon \text{ for all } i \text{ and } \sum_{i \in N} \omega_i = 1 \right\}.
\end{equation*}
The set $\Omega_\varepsilon$ is compact and convex. Each vector $\omega\in \Omega_\varepsilon$ assigns every agent a strictly positive weight, uniformly bounded away from zero. The lower bound $\varepsilon$ is introduced for technical reasons:
the set of all strictly positive normalized weights is relatively open in the unit simplex and therefore not compact. Since Kakutani's fixed-point theorem requires a non-empty compact convex domain, we work on $\Omega_\varepsilon$ instead. The value of $\varepsilon$ will later be chosen sufficiently small as a function of the desired approximation parameter~$\alpha$.

For a weight vector $\omega \in \Omega_\varepsilon$, define 
\begin{equation*}
	F(\omega) \coloneqq \left\{ X \in \mathcal{M} \colon \sum_{i \in N} \omega_i \phi_i(x_i, y_i) \geq 0 \text{ for all } Y \in \mathcal{M} \right\}.
\end{equation*}
Since every $\omega \in \Omega_\varepsilon$ is strictly positive, \Cref{theorem:efficiency-welfare} implies that every $X \in F(\omega)$ is efficient.
So, for a given $\omega$, the correspondence $F$ maps to a set of efficient random assignments.

On the other hand, for a given random assignment $X$, we will update the welfare weights $\omega$ so that weights of envious agents in $X$ are increased at the expense of non-envious agents.
This weight update function can be described in two steps.
First, we increase the weight $\omega_i$ associated with each agent $i$ in proportion to the maximum envy (above the $\alpha$ threshold) she suffers against any other agent $j$ in the assignment $X$:
\begin{align*}
	\nu_i(X) &\coloneqq \max_{j \in N} \max\{\phi_i(x_j, x_i) -\alpha, 0\}, \\
	\mu(X, \omega) &\coloneqq \omega + \nu(X).
\end{align*}
Note that the updated weight vector $\mu' = \mu(X, \omega)$ may leave $\Omega_\varepsilon$ by violating the constraint that weights sum to 1.
Therefore, the second step is to project $\mu'$ back into $\Omega_\varepsilon$ via the Euclidean projection:
\begin{align*}
	g(X, \omega) 
	&\coloneqq \operatorname{proj}_{\Omega_\varepsilon} \mu(X, \omega) \\
	&= \argmin_{\omega' \in \Omega_\varepsilon} \norm{\omega' - \mu(X, \omega)}.
\end{align*}

Finally, we combine the two mappings $F$ and $g$ in the correspondence $\Gamma: \mathcal{M} \times \Omega_\varepsilon \rightrightarrows \mathcal{M} \times \Omega_\varepsilon$ with
\begin{equation*}
	\Gamma(X, \omega) \coloneqq F(\omega) \times \{ g(X, \omega) \}.
\end{equation*}

\begin{restatable}{proposition}{FixedPointExistence} \label{proposition:fixed-point-existence}
	The correspondence $\Gamma$ has a fixed point, i.e., there exists $(X^\ast, \omega^\ast) \in \mathcal{M} \times \Omega_\varepsilon$ such that $(X^\ast, \omega^\ast) \in \Gamma(X^\ast, \omega^\ast)$. 
\end{restatable}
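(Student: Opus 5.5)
We will apply Kakutani's fixed-point theorem to $\Gamma$ on the domain $\mathcal{M}\times\Omega_\varepsilon$. This domain is a product of two non-empty, compact, convex subsets of Euclidean space: $\mathcal{M}$ is the Birkhoff polytope, and $\Omega_\varepsilon$ is non-empty as soon as $\varepsilon\le 1/n$. It therefore remains to show three things: $\Gamma$ has non-empty, convex, compact values, and its graph is closed.

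\textbf{Values of $F$.} For fixed $\omega$, $F(\omega)$ is the intersection over $Y\in\mathcal{M}$ of the sets $\{X\in\mathcal{M}\colon \sum_i\omega_i\phi_i(x_i,y_i)\ge 0\}$. Each such set is a closed half-space intersected with $\mathcal{M}$, because the map $X\mapsto\sum_i\omega_i\phi_i(x_i,y_i)$ is linear by bilinearity. Hence $F(\omega)$ is closed and convex, and compact as a subset of $\mathcal{M}$.

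Non-emptiness is the main step. Define $\Phi_\omega(X,Y)\coloneqq\sum_i\omega_i\phi_i(x_i,y_i)$. This function is bilinear on $\mathcal{M}\times\mathcal{M}$, and it is skew-symmetric since each $\phi_i$ is. By von Neumann's minimax theorem on the compact convex set $\mathcal{M}$,
\[
\max_{X\in\mathcal{M}}\min_{Y\in\mathcal{M}}\Phi_\omega(X,Y)=\min_{Y\in\mathcal{M}}\max_{X\in\mathcal{M}}\Phi_\omega(X,Y).
\]
Skew-symmetry forces this value to be $0$: it equals $-\max_{Y}\min_{X}\Phi_\omega(Y,X)$, which is minus itself. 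Therefore a maximin strategy $X$ satisfies $\Phi_\omega(X,Y)\ge 0$ for all $Y$, so $X\in F(\omega)$. This is the same argument as the existence of maximal lotteries, or the \citet{Fish84c} observation, applied to the SSB relation on $\mathcal{M}$ induced by $\Phi_\omega$.

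\textbf{Values of $g$.} The map $g$ is single-valued, because the Euclidean projection onto the non-empty, closed, convex set $\Omega_\varepsilon$ is unique. Its values are therefore trivially non-empty, compact, and convex.

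\textbf{Closed graph of $\Gamma$.} Because $\Gamma$ is a product, it suffices to treat each component separately.
\begin{itemize}
\item For $F$: take $\omega^k\to\omega$ and $X^k\to X$ with $X^k\in F(\omega^k)$. For each fixed $Y$, the expression $\sum_i\omega^k_i\phi_i(x^k_i,y_i)$ is jointly continuous in $(\omega^k,X^k)$. Passing to the limit preserves the inequality $\ge 0$, so $X\in F(\omega)$. Thus $F$ has closed graph, and since its range lies in the compact set $\mathcal{M}$, it is upper hemicontinuous.
\item For $g$: the function $\nu$ is continuous in $X$, being built from finitely many maxima of continuous bilinear expressions. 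Hence $\mu$ is continuous, and the projection onto a closed convex set is $1$-Lipschitz, so $g$ is continuous.
\end{itemize}

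Kakutani's theorem then yields the desired fixed point $(X^\ast,\omega^\ast)$. The only nontrivial ingredient is the minimax argument establishing non-emptiness of $F(\omega)$; everything else is routine continuity bookkeeping.
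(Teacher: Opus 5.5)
Your proposal is correct and follows essentially the same route as the paper. Both apply Kakutani on $\mathcal{M}\times\Omega_\varepsilon$, obtain non-emptiness and convexity of $F(\omega)$ from the minimax theorem for the symmetric zero-sum game with payoff $\Phi_\omega$, show $F$ has a closed graph by passing to the limit in the defining inequality, and get continuity of $g$ from continuity of $\nu$ and of the projection. You additionally make explicit, via skew-symmetry, why the game value is $0$, which the paper leaves implicit.
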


The proof of \Cref{proposition:fixed-point-existence} uses Kakutani's fixed-point theorem, and is given in \Cref{appendix:alpha-envy-free}.
We further claim that the random assignment $X^\ast$ at the guaranteed fixed point satisfies both efficiency and $\alpha$-envy-freeness (\Cref{proposition:fixed-point-efficiency,proposition:fixed-point-envy-freeness}), which, together with \Cref{proposition:fixed-point-existence}, imply \Cref{theorem:alpha-envy-free}.

\begin{proposition} \label{proposition:fixed-point-efficiency}
	If $(X^\ast, \omega^\ast)$ is a fixed point of $\Gamma$, then the random assignment $X^\ast$ satisfies efficiency.
\end{proposition}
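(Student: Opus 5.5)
The plan is to read efficiency directly off the first component of the fixed-point condition and then invoke the efficiency-welfare theorem (\Cref{theorem:efficiency-welfare}). Since $(X^\ast,\omega^\ast)\in\Gamma(X^\ast,\omega^\ast) = F(\omega^\ast)\times\{g(X^\ast,\omega^\ast)\}$, we have in particular $X^\ast\in F(\omega^\ast)$ and $\omega^\ast\in\Omega_\varepsilon$. The second component of the fixed-point condition is not needed for this statement; it will only matter for $\alpha$-envy-freeness.

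Unpacking the definition of $F$, the membership $X^\ast\in F(\omega^\ast)$ means
\[
\sum_{i\in N}\omega^\ast_i\,\phi_i(x^\ast_i,y_i)\ge 0 \qquad\text{for all } Y\in\mathcal{M}.
\]
Since $\omega^\ast\in\Omega_\varepsilon$, every coordinate satisfies $\omega^\ast_i\ge\varepsilon>0$, so $\omega^\ast\in\R^n_{>0}$. This is exactly the sufficient condition of \Cref{theorem:efficiency-welfare}, which then yields efficiency of $X^\ast$.

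If one prefers not to rely on the ``if'' direction of \Cref{theorem:efficiency-welfare} as a black box, the argument is a one-line contradiction. Suppose $Y$ Pareto dominates $X^\ast$. Then $\phi_i(y_i,x^\ast_i)\ge0$ for all $i$, with strict inequality for some $i$. By skew-symmetry, $\phi_i(x^\ast_i,y_i)\le0$ for all $i$, with strict inequality for some $i$. Weighting by the strictly positive $\omega^\ast_i$ gives $\sum_i\omega^\ast_i\phi_i(x^\ast_i,y_i)<0$, which contradicts $X^\ast\in F(\omega^\ast)$.

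There is no real obstacle here. The only point needing care is that strict positivity of the weights is guaranteed by the choice of $\Omega_\varepsilon$; this is why the weights were bounded below by $\varepsilon$ rather than allowed to range over the full simplex. The nontrivial work of this section lies in \Cref{proposition:fixed-point-existence} and in the envy bound.
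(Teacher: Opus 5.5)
Your proposal is correct and follows the paper's argument: the paper likewise reads off $X^\ast \in F(\omega^\ast)$ from the fixed-point condition and applies \Cref{theorem:efficiency-welfare} using the strictly positive weights $\omega^\ast \in \Omega_\varepsilon$. Your additional direct contradiction argument via skew-symmetry is also valid; it simply writes out the easy direction of that theorem.
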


This follows directly from \Cref{theorem:efficiency-welfare}, since $X^\ast \in F(\omega^\ast)$.

\begin{restatable}{proposition}{FixedPointEnvyFreeness} \label{proposition:fixed-point-envy-freeness}
	If $(X^\ast, \omega^\ast)$ is a fixed point of $\Gamma$, then the random assignment $X^\ast$ satisfies $\alpha$-envy-freeness.
\end{restatable}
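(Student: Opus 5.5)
The plan is to read off from the fixed-point equation $\omega^\ast = g(X^\ast,\omega^\ast)$ a structural statement about which agents are envious, and then to derive a contradiction from the welfare-optimality $X^\ast \in F(\omega^\ast)$ using a random assignment obtained by permuting the rows of $X^\ast$. Throughout, write $\nu = \nu(X^\ast)\ge 0$ and let $\Phi \coloneqq \max_{i}\max_{a,b\in O}|\phi_i(a,b)|$, so that $|\phi_i(x,y)|\le \Phi$ for all lotteries $x,y$ by bilinearity. I will fix $\varepsilon$ with $\varepsilon<1/n$, so that $\Omega_\varepsilon\neq\emptyset$ and some agent has weight strictly above $\varepsilon$, and with $\varepsilon\Phi<\alpha/n$.

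\emph{Step 1 (KKT for the projection).} The set $\Omega_\varepsilon$ is a polytope, and the projection onto it is a strictly convex quadratic program. The KKT conditions therefore give a multiplier $\lambda\in\R$ such that $\omega^\ast_i=\max\{\varepsilon,\ \omega^\ast_i+\nu_i-\lambda\}$ for every $i$. Hence $\nu_i=\lambda$ for every agent in $H\coloneqq\{i\colon\omega^\ast_i>\varepsilon\}$, and $\nu_i\le\lambda$ for the remaining agents. Since $H\neq\emptyset$ and $\nu\ge0$, we get $\lambda\ge 0$.
\begin{itemize}
\item If $\lambda=0$, then $\nu\equiv0$, which is exactly $\alpha$-envy-freeness.
\item If $\lambda>0$, then every agent $i\in H$ satisfies $\nu_i>0$. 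So there is some $j$ with $\phi_i(x^\ast_j,x^\ast_i)>\alpha$, equivalently $\phi_i(x^\ast_i,x^\ast_j)<-\alpha$.
\end{itemize}
It remains to rule out the case $\lambda>0$.

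\emph{Step 2 (envy path).} Start at an agent $i_1$ of maximal weight; then $\omega^\ast_{i_1}\ge 1/n>\varepsilon$, so $i_1\in H$. As long as the current agent lies in $H$, follow an envy edge $i_k\to i_{k+1}$ with $\phi_{i_k}(x^\ast_{i_k},x^\ast_{i_{k+1}})<-\alpha$. The walk stops in one of two ways:
\begin{itemize}
\item it revisits an agent, which closes a cycle consisting entirely of $H$-agents; or
\item it reaches an agent $\ell\notin H$, giving a simple path $i_1\to\dots\to i_m\to\ell$ from $i_1$ to $\ell$.
\end{itemize}

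\emph{Step 3 (permutation contradiction).} In each case, define $\sigma$ as the corresponding cyclic permutation. In the path case, also close the cycle by letting $\ell$ take the lottery of $i_1$. All other agents are left fixed. Let $Y$ be the random assignment with $y_i=x^\ast_{\sigma(i)}$; it is bistochastic because it is a row permutation of $X^\ast$. Since $\phi_i(x,x)=0$, fixed agents contribute nothing. We then bound $\sum_i\omega^\ast_i\phi_i(x^\ast_i,y_i)$ in the two cases:
\begin{itemize}
\item In the cycle case every term on the cycle is negative, so the sum is $<0$.
\item In the path case the sum is at most $-\alpha\,\omega^\ast_{i_1}+\omega^\ast_\ell\phi_\ell(x^\ast_\ell,x^\ast_{i_1})\le -\alpha/n+\varepsilon\Phi<0$. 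Here we use that every path term is negative and that $\omega^\ast_\ell=\varepsilon$.
\end{itemize}
Either way the sum is negative, which contradicts $X^\ast\in F(\omega^\ast)$.

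The main obstacle is the path case, where envy leads from heavy agents to a light agent who may not envy anyone. This is precisely where $\varepsilon$ must be tied to $\alpha$ through $\varepsilon<\alpha/(n\Phi)$. It is also why the path should be started at the maximum-weight agent, since that guarantees a negative contribution of size at least $\alpha/n$. A secondary point to check carefully is the KKT characterization of the projection onto $\Omega_\varepsilon$. Equivalently, one can use the variational inequality $\nu\cdot(\omega'-\omega^\ast)\le0$ for all $\omega'\in\Omega_\varepsilon$, which directly shows that $\nu$ is maximal and constant on $H$.
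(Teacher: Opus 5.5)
Your proof is correct, and it takes a genuinely different route from the paper's.

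The paper fixes $\varepsilon=\rho^{n+1}/n$ with $\rho=\min\{1/2,\alpha/\sigma\}$ and proves three facts for every $X\in F(\omega)$:
\begin{enumerate}
\item a pairwise-swap bound: $\phi_i(x_j,x_i)>\alpha$ forces $\omega_j>\rho\,\omega_i$;
\item acyclicity of the $\alpha$-envy graph (this is your cycle case);
\item via a pigeonhole argument, an empty weight interval $[\gamma,\gamma/\rho)$ that no envy edge crosses downward, which yields a sink $i^\ast$ with $\omega_{i^\ast}>\varepsilon$.
\end{enumerate}
The threshold form $g_i=\max\{\varepsilon,\mu_i-\tau\}$ of the projection, with $\tau>0$, then gives $g_{i^\ast}<\omega_{i^\ast}$.

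You use the projection structure in the opposite direction. At a fixed point, $\nu$ is constant and maximal on $H$, so any excess envy makes \emph{every} heavy agent $\alpha$-envious. You then replace the weight-ratio lemma and the gap argument by a single path-closing permutation. The intermediate agents on the path only gain, so it suffices to compare the heaviest agent's gain, which exceeds $\alpha/n$, with the light endpoint's loss, which is at most $\varepsilon\Phi$. If you let the walk stop at a heavy agent with no outgoing envy edge, your Steps 2--3 in fact reprove the paper's sink lemma under your weaker hypothesis on $\varepsilon$.

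Your route is shorter and needs only $\varepsilon<\min\{1/n,\alpha/(n\Phi)\}$, rather than an $\varepsilon$ exponentially small in $n$. The paper's route isolates the weight-ratio bound as a standalone quantitative fact about welfare-maximizing assignments. Finally, skew-symmetry gives $\sigma=\Phi$, so the paper's choice satisfies $\varepsilon\Phi\le\rho^{n}\alpha/n<\alpha/n$ and $\varepsilon<1/n$. Hence your argument also proves the proposition exactly as stated, with the paper's $\varepsilon$.
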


The proof of \Cref{proposition:fixed-point-envy-freeness} is provided in \Cref{appendix:alpha-envy-free}.

One might hope that the existence of random assignments that satisfy arbitrarily good approximations of envy-freeness while maintaining efficiency would imply the existence of an exactly envy-free assignment that is also efficient.
Indeed, the fact that the space of random assignments, namely the Birkhoff polytope, is compact, implies that any sequence of efficient and $\alpha$-envy-free assignments with $\alpha \to 0$ has a convergent subsequence whose limit assignment is exactly envy-free.
Unfortunately, it is not guaranteed that the limit assignment remains efficient; nevertheless, the limit assignment is guaranteed to be weakly efficient.
These observations are due to the following result, whose proof is deferred to \Cref{appendix:alpha-envy-free}.

\begin{restatable}{proposition}{SSBEfficientSetOpen} \label{proposition:eff-not-closed}
    Given an SSB preference profile,
	\begin{itemize}
		\item[(i)] the set of efficient random assignments need not be closed.
		\item[(ii)] the set of weakly efficient assignments is closed.
	\end{itemize}
\end{restatable}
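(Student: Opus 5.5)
For part (ii) I will use continuity of SSB preferences, i.e.\ the continuity of the bilinear forms $\phi_i$. The strategy is to show that the complement of the set of weakly efficient assignments is relatively open in $\mathcal{M}$. Suppose $X$ is not weakly efficient, so some $Y\in\mathcal{M}$ satisfies $\phi_i(y_i,x_i)>0$ for all $i\in N$. The map $X'\mapsto \min_{i\in N}\phi_i(y_i,x'_i)$ is continuous, being a minimum of finitely many linear functions. Hence it stays positive on a neighborhood of $X$ in $\mathcal{M}$, and every $X'$ in that neighborhood is also strongly Pareto dominated by the same $Y$. Therefore the set of weakly efficient assignments is closed. (This argument only needs open graphs $G(\succ_i)$, so it even holds for continuous preferences in general.)

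For part (i) I will give an explicit counterexample, and the natural candidate is the profile from \Cref{theorem:counter-example}. The intended sequence consists of efficient, $\alpha$-envy-free assignments $X^{(k)}$, provided by \Cref{theorem:alpha-envy-free} with $\alpha_k\to 0$. By compactness of $\mathcal{M}$, a subsequence converges to some $X$. This limit is envy-free, since $\phi_i(x_j,x_i)\le 0$ is a closed condition. By \Cref{theorem:counter-example}, $X$ cannot be efficient, so the set of efficient assignments is not closed.

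This argument uses \Cref{theorem:alpha-envy-free}, which is proved in this section, so it is not circular. However, if a self-contained example is preferred, I would instead write down a concrete sequence. A simpler profile should suffice: agent $3$ as in \Cref{theorem:counter-example}, who is indifferent between $(0,\nicefrac12,\nicefrac12)$ and everything. I would take assignments where agent $3$ receives $(0,\nicefrac12+t,\nicefrac12-t)$ for $t\downarrow 0$. For $t>0$ such assignments can be efficient, because agent $3$ strictly cares about changes. At $t=0$, agent $3$ is indifferent to every lottery, so the other agents can improve by swapping, for example towards the assignment $Y$ in the Example above.

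The main obstacle is verifying efficiency of the approximating assignments directly. For this I would use \Cref{theorem:efficiency-welfare}: for each $t$, I would exhibit positive weights $\omega$ such that $\sum_i\omega_i\phi_i(x_i,y_i)\ge 0$ for all $Y\in\mathcal{M}$. Since this expression is linear in $Y$, it suffices to check it at the vertices of the Birkhoff polytope, i.e.\ the $3!$ permutation matrices, which reduces to a finite computation. If finding such weights turns out to be cumbersome, I would fall back on the compactness argument via \Cref{theorem:alpha-envy-free} and \Cref{theorem:counter-example} described above.
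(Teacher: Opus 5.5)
Your part (ii) is exactly the paper's argument. For part (i), your compactness argument is correct and takes a genuinely different route. You take efficient $\alpha_k$-envy-free assignments from \Cref{theorem:alpha-envy-free} for the profile of \Cref{theorem:counter-example}. Any limit point satisfies $\phi_i(x_j,x_i)\le 0$ by continuity, so it is envy-free and hence inefficient by \Cref{theorem:counter-example}. There is no circularity, since the proof of \Cref{theorem:alpha-envy-free} never uses this proposition.

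The paper instead writes down an explicit four-agent \pc profile, weights $\omega^\delta$, and assignments $X^\delta\in F(\omega^\delta)$, checked against the $24$ deterministic assignments. These converge to an assignment that an explicit $Y$ Pareto dominates. Your route is shorter and computation-free. It also shows conceptually that every profile witnessing \Cref{theorem:counter-example} automatically witnesses (i). On the other hand, it is non-constructive and relies on the full fixed-point machinery. The paper's example gives something yours cannot: its profile comes from strict orders under the \pc extension, so it satisfies strict maximality and efficient envy-free assignments do exist there. It therefore shows non-closedness even where there is no impossibility result to exploit, which matters for the ordinal setting.

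You should drop the ``self-contained'' alternative, because it does not work as sketched. If agent~3 receives $x_3=(0,\beta,1-\beta)$, then $\phi_3(x_3,y)=(2\beta-1)\,y(a)$, so agent~3 is indifferent between $x_3$ and $e_c$. She does not ``strictly care about changes.'' Now give agent~3 the lottery $e_c$ and let agents~1 and~2 convert all their $c$-shares (total mass $\beta$) into $b$-shares. This yields a feasible assignment with $\phi_1(y_1,x_1)=x_1(c)\bigl(x_1(b)+x_1(c)\bigr)\ge 0$, and the same for agent~2. The inequality is strict for whoever held $c$, so this is a Pareto improvement whenever $\beta>0$. Hence every assignment in your proposed sequence with $x_3=(0,\nicefrac12+t,\nicefrac12-t)$ is inefficient, and no choice of welfare weights can certify otherwise. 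The compactness argument is the one to keep.
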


This fact implies the existence of random assignments satisfying weak efficiency and exact envy-freeness, by taking the limit of a convergent sequence of efficient and $\alpha$-envy-free random assignments, with $\alpha \to 0$.
We remark that this implication coincides with \Cref{corollary:weak-eff-plus-EF} in the SSB domain.

\begin{corollary}\label{corollary:welfare_weight_limit}
    For any SSB preference profile, there exists a random assignment satisfying weak efficiency and envy-freeness.
\end{corollary}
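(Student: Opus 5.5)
The plan is a compactness argument that combines \Cref{theorem:alpha-envy-free} with part (ii) of \Cref{proposition:eff-not-closed}. Fix an SSB profile $(\phi_1,\dots,\phi_n)$. For each $k\in\mathbb{N}$, \Cref{theorem:alpha-envy-free} with $\alpha=1/k$ gives a random assignment $X^k\in\mathcal{M}$ that is efficient and satisfies $\phi_i(x^k_j,x^k_i)\le 1/k$ for all $i,j\in N$. The set $\mathcal{M}$ is the Birkhoff polytope, which is compact, so we may pass to a subsequence with $X^k\to X^\ast\in\mathcal{M}$.

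For envy-freeness of $X^\ast$, note that each $\phi_i$ is bilinear on a finite-dimensional space and hence continuous. Taking $k\to\infty$ in $\phi_i(x^k_j,x^k_i)\le 1/k$ gives $\phi_i(x^\ast_j,x^\ast_i)\le 0$, which by skew-symmetry says that $x^\ast_j\succ_i x^\ast_i$ fails. Therefore $x^\ast_i\succsim_i x^\ast_j$ for all $i,j\in N$.

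For weak efficiency, every $X^k$ is efficient and therefore weakly efficient, since strong Pareto dominance implies Pareto dominance. Because the set of weakly efficient assignments is closed by \Cref{proposition:eff-not-closed}(ii), the limit $X^\ast$ is weakly efficient.

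If part (ii) is to be justified directly instead of cited, this is the only step with real content. Suppose $Y$ strongly Pareto dominates $X^\ast$, so that $\phi_i(y_i,x^\ast_i)>0$ for all $i$. By continuity, $\phi_i(y_i,x^k_i)>0$ for all $i$ once $k$ is large, so $Y$ strongly dominates $X^k$. This contradicts the efficiency of $X^k$.

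Two cautions apply. The argument should not try to show that $X^\ast$ is fully efficient, since by \Cref{proposition:eff-not-closed}(i) efficiency can be lost in the limit, and that is exactly why only weak efficiency is claimed. Also, the result could alternatively be read off \Cref{corollary:weak-eff-plus-EF}, since SSB preferences are continuous and convex, but the limit argument above is the intended self-contained route.
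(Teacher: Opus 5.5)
Your proposal is correct and matches the paper's argument. You take efficient, $1/k$-envy-free assignments from \Cref{theorem:alpha-envy-free}, extract a convergent subsequence in the compact Birkhoff polytope, get exact envy-freeness by continuity of each $\phi_i$, and get weak efficiency from \Cref{proposition:eff-not-closed}(ii); the paper also notes, as you do, that the result coincides with \Cref{corollary:weak-eff-plus-EF}.
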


\section{Ordinal Random Assignment} \label{section:ordinal-ra}

The results from the previous sections have interesting consequences for the widely studied problem of \emph{ordinal} random assignment. 
An ordinal random assignment rule returns a random assignment for each profile of complete and transitive preference relations over objects $O$ (rather than lotteries over objects $\Delta$). Random assignment rules are typically evaluated in terms of efficiency, envy-freeness, and strategyproofness by systematically extending the preferences over objects to preferences over lotteries. The most widely studied way of extending preferences to lotteries is based on \emph{stochastic dominance (\sd)}, where one lottery \emph{stochastically dominates} another lottery if the former yields at least as much expected utility as the latter for every vNM function that is consistent with the ordinal preferences \citep[see, e.g.,][]{PoSc86a,BoMo01a}. The \sd relation is generally incomplete, i.e., there are incomparable pairs of lotteries. This leads to two notions of envy-freeness---strong and weak \sd-envy-freeness---depending on how incomparabilities are treated.

\citet{ABB14b} initiated the study of extending preferences via the \emph{pairwise comparison (\pc)} relation, a complete refinement of the \sd relation admitting a natural interpretation. For an asymmetric preference relation $\succ$ over $O$ and two lotteries $x, y \in \Delta$, $x$ is \emph{\pc-preferred} to $y$, written $x \succsim^\pc y$, if and only if
\[
\sum_{o, \hat o\in O\colon o \succ \hat o} x(o) \cdot y(\hat o) \geq \sum_{o, \hat o\in O\colon o \succ \hat o} y(o) \cdot x(\hat o). \tag{\pc}
\]
Lottery $x$ is preferred to lottery $y$ if it is as least as likely that $x$ yields a better object than $y$ than \emph{vice versa}. Alternatively, the terms in the inequality above can be associated with \emph{ex ante} regret (the probability of \emph{ex post} regret): a decision maker would less frequently regret choosing $x$ than $y$ \emph{ex post}.
\pc preferences have been considered in decision theory \citep{Blyt72a,Pack82a,Blav06a}. \citet{Pack82a} calls them the \emph{rule of expected dominance} and \citet{Blav06a} refers to them as a \emph{preference for the most probable winner}. 
\citet{ABB14b,ABBB15a}, \citet{BBH15c}, \citet{BrBr17a}, and \citet{BLS22c} have studied efficiency, strategyproofness, and related properties with respect to \pc preferences.
When there are at least four objects, \pc preferences over lotteries can be cyclic even when preferences over objects are transitive. This phenomenon is known as the \emph{Steinhaus--Trybula paradox} \citep[see, e.g.,][]{StTr59a,Blyt72a,Pack82a,RuSe12a,BuPo18a}. 

While \pc preferences cannot be represented by a vNM utility function, they constitute a special case of SSB preferences. In fact, any asymmetric preference relation $\succ$ over $O$ can be conveniently represented by an SSB utility function $\phi$ whose entries are restricted to $\{-1,0,+1\}$ such that
\[ \phi(o, \hat o) = \begin{cases*}
	+1, & if $o \succ \hat o$ \\
	\phantom{+}0, & if $o \sim \hat o$ \\
	-1, & if $o \prec \hat o$
\end{cases*}. 
\tag{\pc extension}
\]
This representation can be viewed as the canonical SSB representation of ordinal preferences and yields preferences identical to the \pc relation. It follows from a more general observation by \citet[][Thm.~8]{Fish84d} that the \pc relation is a complete refinement of the \sd relation when preferences over objects are transitive \citep[see also][]{ABB14b}.
This implies that the efficiency notions based on the \pc relation, which we call \emph{\pc-efficiency}, are stronger than their counterparts based on the \sd relation, which we call \emph{\sd-efficiency}.\footnote{\sd-efficiency is sometimes also referred to as ordinal efficiency.}
On the other hand, \pc-envy-freeness, the envy-freeness notion obtained from the \pc relation, is weaker than strong \sd-envy-freeness (though it is stronger than weak \sd-envy-freeness). The logical relationships between these properties are visualized in \Cref{fig:axiom-hierarchy}.

\begin{figure}[htb]
\centering
\begin{tikzpicture}[
    >=Latex,
    implication/.style={->},%
    axiom/.style={font=\large, align=center}
]
    \node[axiom] (sd_ef) {strong \sd-envy-freeness};
    \node[axiom, below=of sd_ef] (pc_ef) {\pc-envy-freeness};
    \node[axiom, below=of pc_ef] (wsd_ef) {weak \sd-envy-freeness};

    \draw[implication] (sd_ef) -- (pc_ef);
    \draw[implication] (pc_ef) -- (wsd_ef);

    \node[axiom, left=3 of sd_ef] (pc_eff) {strong \pc-efficiency};
    \node[axiom, below= of pc_eff, xshift=-6em] (sd_eff) {strong \sd-efficiency};
    \node[axiom, below= of pc_eff, xshift=+6em] (wpc_eff) {weak \pc-efficiency};
	\node[axiom, below= of sd_eff, xshift=+6em] (wsd_eff) {weak \sd-efficiency};

    \draw[implication] (pc_eff) -- (sd_eff);
    \draw[implication] (pc_eff) -- (wpc_eff);
	\draw[implication] (sd_eff) -- (wsd_eff);
	\draw[implication] (wpc_eff) -- (wsd_eff);

\end{tikzpicture}
\caption{Logical relationships between efficiency and envy-freeness notions.}
\label{fig:axiom-hierarchy}
\end{figure}

Three central ordinal random assignment rules are widely studied in the literature: random serial dictatorship, the probabilistic serial rule, and the popular random assignment rule.
We will introduce these rules and compare them to the extended version of the Hylland--Zeckhauser pseudo-market mechanism. The rules were originally defined for strict preference orders but later generalized to weak preference orders. Since our negative results hold even when preferences are strict, we focus on this preference domain.

For \emph{random serial dictatorship (\rsd)}, a picking order of agents is drawn uniformly at random, and the agents then successively choose their most preferred of the remaining objects in the drawn order \citep[see, e.g.,][]{AbSo98a}. \citet{BoMo01a} showed that \rsd violates \sd-efficiency but satisfies weak \sd-envy-freeness. The following example shows that \rsd violates the stronger property of \pc-envy-freeness. 

\begin{example}[\rsd violates \pc-envy-freeness]\label{ex:rsd_pceff}
	Consider the 7-agent preference profile with the following strict preferences over objects:
	\begin{align*}
		1&: a \succ b \succ c \succ d \succ e \succ f \succ g \\
		2&: b \succ a \succ c \succ d \succ e \succ f \succ g \\
		3, 4, 5, 6, 7&: a \succ c \succ d \succ e \succ f \succ g \succ b
	\end{align*}
	The random assignment returned by \rsd is
	\begin{equation*} X=
    	\begin{pmatrix}
        	\nicefrac{1}{6} & \nicefrac{5}{14} & \nicefrac{1}{21} & \nicefrac{1}{14} & \nicefrac{2}{21} & \nicefrac{5}{42} & \nicefrac{1}{7} \\
        	0 & \nicefrac{9}{14} & \nicefrac{1}{42} & \nicefrac{1}{21} & \nicefrac{1}{14} & \nicefrac{2}{21} & \nicefrac{5}{42} \\
        	\nicefrac{1}{6} & 0 & \nicefrac{13}{70} & \nicefrac{37}{210} & \nicefrac{1}{6} & \nicefrac{11}{70} & \nicefrac{31}{210} \\
			\nicefrac{1}{6} & 0 & \nicefrac{13}{70} & \nicefrac{37}{210} & \nicefrac{1}{6} & \nicefrac{11}{70} & \nicefrac{31}{210} \\
			\nicefrac{1}{6} & 0 & \nicefrac{13}{70} & \nicefrac{37}{210} & \nicefrac{1}{6} & \nicefrac{11}{70} & \nicefrac{31}{210} \\
			\nicefrac{1}{6} & 0 & \nicefrac{13}{70} & \nicefrac{37}{210} & \nicefrac{1}{6} & \nicefrac{11}{70} & \nicefrac{31}{210} \\
			\nicefrac{1}{6} & 0 & \nicefrac{13}{70} & \nicefrac{37}{210} & \nicefrac{1}{6} & \nicefrac{11}{70} & \nicefrac{31}{210}
		\end{pmatrix}.
	\end{equation*}
	It can be verified that agent 1 strictly \pc-prefers the lottery $x_2$ to $x_1$ with a margin of $\nicefrac{1}{1764}$.
\end{example}

The \emph{probabilistic serial (\ps)} rule was introduced by \citet{BoMo01a} and works by letting agents ``eat'' probability shares from their most preferred object at uniform speed subject to availability. \citeauthor{BoMo01a} showed that \ps satisfies strong \sd-envy-freeness and strong \sd-efficiency. While strong \sd-envy-freeness implies \pc-envy-freeness, the following example, adapted from \citet[][pp.~45--46]{Bran13b}, shows that \ps violates weak \pc-efficiency.\footnote{
While \ps is \sd-efficient, it is known that \ps fails to meet efficiency for certain vNM utility functions. For example, \Cref{ex:ps_pceff} can also be used to show that \ps is inefficient for equidistant vNM utility vectors $(3,2,1,0)$. It follows from a result by \citet{ChDo16a} that \ps is efficient for utility functions that are rapidly decreasing either at the top or at the bottom.
\citet{GTV26a} have shown that \ps is $(\ln n + 1)$-approximately efficient for vNM utilities. 
}

\begin{example}[\ps violates weak \pc-efficiency]\label{ex:ps_pceff}
	Consider the preference profile with $n=4$ where agents have the following strict orders over objects:
	\begin{align*}
		1, 2, 3&: a \succ b \succ c \succ d \\
		4&: b \succ a \succ c \succ d
	\end{align*}
	Now consider the two random assignments $X$ and $Y$. $X$ is the assignment returned by \ps, while $Y$ strongly \pc-dominates $X$.
	This means that every agent $i$ strictly prefers lottery $y_i$ to lottery $x_i$. 
	\begin{equation*} 
		X=
    	\begin{pmatrix}
        	\nicefrac{1}{3} & \nicefrac{1}{6} & \nicefrac{1}{4} & \nicefrac{1}{4} \\
        	\nicefrac{1}{3} & \nicefrac{1}{6} & \nicefrac{1}{4} & \nicefrac{1}{4} \\
        	\nicefrac{1}{3} & \nicefrac{1}{6} & \nicefrac{1}{4} & \nicefrac{1}{4} \\
			0 & \nicefrac{1}{2} & \nicefrac{1}{4} & \nicefrac{1}{4}
		\end{pmatrix},
		\qquad
		Y=
    	\begin{pmatrix}
        	\nicefrac{1}{3} & \nicefrac{1}{8} & \nicefrac{1}{3} & \nicefrac{5}{24} \\
        	\nicefrac{1}{3} & \nicefrac{1}{8} & \nicefrac{1}{3} & \nicefrac{5}{24} \\
        	\nicefrac{1}{3} & \nicefrac{1}{8} & \nicefrac{1}{3} & \nicefrac{5}{24} \\
			0 & \nicefrac{5}{8} & 0 & \nicefrac{3}{8}
		\end{pmatrix}.
	\end{equation*}
\end{example}
The above example is not exceptional. \citet{Mora22a} sampled thousands of preference profiles for varying $n$ and observed that the fraction of profiles at which \ps violates weak \pc-efficiency quickly approaches one as $n$ increases. \ps violates weak \pc-efficiency in more than one quarter of all profiles when $n=4$. 
The probability that a randomly sampled profile for $n=7$ has this property already exceeds 90\%. The \pc probability margin (i.e., the maximal difference of both sides of the \pc inequality above) also increases with $n$. When $n=5$, this margin can already exceed \nicefrac13. This means there are profiles where the \ps random assignment is dominated by another random assignment in which some agents are twice as likely to be better off. Weak \pc-efficiency failures of \rsd are even more frequent and more pronounced.

Finally, we consider \emph{popular random assignments (\pop)}, as proposed by \citet{KMN11a}. A random assignment is \emph{popular} if there does not exist another random assignment that is preferred by an \emph{expected} majority of agents. Popular random assignments always exist, but need not be unique. 
As pointed out by \citet{ABS13a}, popular random assignments are a special case of maximal lotteries, which exhibit desirable properties in social choice theory \citep[e.g.,][]{Fish84a,Bran13a,BrBr17a}. Maximal lotteries (and hence popular random assignments) maximize utilitarian SSB welfare for \pc utility functions \citep{BBH15c}. As a consequence, all popular random assignments are strongly \pc-efficient. 
However, \citet{ABS13a} constructed a preference profile that admits no popular random assignment that satisfies strong \sd-envy-freeness. \citet{BHS17a} extended this incompatibility to weak \sd-envy-freeness.
\citet{Mora22a} showed via computer simulations that \pc-envy-freeness violations of popular random assignments occur more frequently as $n$ increases and that \pc-envy probability margins can already exceed $\nicefrac13$ when $n=5$.

By contrast, the ordinal random assignment rule which returns a \emph{pseudo-market equilibrium with cost minimization},\footnote{In which agents with equal budgets select maximal affordable lotteries according to the \pc relation.} which we denote by \pchz, satisfies both \pc-efficiency and \pc-envy-freeness (see \Cref{section:SSB_with_strict_separation}).

To get more intuition into \pchz and how it differs from existing random assignment rules, consider the following example.

\begin{example}[\pchz ordinal random assignment rule]
	Let $N=\{1,2,3\}$, $\mO=\{a,b,c\}$ and consider the preference profile
	\[
		1:\ a\succ b\succ c,
		\qquad
		2:\ a\succ c\succ b,
		\qquad
		3:\ b\succ a\succ c.
	\]
	\ps and \rsd return the following random assignments for this profile.
	\[
		X^\ps=
		\begin{pmatrix}
		\nicefrac12 & \nicefrac14 & \nicefrac14\\
		\nicefrac12 & 0 & \nicefrac12\\
		0 & \nicefrac34 & \nicefrac14
		\end{pmatrix}\quad\text{and}\quad
		X^\rsd=
		\begin{pmatrix}
		\nicefrac12 & \nicefrac16 & \nicefrac13\\
		\nicefrac12 & 0 & \nicefrac12\\
		0 & \nicefrac56 & \nicefrac16
		\end{pmatrix}\text.
	\]

	By contrast, the unique \pchz pseudo-market equilibrium with cost minimization is given by the price vector $p^\ast=\left(\nicefrac23,\nicefrac13,0\right)$ together with the random assignment
	\[
		X^\pchz=
		\begin{pmatrix}
		\nicefrac12 & 0 & \nicefrac12\\
		\nicefrac12 & 0 & \nicefrac12\\
		0 & 1 & 0
		\end{pmatrix}.
	\]
	There are infinitely many popular random assignments in this profile, and $X^\pchz$ is one of them.

\end{example}

\Cref{table:axiom-violations} gives an overview of our results on ordinal random assignment. Note that \rsd and \ps already violate \emph{weak} \pc-efficiency. 

\begin{table}[htb]
	\centering
		\begin{tabular}{ccc}
			\toprule
			& \pc-efficiency & \pc-envy-freeness \\
			\midrule
			\rsd  & -- & -- \\
			\ps & -- & \checkmark \\  
			\pop & \checkmark & -- \\ 
		    \pchz & \checkmark & \checkmark \\
			\bottomrule
		\end{tabular}
	\caption{Properties of Random Assignment Rules}
	\label{table:axiom-violations}
\end{table}

\section{Conclusion and Discussion}

We extended the Hylland--Zeckhauser pseudo-market to continuous and convex preference relations. The resulting market equilibrium is weakly efficient and envy-free. Strong efficiency can be retained by restricting preferences to a natural subdomain of SSB preferences or by relaxing envy-freeness. These findings lead to a new ordinal random assignment rule that satisfies \pc-efficiency and \pc-envy-freeness.

As acknowledged by \citet{HyZe79a}, their pseudo-market is not strategyproof. Moreover, \citet{Zhou90a} proved that no efficient assignment rule that satisfies equal treatment of equals is strategyproof. \citet{AzBu19a} formalize in which sense HZ pseudo-markets are approximately strategyproof when there are many agents.

When comparing \pchz to other \emph{ordinal} random assignment rules, it has to be mentioned that \rsd is strongly \sd-strategyproof and \ps is weakly strategyproof. Note, however, that for weak preferences over objects, the canonical extension of \ps fails to meet weak \sd-strategyproofness. In fact, every \sd-efficient and strongly \sd-envy-free random assignment rule violates weak \sd-strategyproofness \citep{KaSe06a}.
We conjecture that no random assignment rule satisfies equal treatment of equals, \sd-efficiency, and weak \sd-strategyproofness for weak preferences.
For strict preferences, we conjecture that equal treatment of equals is incompatible with \pc-efficiency and \pc-strategyproofness.\footnote{Analogous statements were shown for the general social choice domain by \citet{BBEG16a} and \citet{BLS22c}.}

\citet{VaYa25a} constructed a vNM utility profile whose unique HZ equilibrium uses irrational probabilities. 
Furthermore, \citet{CCPY22a} show that computing approximate HZ equilibria is PPAD-hard once agents' vNM utilities take at least four distinct values. By contrast, \citet{YaLi26a} provide a polynomial-time algorithm for computing a constant-factor approximation by reducing the original market to a bi-valued instance.
\citet{TrVa26a} showed that approximately computing \emph{any} efficient and envy-free random assignment is PPAD-hard for vNM utilities. %
Since SSB preferences are more general than vNM preferences, these negative results carry over to SSB preferences.
An interesting open question is whether similar results can be shown for \pc preferences.

\subsection*{Acknowledgements}
This material is based on work supported by the Deutsche Forschungsgemeinschaft under grant {BR~2312/14-1}. We are grateful to Matthias Greger for supplying \Cref{ex:rsd_pceff}.

\newpage
\appendix

\section{Incompatibility of Efficiency and Envy-Freeness} \label{appendix:counter-example}

\CounterExample*
\begin{proof}
	Agents $1$ and $2$ have the strict transitive preference $a\succ b\succ c$ over objects, while agent $3$ satisfies $b\succ_3 a$, $a\succ_3 c$, and $b\sim_3 c$.

	Let $X$ be an envy-free random assignment, and write its bistochastic matrix as
	\begin{equation*}
	X=
	\begin{pmatrix}
	u & v & 1-u-v\\
	w & t & 1-w-t\\
	1-u-w & 1-v-t & u+v+w+t-1
	\end{pmatrix}.
	\end{equation*}

	By envy-freeness, we have, in particular,
	\begin{align}
		\phi_1(x_1,x_2)
		&= u+v-w-t+(ut-vw)=0, \label{eq:EF12}\\
		\phi_3(x_3,x_1)
		&=1-2v-w-2(ut-vw)\ge 0, \label{eq:EF31}\\
		\phi_3(x_3,x_2)
		&=1-u-2t+2(ut-vw)\ge 0. \label{eq:EF32}
	\end{align}
	The equality in \eqref{eq:EF12} follows because agents $1$ and $2$ have the same SSB preference: envy-freeness requires both $\phi_1(x_1,x_2)\ge 0$ and $\phi_2(x_2,x_1)\ge 0$, and these two quantities are negatives of one another.

	Using \eqref{eq:EF12}, we have
	\[
		ut-vw=-u-v+w+t.
	\]
	Substituting this into \eqref{eq:EF31} and \eqref{eq:EF32} gives
	\begin{equation*}
		3w+2t\le 1+2u \qquad\text{and}\qquad 3u+2v\le 1+2w.
	\end{equation*}
	Since $X$ is feasible, the entry $x_3(a)=1-u-w$ is non-negative, so $u+w\le 1$. Hence
	\begin{align*}
		5u+2v &\le 3,\\
		5w+2t &\le 3.
	\end{align*}
	Moreover, if both inequalities hold with equality, then the intermediate inequalities must also hold with equality, and therefore $u+w=1$.

	\medskip
	\noindent\textbf{Case 1:} $5u+2v<3$ or $5w+2t<3$.

	Consider the random assignment
	\begin{equation*}
		Y^1=
		\begin{pmatrix}
		\nicefrac{1}{2} & \nicefrac{1}{4} & \nicefrac{1}{4}\\
		\nicefrac{1}{2} & \nicefrac{1}{4} & \nicefrac{1}{4}\\
		0 & \nicefrac{1}{2} & \nicefrac{1}{2}
		\end{pmatrix}.
	\end{equation*}
	A direct computation gives
	\begin{align*}
		\phi_1(y^1_1,x_1) &=\nicefrac{1}{4}(3-5u-2v),\\
		\phi_2(y^1_2,x_2) &=\nicefrac{1}{4}(3-5w-2t),\\
		\phi_3(y^1_3,x_3) &=0.
	\end{align*}
	By the inequalities above, all three terms are non-negative. Since at least one of the two inequalities in this case is strict, at least one of the first two terms is strictly positive. Hence $Y^1$ Pareto dominates $X$.

	\medskip
	\noindent\textbf{Case 2:} $5u+2v=3$ and $5w+2t=3$.

	As noted above, equality implies $u+w=1$. Combining this with the two equalities gives
	\[
		v+t=\nicefrac{1}{2}.
	\]
	Hence
	\[
		x_3=(0,\nicefrac{1}{2},\nicefrac{1}{2}).
	\]
	Moreover, $u<1$, since otherwise $5u+2v=3$ would imply $v=-1$, contradicting feasibility. Similarly, $w<1$.

	Now consider the random assignment
	\begin{equation*}
		Y^2=
		\begin{pmatrix}
		\nicefrac{1}{2} & \nicefrac{1}{2} & 0\\
		\nicefrac{1}{2} & \nicefrac{1}{2} & 0\\
		0 & 0 & 1
		\end{pmatrix}.
	\end{equation*}
	Then
	\begin{align*}
		\phi_1(y^2_1,x_1) &=\nicefrac{1}{2}(2-3u-v) =\nicefrac{1}{4}(1-u),\\
		\phi_2(y^2_2,x_2) &=\nicefrac{1}{2}(2-3w-t) =\nicefrac{1}{4}(1-w),\\
		\phi_3(y^2_3,x_3) &=0,
	\end{align*}
	where the first two equalities use $5u+2v=3$ and $5w+2t=3$, respectively. Since $u<1$ and $w<1$, agents $1$ and $2$ are strictly better off under $Y^2$, while agent $3$ is indifferent. Thus $Y^2$ Pareto dominates $X$.

	In both cases, the envy-free assignment $X$ is Pareto dominated. Therefore, no random assignment is both efficient and envy-free for this profile.
\end{proof}

\section{A Generalized Pseudo-Market}
\label{appendix:pseudomarket}

\MarketExistence*

Recall from \Cref{section:pseudo-markets} the basic set-up of the pseudo-market, most of which we inherit from \citet{HyZe79a}.
In this section, we give more detailed definitions of the price set $P$, the demand correspondence $D_i$ for each agent $i \in N$, and a price update mechanism that responds to excess demand and ensures market clearing at a fixed point.
Along the way, we collect auxiliary lemmas that show that, even for general continuous and convex preferences, these three components together satisfy the conditions necessary for Kakutani's fixed-point theorem.

\paragraph{Price set.}
The price set $P$, formally defined by
\[
	P \coloneqq
	\left\{
	p\in\R^{n}_{\ge 0}\colon \min_{o\in\mO}p_o=0 \text{ and } \left\|p-\frac{p\cdot\mathbf 1}{n}\mathbf 1\right\|_1\le 2
	\right\},
\]
is not convex.
Therefore, we follow \citet{HyZe79a} in obtaining $P$ via a homeomorphic normalization map from a compact and convex parameter space $S$.
Let
\begin{equation*}
	S \coloneqq\left\{s \in \R^{n} \colon \sum_{o\in \mO} s_o = 0 \text{ and } \|s\|_1 \le 2 \right\}.
\end{equation*}
Define the normalization map $f \colon S \to \R^{n}_{\ge 0}$ by
\begin{equation*}
	f(s) \coloneqq s - \min_{o\in\mO}s_o \, \mathbf 1.
\end{equation*}
With this, we obtain $P = f(S)$, noting that $f$ is a homeomorphism.
A key identity, which we use in the proof of \Cref{theorem:market_existence}, is that for every \emph{zero-sum} vector $z\in \R^{n}$,
\begin{equation} \label{eq:price-identity}
	f(s)\cdot z = \bigl(s - \min_o s_o \, \mathbf 1 \bigr) \cdot z = s \cdot z.
\end{equation}

\paragraph{Demand correspondences.}
Recall that each agent is given a virtual budget $b_i > 0$ satisfying $\sum_{i \in N} b_i = 1$. The budget set of agent $i$ with respect to a price vector $p \in P$ is therefore
\begin{equation*}
	B_i(p) \coloneqq\{x\in \Delta \colon p \cdot x \le b_i\}.
\end{equation*}

\begin{lemma}
	For every agent $i \in N$ and every price vector $p\in P$, the set $B_i(p)$ is non-empty, compact, and convex.
\end{lemma}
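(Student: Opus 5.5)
The plan is to verify the three properties directly from the definition $B_i(p)=\{x\in\Delta\colon p\cdot x\le b_i\}$, viewing $B_i(p)$ as the intersection of the simplex $\Delta$ with a closed half-space.

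\emph{Non-emptiness.} Since $p\in P$, we have $\min_{o\in O}p_o=0$, so there is an object $o^\ast$ with $p_{o^\ast}=0$. The degenerate lottery $e_{o^\ast}$ putting all probability on $o^\ast$ has cost $p\cdot e_{o^\ast}=p_{o^\ast}=0<b_i$, because the budget satisfies $b_i>0$. Hence $e_{o^\ast}\in B_i(p)$. This is the only step that uses the normalization built into the price set $P$; it would fail for a generic price vector with all entries exceeding $b_i$. It is also why $P$ is defined via $\min_o p_o=0$.

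\emph{Compactness and convexity.} The map $x\mapsto p\cdot x$ is linear and continuous. Therefore the half-space $H\coloneqq\{x\in\R^n\colon p\cdot x\le b_i\}$ is closed and convex. The simplex $\Delta$ is compact and convex, so $B_i(p)=\Delta\cap H$ is a closed subset of a compact set, hence compact. As an intersection of two convex sets, it is also convex.

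None of these steps is a real obstacle; the only point requiring care is non-emptiness, which relies on the zero-minimum normalization of prices together with $b_i>0$. Later, continuity of $B_i(\cdot)$ in $p$ will likewise exploit that $e_{o^\ast}$ lies strictly inside the budget constraint (Slater-type condition), but that is not needed for the present statement.
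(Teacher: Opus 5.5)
Your proposal is correct and follows the paper's proof essentially verbatim: non-emptiness via the degenerate lottery $e_{o^\ast}$ on a zero-priced object (guaranteed by $\min_{o}p_o=0$), and compactness and convexity from writing $B_i(p)$ as the intersection of the compact convex simplex $\Delta$ with a closed half-space.
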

\begin{proof}
	Since $\min_{o\in \mO} p_o=0$, there exists some object $o^\ast$ with $p_{o^\ast}=0$. 
	Hence the degenerate lottery $e_{o^\ast}\in \Delta$ satisfies
	\begin{equation*}
		p \cdot e_{o^\ast} = 0 \le b_i,
	\end{equation*}
	so $B_i(p)$ is non-empty.

	It is compact because it is the intersection of the compact set $\Delta$ with the closed halfspace $\{x: p\cdot x\le b_i\}$. 
	It is convex because both these sets are convex.
\end{proof}

\begin{lemma}
	For every agent $i \in N$, the correspondence $B_i:P\rightrightarrows \Delta$ is continuous.
\end{lemma}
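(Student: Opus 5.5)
We need to show $B_i\colon P\rightrightarrows\Delta$ is both upper and lower hemicontinuous. Since $\Delta$ is compact, upper hemicontinuity with compact values is equivalent to $B_i$ having a closed graph. Lower hemicontinuity will follow from a sequential characterization: for every $p^k\to p$ in $P$ and every $x\in B_i(p)$, we must find $x^k\in B_i(p^k)$ with $x^k\to x$.

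\emph{Closed graph.} Take $(p^k,x^k)\to(p,x)$ with $x^k\in B_i(p^k)$, i.e., $x^k\in\Delta$ and $p^k\cdot x^k\le b_i$. Since $\Delta$ is closed, $x\in\Delta$. The inner product is jointly continuous, so $p\cdot x=\lim_k p^k\cdot x^k\le b_i$. Hence $x\in B_i(p)$, and together with compactness of $\Delta$ this gives upper hemicontinuity.

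\emph{Lower hemicontinuity.} This is the step that needs care, because $x$ may lie on the budget hyperplane, where a naive choice of $x^k = x$ can fail. The key fact is that the budget constraint has a Slater point: as in the previous lemma, every $p\in P$ has some object $o^\ast$ with $p_{o^\ast}=0$, so $p\cdot e_{o^\ast}=0<b_i$, using $b_i>0$.

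\begin{enumerate}[label=\textit{(\alph*)}]
\item First suppose $p\cdot x<b_i$. By continuity, $p^k\cdot x<b_i$ for all large $k$, so we take $x^k=x$ eventually. For the finitely many early indices we take any point of $B_i(p^k)$, which is nonempty by the previous lemma.
\item In general, fix the zero-price object $o^\ast$ of the limit price $p$ and set $z=e_{o^\ast}$. For $\lambda\in(0,1]$, the point $x_\lambda=(1-\lambda)x+\lambda z$ satisfies $p\cdot x_\lambda\le(1-\lambda)b_i<b_i$. So every $x\in B_i(p)$ is a limit of points in the relative interior $\{y\in\Delta: p\cdot y<b_i\}$.
\end{enumerate}

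Now define $\lambda_k$ as the smallest $\lambda\in[0,1]$ with $p^k\cdot x_\lambda\le b_i$. This exists for large $k$, because $p^k\cdot z\to p\cdot z=0<b_i$. If $\lambda_k\not\to0$, some fixed $\lambda>0$ would violate the constraint infinitely often. But $p^k\cdot x_\lambda\to p\cdot x_\lambda<b_i$ by~(b), a contradiction. So $\lambda_k\to0$, and $x^k\coloneqq x_{\lambda_k}\in B_i(p^k)$ converges to $x$.

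Combining upper and lower hemicontinuity, $B_i$ is continuous.
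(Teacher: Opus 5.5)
Your proof is correct and follows the paper's argument. Upper hemicontinuity comes from the closed graph plus compactness of $\Delta$. Lower hemicontinuity comes from perturbing toward a zero-priced object, $x_\lambda=(1-\lambda)x+\lambda e_{o^\ast}$, to create strict budget slack, and then using continuity of the inner product in $p$.

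The only difference is that the paper uses the open-set definition of lower hemicontinuity. There, a single $x_\lambda\in V$ with $p\cdot x_\lambda<b_i$ stays affordable on a whole neighborhood of $p$, so your minimal-$\lambda_k$ sequence construction is not needed. Your case (a) is in any event already covered by your general construction.
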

\begin{proof}
	Since $\Delta$ is compact, upper hemicontinuity is equivalent to the graph
	\begin{equation*}
		\{(p,x)\in P \times \Delta \colon p \cdot x \le b_i\}
	\end{equation*}
	being closed. 
	This follows from the continuity of the inner product $(p,x) \mapsto p \cdot x$.

	For lower hemicontinuity, fix $p \in P$, let $V \subseteq \Delta$ be an open subset, and suppose $B_i(p) \cap V \neq \emptyset$. 
	Pick $x \in B_i(p) \cap V$. 
	Choose an object $o^\ast$ with $p_{o^\ast}=0$, and define $\bar x:=e_{o^\ast}$. 
	Then $p\cdot \bar x = 0 < b_i$.
	Choose $\lambda\in(0,1)$ small enough that
	\begin{equation*}
		x_\lambda \coloneqq(1-\lambda)x + \lambda \bar x \in V.
	\end{equation*}
	Since $\Delta$ is convex, $x_\lambda\in \Delta$. 
	Moreover,
	\begin{equation*}
		p\cdot x_\lambda = (1-\lambda)(p\cdot x)+\lambda(p\cdot \bar x) \le (1-\lambda)b_i < b_i.
	\end{equation*}
	By continuity of $p'\mapsto p'\cdot x_\lambda$, there exists a neighborhood $U \subseteq P$ of $p$ such that
	\begin{equation*}
		p'\cdot x_\lambda < b_i \qquad\text{for all } p'\in U.
	\end{equation*}
	Hence $x_\lambda\in B_i(p')\cap V$ for all $p'\in U$. This proves lower hemicontinuity.
\end{proof}

Next, \Cref{lemma:demand-compact-convex,lemma:maximal_set_upper_hc} show that the demand correspondence defined as
\begin{equation*}
	D_i(p) \coloneqq\{x \in B_i(p) \colon x \succsim_i y \text{ for all } y \in B_i(p)\}
\end{equation*}
is non-empty, compact, convex and satisfies upper hemicontinuity.

\begin{lemma} \label{lemma:demand-compact-convex}
	For every agent $i \in N$ and every price vector $p\in P$, the set $D_i(p)$ is non-empty, compact, and convex.
\end{lemma}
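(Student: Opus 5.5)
We prove the three properties of $D_i(p)$ separately, using that $B_i(p)$ is non-empty, compact, and convex (previous lemma) and the continuity and convexity assumptions on $\succ_i$.

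\emph{Non-emptiness.} This is Sonnenschein's theorem \citep{Sonn71a} applied to the non-empty, compact, convex set $B_i(p)$. The hypotheses hold: strict upper contour sets $U(x)$ are convex, and strict lower contour sets are open because $G(\succ_i)$ is open. For completeness I would recall the KKM-type argument. Suppose no maximal element exists. Then $B_i(p)=\bigcup_{y\in B_i(p)} L(y)$, where $L(y)=\{x\in B_i(p)\colon y\succ_i x\}$ is relatively open. By compactness finitely many sets $L(y_1),\dots,L(y_k)$ cover $B_i(p)$. Take a partition of unity $\beta_1,\dots,\beta_k$ subordinate to this cover and define $h(x)=\sum_j \beta_j(x)y_j$. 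This is a continuous self-map of $\operatorname{conv}\{y_1,\dots,y_k\}\subseteq B_i(p)$, so Brouwer gives a fixed point $\bar x=h(\bar x)$. Each $y_j$ with $\beta_j(\bar x)>0$ lies in $U(\bar x)$, and $U(\bar x)$ is convex, so $\bar x=h(\bar x)\in U(\bar x)$, i.e., $\bar x\succ_i\bar x$. This contradicts asymmetry.

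\emph{Compactness.} Write $D_i(p)=B_i(p)\setminus\bigcup_{y\in B_i(p)}\{x\colon y\succ_i x\}$. Each removed set is open by continuity, so $D_i(p)$ is a closed subset of the compact set $B_i(p)$, hence compact.

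\emph{Convexity.} Take $x,x'\in D_i(p)$ and $\lambda\in[0,1]$, and set $z=\lambda x+(1-\lambda)x'$. Then $z\in B_i(p)$ by convexity of $B_i(p)$. For any $y\in B_i(p)$ we have $x,x'\in W(y)$, and $W(y)$ is convex by assumption, so $z\in W(y)$, i.e., $z\succsim_i y$. Hence $z\in D_i(p)$.

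The only nontrivial step is non-emptiness. It can be cited directly from Sonnenschein, so the main care needed is verifying that his hypotheses (convex strict upper sets, open strict lower sets) follow from our continuity and convexity assumptions, which they do as shown above.
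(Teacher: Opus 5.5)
Your proof is correct and follows essentially the same route as the paper: non-emptiness via Sonnenschein's theorem, closedness of $D_i(p)$ in the compact set $B_i(p)$ from openness of strict lower contour sets (the paper phrases this with a sequence argument), and convexity from convexity of the weak upper contour sets $W(y)$. Your Brouwer/partition-of-unity sketch of Sonnenschein's theorem is a correct addition, though the paper simply cites the result.
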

\begin{proof}
	Since $B_i(p)$ is non-empty, compact and convex, and the preference relation $\succ_i$ has closed weak upper contour sets and convex strict upper contour sets, the existence of maximal elements follows from \citet{Sonn71a}.
	Hence, $D_i(p)\neq\emptyset$.

	To prove compactness, it is enough to show that $D_i(p)$ is closed in $B_i(p)$. 
	Let $(x^k)$ be a convergent sequence in $D_i(p)$ with $x^k\to x\in B_i(p)$ as $k \to \infty$. 
	If $x \notin D_i(p)$, then there exists $y\in B_i(p)$ such that $y\succ_i x$.
	Since the strict lower contour set of $y$ is open, we have $y\succ_i x^k$ for all sufficiently large $k$, contradicting $x^k\in D_i(p)$. 
	Hence, $x\in D_i(p)$, so $D_i(p)$ is closed and therefore compact.

	To prove convexity, let $x,y\in D_i(p)$ and $\lambda\in[0,1]$. 
	Set $z \coloneqq \lambda x+(1-\lambda)y$.
	Since $B_i(p)$ is convex, we have $z\in B_i(p)$. 
	Suppose, for contradiction, that there exists $z'\in B_i(p)$ such that $z'\succ_i z$.
	Because $x$ and $y$ are maximal and $\succsim_i$ is complete, we must have $x \succsim_i z'$ and $y \succsim_i z'$, i.e., $x$ and $y$ are in the weak upper contour set $W_i(z')$.
	Since $W_i(z')$ is convex by assumption, this implies $z\in W_i(z')$ and $z\succsim_i z'$, contradicting $z'\succ_i z$.
	Hence, $z\in D_i(p)$, and $D_i(p)$ is convex.
\end{proof}

\begin{lemma}\label{lemma:maximal_set_upper_hc}
	For every agent $i \in N$, the correspondence $D_i \colon P \rightrightarrows \Delta$ is upper hemicontinuous.
\end{lemma}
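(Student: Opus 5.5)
Since $\Delta$ is compact, I will prove upper hemicontinuity of $D_i$ by showing that its graph
\[
\operatorname{Gr}(D_i)=\{(p,x)\in P\times\Delta\colon x\in D_i(p)\}
\]
is closed. A correspondence with compact range and closed graph is upper hemicontinuous, and we already know from \Cref{lemma:demand-compact-convex} that its values are non-empty and compact.

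Take a sequence $(p^k,x^k)\in \operatorname{Gr}(D_i)$ with $p^k\to p\in P$ and $x^k\to x\in\Delta$. First, $x\in B_i(p)$: the graph of $B_i$ is closed by the upper hemicontinuity argument for budget sets (continuity of the inner product), and $x^k\in B_i(p^k)$.

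It remains to show that $x$ is maximal in $B_i(p)$. Suppose not. Then some $y\in B_i(p)$ satisfies $y\succ_i x$. I use lower hemicontinuity of $B_i$ to pull $y$ back into the budget sets $B_i(p^k)$. Continuity says the graph $G(\succ_i)$ is open, so there are neighborhoods $V_y$ of $y$ and $V_x$ of $x$ with $y'\succ_i x'$ for all $y'\in V_y$ and $x'\in V_x$. Since $y\in B_i(p)\cap V_y$, lower hemicontinuity gives a neighborhood $U$ of $p$ such that $B_i(p')\cap V_y\neq\emptyset$ for all $p'\in U$. Concretely, the proof of that lemma supplies a single point $y_\lambda\in V_y$ with $p\cdot y_\lambda<b_i$, hence $p'\cdot y_\lambda<b_i$ near $p$.

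For $k$ large we have $p^k\in U$ and $x^k\in V_x$, so we can pick $y^k\in B_i(p^k)\cap V_y$. This gives $y^k\succ_i x^k$, which contradicts $x^k\in D_i(p^k)$. Hence $x\in D_i(p)$, the graph is closed, and $D_i$ is upper hemicontinuous.

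The main obstacle is the step where $y$ might lie on the budget boundary of $B_i(p)$ and fail to be affordable at $p^k$. This is exactly why lower hemicontinuity of $B_i$ is needed, and it ultimately relies on the zero-price object giving slack, $b_i>0$. Joint openness of $G(\succ_i)$, rather than mere openness of individual contour sets, is used so that the perturbed $y^k$ and the moving $x^k$ can be handled simultaneously.
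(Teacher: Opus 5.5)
Your proof is correct and follows essentially the same route as the paper: closed graph via upper hemicontinuity of $B_i$, then lower hemicontinuity of $B_i$ together with openness of $G(\succ_i)$ to produce affordable $y^k \succ_i x^k$, contradicting $x^k\in D_i(p^k)$. The only cosmetic difference is that you use the neighborhood form of lower hemicontinuity with a product neighborhood $V_y\times V_x$, whereas the paper picks a sequence $y^k\in B_i(p^k)$ with $y^k\to y$ and argues that $(y^k,x^k)\to(y,x)$ eventually lies in $G(\succ_i)$.
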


\begin{proof}
	Let $(p^k)$ be a convergent sequence in $P$ with $p^k \to p$ as $k \to \infty$, let $x^k\in D_i(p^k)$ be a maximal affordable lottery for every $k$, and suppose that $x^k\to x \in \Delta$ as $k \to \infty$. 
	Since the budget-set correspondence $B_i$ is upper hemicontinuous and $x^k\in B_i(p^k)$ for every $k$, we have $x\in B_i(p)$.

	Suppose, towards a contradiction, that $x\notin D_i(p)$. By completeness of $\succsim_i$, there exists some $y\in B_i(p)$ such that $y\succ_i x$.
	Since $B_i$ is lower hemicontinuous, there exists a convergent sequence $y^k\in B_i(p^k)$ such that $y^k\to y$ as $k \to \infty$.

	We now use that the strict preference graph $G(\succ_i)$ is open.
	Since $(y,x)\in G(\succ_i)$ and $(y^k,x^k)\to(y,x)$, it follows that $(y^k,x^k)\in G(\succ_i)$ for all sufficiently large $k$. Hence, $y^k\succ_i x^k$ for all sufficiently large $k$.
	This contradicts $x^k\in D_i(p^k)$, because $y^k\in B_i(p^k)$ and $y^k \succ_i x^k$. 
	Therefore, $x\in D_i(p)$, and $D_i(\cdot)$ has a closed graph. Since $\Delta$ is compact, this implies that $D_i(\cdot)$ is upper hemicontinuous.
\end{proof}

We remark that \Cref{lemma:maximal_set_upper_hc} mirrors \citet[][Thm.~2]{Walk79a}.
From each agent's demand correspondence $D_i(\cdot)$, we now define the \emph{aggregate-demand correspondence} $D: P \rightrightarrows \Delta^n$ simply by the Cartesian product
\begin{equation*}
	D(p) \coloneqq \prod_{i\in N} D_i(p).
\end{equation*}
Since each $D_i(\cdot)$ is non-empty, compact-valued, convex-valued, and upper hemicontinuous, the same is true for $D$.
An element $X=(x_i)_{i \in N}$ of $D(p)$ is exactly a selection of lottery $x_i$ for each agent $i$, under the assumption that each agent selects an element of her demand set $D_i(p)$, i.e., a maximal element of the budget set $B_i(p)$.

\paragraph{Price updates.}
Recall that the price set $P = f(S)$ is obtained via a homeomorphic normalization map from a compact convex parameter space $S$.
We now define the price-update dynamics on $S$, by which the market adjusts prices in response to excess demand in a selection $X \in D(p)$.
Given any selection $X\in \Delta^n$ of lotteries, the excess demand (compared to the unit supply of each object) is denoted by
\begin{equation*}
	z(X) \coloneqq \sum_{i\in N} x_i - \mathbf 1.
\end{equation*}
Using this notation, we define the price-update correspondence $\Pi: \Delta^n \rightrightarrows S$ by
\begin{equation*}
	\Pi(X) \coloneqq \arg\max_{s\in S}~ s \cdot z(X).
\end{equation*}
In words, $\Pi(X)$ selects those price parameters $s\in S$ that maximize the dot product with the excess-demand vector induced by the current lottery selection $X$.

\begin{lemma} \label{lemma:price-update-correspondence}
	The correspondence $\Pi:\Delta^n\rightrightarrows S$ is non-empty, compact-valued, convex-valued, and upper hemicontinuous.
\end{lemma}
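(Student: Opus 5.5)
The plan is to recognize $\Pi(X)$ as the solution set of a parametric linear program over the fixed feasible region $S$, with objective vector $z(X)$ depending continuously on $X$, and then to apply Berge's maximum theorem.

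First, I will check that $S$ is non-empty, compact, and convex. It contains $0$. It is closed and bounded, being the intersection of the hyperplane $\{s\colon \sum_o s_o=0\}$ with the $\ell_1$-ball of radius $2$. It is convex because both of these sets are convex.

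Next come the properties of the values of $\Pi$. Fix $X\in\Delta^n$. The map $s\mapsto s\cdot z(X)$ is linear, hence continuous, so it attains its maximum on the compact set $S$, and $\Pi(X)\neq\emptyset$. Let $m(X)$ denote this maximum. Then
\[
\Pi(X)=S\cap\{s\colon s\cdot z(X)\ge m(X)\},
\]
which is the intersection of a compact convex set with a closed halfspace. So $\Pi(X)$ is compact and convex. In the degenerate case $z(X)=0$, $\Pi(X)=S$, which is still fine.

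For upper hemicontinuity, I will show that the graph is closed; since the codomain $S$ is compact, this suffices. Take $X^k\to X$ and $s^k\in\Pi(X^k)$ with $s^k\to s$. Then $s\in S$ because $S$ is closed. Fix any $s'\in S$. By definition of $s^k$ we have $s^k\cdot z(X^k)\ge s'\cdot z(X^k)$. The map $X\mapsto z(X)=\sum_i x_i-\mathbf 1$ is affine, hence continuous, and the inner product is jointly continuous. Passing to the limit gives $s\cdot z(X)\ge s'\cdot z(X)$. Since $s'$ was arbitrary, $s\in\Pi(X)$. Alternatively, this is exactly Berge's maximum theorem applied to the constant (and hence continuous) constraint correspondence $X\mapsto S$ with a jointly continuous objective.

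No step here is a genuine obstacle. The only point needing care is the closed-graph argument, where one must use joint continuity of $(s,X)\mapsto s\cdot z(X)$ to pass to the limit in the inequality, and must check that the limit point stays in $S$. Everything else is routine.
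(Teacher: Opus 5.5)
Your proposal is correct and follows the paper's own proof. Non-emptiness, compactness and convexity come from maximizing a linear functional over the compact convex set $S$, and upper hemicontinuity comes from Berge's maximum theorem. Your explicit closed-graph argument is simply that theorem specialized to the constant constraint correspondence $X \mapsto S$.
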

\begin{proof}
	For a fixed selection $X \in \Delta^n$, the map $s\mapsto s\cdot z(X)$ is continuous and linear on the compact convex parameter space $S$. 
	Hence, the argmax set of this map is non-empty, compact, and convex. 
	Since $X \mapsto z(X)$ is also continuous, upper hemicontinuity follows from Berge's maximum theorem \citep{Berg63a}.
\end{proof}

We can now give the proof of \Cref{theorem:market_existence}.

\begin{proof}[Proof of \Cref{theorem:market_existence}]
	Define the correspondence $\Gamma: \Delta^n \times S \rightrightarrows \Delta^n \times S$ by
	\begin{equation*}
		\Gamma(X,s) \coloneqq D(f(s)) \times \Pi(X).
	\end{equation*}
	The domain $\Delta^n \times S$ is non-empty, compact, and convex. 
	By \Cref{lemma:demand-compact-convex,lemma:maximal_set_upper_hc,lemma:price-update-correspondence}, $\Gamma$ is non-empty, compact-valued, convex-valued, and upper hemicontinuous since $D(\cdot)$ and $\Pi(\cdot)$ both satisfy each of these properties. 
	Kakutani's fixed-point theorem therefore yields a fixed point $(X^\ast,s^\ast)$ such that
	\begin{equation*}
		X^\ast\in D(f(s^\ast)) \qquad\text{and}\qquad s^\ast\in \Pi(X^\ast).
	\end{equation*}
	We claim that the pair $(X^\ast, p^\ast)$ with $p^\ast \coloneqq f(s^\ast)$ is a pseudo-market equilibrium. 
	By definition, $X^\ast \in D(p^\ast)$ implies that each $x^\ast_i \in D_i(p^\ast)$, i.e., rationality is satisfied.
	It remains to show that the market-clearing condition is satisfied.
	
	Let $z^\ast \coloneqq z(X^\ast)$.
	Since each $x_i^\ast\in \Delta$, we get
	\begin{equation*}
		\sum_{o\in \mO} z_o^\ast = \sum_{o\in \mO}\left(\sum_{i\in N} x_i^\ast(o) -1\right) = \sum_{o \in \mO}\sum_{i \in N} x^\ast_i(o) - n = 0.
	\end{equation*}

	Because each $x_i^\ast \in B_i(p^\ast)$, we have $p^\ast\cdot x_i^\ast \le b_i$ for every agent $i$.
	Summing over all agents gives
	\begin{equation*}
		p^\ast\cdot \sum_{i\in N} x_i^\ast \le \sum_{i\in N} b_i =1.
	\end{equation*}
	Hence, $p^\ast\cdot z^\ast \le 1-p^\ast\cdot \mathbf 1$.

	We now show that $z^\ast = \mathbf 0$, i.e., the market clears.
	To this end, assume for contradiction that $z^\ast\neq \mathbf 0$.
	On the one hand, since $z^\ast$ is zero-sum and $S$ is the closed $\ell_1$-ball of radius $2$ in the zero-sum hyperplane, the linear functional $s \mapsto s\cdot z^\ast$ attains a strictly positive maximum on $S$.
	Because $s^\ast \in \Pi(X^\ast)$, we have $s^\ast\cdot z^\ast >0$.
	By the identity \eqref{eq:price-identity} for zero-sum vectors,
	\begin{equation*}
		p^\ast\cdot z^\ast = s^\ast\cdot z^\ast >0.
	\end{equation*}
	On the other hand, since $z^\ast\neq 0$, the maximizer $s^\ast$ lies on the boundary of $S$, so
	$\|s^\ast\|_1=2$. Because $s^\ast\cdot\mathbf 1=0$, the total negative mass of
	$s^\ast$ equals $1$. Hence
	\[
		-\min_{o\in\mO}s_o^\ast\ge \frac{1}{n}.
	\]
	Therefore,
	\[
		p^\ast\cdot\mathbf 1
		= \sum_{o\in\mO}\bigl(s_o^\ast-\min_{\hat o\in\mO}s_{\hat o}^\ast\bigr)
		= -n\min_{\hat o\in\mO}s_{\hat o}^\ast \ge 1.
	\]
	On the other hand, affordability gives
	\[
		p^\ast\cdot z^\ast \le 1-p^\ast\cdot\mathbf 1 \le 0,
	\]
	which contradicts $p^\ast\cdot z^\ast>0$. Hence, $z^\ast=\mathbf 0$.
\end{proof}

\subsection{SSB Preferences with Strict Maximality} \label{appendix:strictness_assumption}

We first provide the proof of \Cref{proposition:characterization_SSB_strict_maximality}, which characterizes SSB preferences satisfying strict maximality via its matrix representation.

\strictMaximalityCharacterization*
\begin{proof}
	Given a set $E \subseteq \Delta$ of degenerate lotteries, we denote by $\operatorname{conv} E \subseteq \Delta$ the face of $\Delta$ spanned by the vertices in $E$.
	Given a lottery $x \in \Delta$, we denote by $\supp(x) \subseteq \mO$ the support of $x$.

	First suppose that the preference relation $\succ_i$ represented by $\phi_i$ satisfies strict maximality, and let $M_i$ be the set of maximal elements of $\Delta$ under $\succ_i$. 
	Fix some $x\in M_i$. 
	For all $y \in \Delta$, by strict maximality we have $\phi_i(x, y) \geq 0$ with $\phi_i(x, y) = 0$ if and only if $y \in M_i$.
	Therefore,
	\[
		M_i = \arg\min_{y \in \Delta} \phi_i(x,y).
	\]
	Since $y\mapsto \phi_i(x,y)$ is linear and non-negative on the simplex $\Delta$, the set $M_i$ which minimizes this function must be a face of $\Delta$. 
	Hence, there exists a non-empty set $T\subseteq O$ of objects such that
	\[
		M_i=\operatorname{conv}\{e_o\colon o\in T\}.
	\]
	We now derive the matrix form by reordering the objects in $O$ so that objects in $T$ precede those in $O \setminus T$. 
	For any two objects $o,o'\in T$, $e_o,e_{o'}\in M_i$, so the preceding argument implies $\phi_i(e_o,e_{o'})=0$.
	Thus, the block of $\phi_i$ corresponding to objects in $T$ is the zero matrix. 
	If $o\in T$ and $o'\notin T$, then $e_o\in M_i$ while $e_{o'}\notin M_i$. By strict maximality, $\phi_i(e_o,e_{o'})>0$.
	Therefore, after reordering, the SSB matrix $\phi_i$ has the form
	\[
		\phi_i=
		\begin{pmatrix}
		0 & Q\\
		-Q^\top & C
		\end{pmatrix},
	\]
	where every entry of $Q$ is strictly positive. The lower-right block $C$ is arbitrary but skew-symmetric, because $\phi_i$ is skew-symmetric.

	Conversely, suppose that the SSB matrix $\phi_i$ has the form
	\[
		\phi_i=
		\begin{pmatrix}
		0 & Q\\
		-Q^\top & C
		\end{pmatrix},
	\]
	where every entry of $Q$ is strictly positive, and let $T \subseteq O$ be the set of objects corresponding to the top-left zero block.
	
	Let $x \in \operatorname{conv}\{e_o:o\in T\}$. We claim that for all $y \in \Delta$, we have $x \succsim_i y$ with $x \sim_i y$ if and only if $y \in \operatorname{conv}\{e_o:o\in T\}$.
	Note that since $\supp(x) \subseteq T$, for all $o \in \supp(x)$ and $\hat{o} \in \supp(y)$, we have $\phi_i(e_o, e_{\hat o}) \geq 0$ with $\phi_i(e_o, e_{\hat o}) = 0$ if and only if $\hat{o} \in T$.
	Hence, the expansion
	\begin{align*}
		\phi_i(x,y) &= \sum_{o\in \supp(x)}\sum_{\hat o \in \supp(y)}x(o) y(\hat o)\phi_i(e_o,e_{\hat o})
	\end{align*}
	gives $\phi_i(x, y) \geq 0$ with $\phi_i(x, y) = 0$ if and only if $\supp(y) \subseteq T$, i.e., $y \in \operatorname{conv}\{e_o:o\in T\}$.

	Therefore, we may conclude that $M_i = \operatorname{conv}\{e_o:o\in T\}$, and $x \succ_i y$ for all $x \in M_i$ and $y \notin M_i$. 
	That is, strict maximality is satisfied.
\end{proof}

We now come to the proof of \Cref{theorem:existence-with-cost-minimization}: assuming SSB preferences, pseudo-market equilibria with cost-minimization always exist.

\CostMinimizationExistence*

Most of the proof is identical to that of \Cref{theorem:market_existence}. 
Specifically, the price set $P$ and price-update correspondence $\Pi$ are defined identically. 
The only difference is that we refine the demand correspondences $D_i$ by enforcing cost-minimization. 
Therefore, it suffices to prove that the refined correspondence $\widehat{D}_i(\cdot)$ is upper hemicontinuous for each agent $i \in N$.

\begin{lemma}\label{lemma:ssb_budget_spending}
	Suppose the preference relation $\succ_i$ is represented by an SSB utility function $\phi_i$, and let $p \in P$ be any price vector.
	If $y\in D_i(p)$ and $p\cdot y<b_i$, then $y \in M_i$.
\end{lemma}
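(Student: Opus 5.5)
We argue by contraposition, using bilinearity of $\phi_i$ in the same way as the second case in the proof of \Cref{proposition:equilibrium-sefficiency}. Suppose $y\in D_i(p)$ with $p\cdot y<b_i$, and assume towards a contradiction that $y\notin M_i$. Then there is some $m\in\Delta$ with $m\succ_i y$, i.e., $\phi_i(m,y)>0$. A maximal element of $\Delta$ need not be strictly preferred to $y$ without strict maximality, so we simply use any such $m$. It exists because $y$ is not maximal in $\Delta$ and $\succsim_i$ is complete.

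For $\varepsilon\in(0,1)$ we set $z^\varepsilon\coloneqq(1-\varepsilon)y+\varepsilon m$. Bilinearity and skew-symmetry give $\phi_i(y,y)=0$, so
\[
\phi_i(z^\varepsilon,y)=(1-\varepsilon)\phi_i(y,y)+\varepsilon\phi_i(m,y)=\varepsilon\phi_i(m,y)>0 .
\]
Hence $z^\varepsilon\succ_i y$ for every $\varepsilon\in(0,1)$.

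It remains to ensure affordability. Since $p\cdot z^\varepsilon=(1-\varepsilon)\,p\cdot y+\varepsilon\, p\cdot m$ and $p\cdot y<b_i$, choosing $\varepsilon$ small enough, for instance $\varepsilon<(b_i-p\cdot y)/(1+p\cdot m)$, yields $p\cdot z^\varepsilon\le b_i$. The lottery $z^\varepsilon$ lies in $\Delta$ by convexity, so $z^\varepsilon\in B_i(p)$. This contradicts $y\in D_i(p)$, because $y$ must be weakly preferred to every element of $B_i(p)$. Therefore $y\in M_i$.

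There is no real obstacle here. The only point needing care is that the argument must not rely on strict maximality, which is not assumed in this lemma. We therefore pick $m$ strictly preferred to $y$ directly, rather than taking $m\in M_i$, and use $\phi_i(y,y)=0$ in place of the weak comparison used earlier.
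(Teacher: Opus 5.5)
Your proof is correct and is the same argument as the paper's. Assuming some lottery strictly preferred to $y$, you mix $y$ toward it with small weight, use bilinearity and $\phi_i(y,y)=0$ to get a strict improvement, and use the budget slack $p\cdot y<b_i$ to keep the mixture affordable, which contradicts $y\in D_i(p)$; your explicit bound on $\varepsilon$ only makes the paper's ``sufficiently small'' step concrete.
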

\begin{proof}
	Suppose, towards a contradiction, that there exists $z\in\Delta$ such that $z \succ_i y$.
	For $\varepsilon\in(0,1)$, define $y^\varepsilon \coloneqq (1-\varepsilon)y+\varepsilon z$.
	By bilinearity of $\phi_i$,
	\[
		\phi_i(y^\varepsilon,y)
		= (1-\varepsilon)\phi_i(y,y)+\varepsilon\phi_i(z,y)
		= \varepsilon\phi_i(z,y)>0.
	\]
	Hence $y^\varepsilon\succ_i y$. 
	Since $p\cdot y<b_i$, we can choose $\varepsilon>0$ sufficiently small so that $p\cdot y^\varepsilon<b_i$.
	Thus $y^\varepsilon\in B_i(p)$ and $y^\varepsilon\succ_i y$, contradicting $y\in D_i(p)$. Therefore, $y \in M_i$.
\end{proof}

Note that for all $p \in P$, the non-emptiness, compactness and convexity of $\widehat{D}_i(p)$ carry over immediately from the corresponding properties of $D_i(p)$, since $\widehat D_i(p)$ is the argmin set of the continuous linear function $x\mapsto p\cdot x$ on the non-empty compact convex set $D_i(p)$.

\begin{lemma}
	Suppose the preference relation $\succ_i$ is represented by an SSB utility function $\phi_i$. The correspondence
	\[
		\widehat D_i(p) \coloneqq \arg\min_{x\in D_i(p)}p\cdot x
	\]	
	is upper hemicontinuous.
\end{lemma}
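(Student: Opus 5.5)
The plan is to verify upper hemicontinuity through the closed-graph criterion. Since $\widehat D_i$ takes values in the compact set $\Delta$, closed graph together with compact values is equivalent to upper hemicontinuity. So I fix a sequence $p^k \to p$ in $P$ and lotteries $x^k \in \widehat D_i(p^k)$ with $x^k \to x$, and aim to show $x \in \widehat D_i(p)$. Because $\widehat D_i(p^k) \subseteq D_i(p^k)$ and $D_i$ has a closed graph by \Cref{lemma:maximal_set_upper_hc}, I immediately get $x \in D_i(p)$. What remains is cost-minimality of $x$ within $D_i(p)$.

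I argue by contradiction and suppose there is $y \in D_i(p)$ with $p\cdot y < p \cdot x$. The natural move would be to approximate $y$ by elements of $D_i(p^k)$. However, $D_i$ need not be lower hemicontinuous, and I expect this to be the main obstacle. I avoid it using the SSB structure via \Cref{lemma:ssb_budget_spending}. Since $x \in B_i(p)$, we have $p\cdot y < p\cdot x \le b_i$, so $y$ does not exhaust its budget, and the lemma yields $y \in M_i$. Thus $y$ is maximal in all of $\Delta$, not just in $B_i(p)$.

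Global maximality makes the approximation trivial. Because $p\cdot y < b_i$ and $p^k \to p$, we have $p^k \cdot y < b_i$ for all sufficiently large $k$, hence $y \in B_i(p^k)$. Since $y \in M_i$, it is weakly preferred to every element of $\Delta \supseteq B_i(p^k)$, so $y \in D_i(p^k)$ for all large $k$. Cost-minimality of $x^k$ in $D_i(p^k)$ then gives $p^k\cdot x^k \le p^k \cdot y$ for all large $k$. Passing to the limit by continuity of the inner product yields $p\cdot x \le p\cdot y$, which contradicts $p\cdot y < p\cdot x$. Hence $x \in \widehat D_i(p)$, so the graph is closed and $\widehat D_i$ is upper hemicontinuous.
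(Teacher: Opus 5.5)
Your proposal is correct and follows essentially the same route as the paper: the closed-graph criterion, $x\in D_i(p)$ via \Cref{lemma:maximal_set_upper_hc}, and \Cref{lemma:ssb_budget_spending} to place any cheaper competitor $y$ in $M_i$, so that $y\in D_i(p^k)$ for all large $k$. The only cosmetic difference is where the contradiction appears: you pass to the limit in $p^k\cdot x^k\le p^k\cdot y$, whereas the paper obtains $p^k\cdot y<p^k\cdot x^k$ directly for large $k$.
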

\begin{proof}
	Let $(p^k)$ be a convergent sequence in $P$ with $p^k \to p$ as $k \to \infty$, $x^k \in \widehat D_i(p^k)$ for each $k$, and suppose $x^k \to x$ as $k \to \infty$. 
	Recall that the correspondence $D_i(\cdot)$ is upper hemicontinuous by \Cref{lemma:maximal_set_upper_hc}.
	Since $x^k \in \widehat D_i(p^k)\subseteq D_i(p^k)$ and $D_i(\cdot)$ is upper hemicontinuous, we have $x\in D_i(p)$.

	We show that $x\in\widehat D_i(p)$. 
	Suppose, towards a contradiction, that there exists $y\in D_i(p)$ with $p\cdot y<p\cdot x$.
	Since $x\in B_i(p)$, we have $p\cdot x\le b_i$, and therefore $p\cdot y<b_i$.
	By \Cref{lemma:ssb_budget_spending}, $y \in M_i$, i.e., $y$ is maximal with respect to $\succ_i$ in the full simplex $\Delta$. 
	Hence, $y \in D_i(p')$ for every price vector $p' \in P$ with $y \in B_i(p')$.

	By continuity of the inner product, we have as $k \to \infty$,
	\[
		p^k\cdot y\to p\cdot y
		\qquad \text{and} \qquad
		p^k\cdot x^k\to p\cdot x.
	\]
	Since $p\cdot y<p\cdot x \leq b_i$, we have $p^k\cdot y<p^k\cdot x^k \leq b_i$, i.e., $y\in B_i(p^k)$, for sufficiently large $k$.
	Since $y \in M_i$, we also have $y\in D_i(p^k)$ for such $k$. 
	But $x^k\in\widehat D_i(p^k)$, so $x^k$ is a cost-minimizing element of $D_i(p^k)$. 
	This contradicts
	\[
		y\in D_i(p^k)
		\qquad \text{and} \qquad
		p^k\cdot y<p^k\cdot x^k.
	\]
	Hence, $x\in\widehat D_i(p)$ and the graph of $\widehat{D}_i(\cdot)$ is closed. 
	Since $\Delta$ is compact, $\widehat{D}_i(\cdot)$ is upper hemicontinuous.
\end{proof}

\begin{proof}[Proof of \Cref{theorem:existence-with-cost-minimization}]
	Define the correspondence $\widehat{\Gamma}: \Delta^n \times S \rightrightarrows \Delta^n \times S$ by 
	\[
		\widehat{\Gamma}(X, s) \coloneqq \widehat{D}(f(s)) \times \Pi(X),
	\]
	where $\widehat{D} \colon P \rightrightarrows \Delta^n$ is the aggregate demand correspondence given by
	\[\widehat{D}(p) \coloneqq \prod_{i \in N} \widehat{D}_i(p).\]
	Since $\widehat{D}_i(\cdot)$ is non-empty, compact-valued, convex-valued and upper hemicontinuous, so is $\widehat D$ and therefore also $\widehat \Gamma$.
	By Kakutani's fixed-point theorem, there exists a fixed point $(X^\ast, s^\ast)$ of $\widehat \Gamma$.
	Since $X^\ast \in \widehat{D}(f(s^\ast)) \subseteq D(f(s^\ast))$, the same argument as in the proof of \Cref{theorem:market_existence} implies that $(X^\ast, f(s^\ast))$ is a pseudo-market equilibrium.
	By definition, $X^\ast \in \widehat{D}(f(s^\ast))$ means that this pseudo-market equilibrium also satisfies cost-minimization.
\end{proof}

\begin{example}[Pseudo-market equilibrium with cost-minimization need not exist in general]
	\label{example:non-existence-cost-minimization}
	Let $O=\{a,b\}$, $n = 2$ with equal budgets $b_1 = b_2 = \nicefrac12$, and let both agents have identical preferences represented by the following utility function.
	Given $x \in \Delta$, define \[u(x) = \max \{0, x(a) - \nicefrac{1}{2}\}.\]
	That is, each agent is indifferent between all lotteries with $x(a) \leq \nicefrac12$, and when $x(a) > \nicefrac12$, has utility strictly increasing in $x(a)$.
	
	We claim that there exists no pseudo-market equilibrium with cost-minimization.
	For $n = 2$, the price set $P$ is given by \[P = \{p \in \R^n \colon p_a = 0 \text{ and } p_b \in [0, 2]\} \cup \{p \in \R^n \colon p_b = 0 \text{ and } p_a \in [0, 2]\}.\]

	Note that for any price vector $p \in P$ with $p_a = 0$, the lottery $e_a$ is in $B_i(p)$ and, hence, is the unique maximal affordable lottery for both agents $i \in \{1, 2\}$.
	Both agents then select $x^\ast_i = e_a$, and the market-clearing condition fails.
	The same is true if $p_b = 0$ and $p_a \in [0, \nicefrac12]$.
	The only case remaining is where $p_b = 0$ and $p_a \in (\nicefrac12, 2]$; note that in this case, each budget set $B_i(p)$ is given by $\left\{x \in \Delta: x(a) \leq \frac{1}{2p_a}\right\}$.
	If $p_a \in (\nicefrac12, 1)$, then $\frac{1}{2p_a} \in (\nicefrac12, 1)$, and hence $x^\ast_i = \left(\frac{1}{2p_a}, 1 - \frac{1}{2p_a}\right)$ is the unique maximal affordable lottery for each agent.
	Once again, both agents selecting the same lottery violates market-clearing.
	On the other hand, if $p_a \in [1, 2]$, then $\frac{1}{2p_a} \in [\nicefrac14, \nicefrac12]$.
	Then, each agent $i$ is indifferent among all lotteries in $B_i(p)$.
	Assuming cost-minimization, both agents select the uniquely cheapest lottery $e_b$, once again violating the market-clearing condition.

	Hence, no pseudo-market equilibrium with cost-minimization exists for this preference profile.
	Note, for example, that the assignment where both agents receive the lottery $x^\ast_i = (\nicefrac12, \nicefrac12)$ with price vector $p = (1, 0)$, forms a pseudo-market equilibrium but violates cost-minimization.
\end{example}

\section{Efficiency and $\alpha$-Envy-Freeness for SSB Preferences} \label{appendix:alpha-envy-free}

\FixedPointExistence*
\begin{proof}
	We apply Kakutani's fixed-point theorem.
	First note that the domain $\mathcal{M} \times \Omega_\varepsilon$ is non-empty, compact and convex.
	The second component $\{g(X, \omega)\}$ is by definition a singleton set, so in particular, non-empty and convex for all $X$ and $\omega$.
	Moreover, we claim that the map $(X, \omega) \mapsto g(X, \omega)$ is continuous and hence has a closed graph.
	To see this, note that for each pair of agents $i, j \in N$, the map $X \mapsto \phi_i(x_j, x_i)$ is continuous, and hence so is $\nu_i$, as the maximum of finitely many continuous functions.
	Therefore, $\mu(X, \omega)$ is continuous in both $X$ and $\omega$, and so is $g(X, \omega)$, which simply applies the continuous projection $\operatorname{proj}_{\Omega_\varepsilon}$.
	This means that for the Kakutani conditions to hold, it suffices to show that for all $\omega \in \Omega_\varepsilon$, the set $F(\omega)$ is non-empty, convex, and that the graph of $\omega \mapsto F(\omega)$ is closed.

	Consider the symmetric two-player zero-sum game in which the pure strategies of both players are the deterministic assignments, so that mixed strategies correspond to random assignments in $\mathcal{M}$, and the payoffs are given by the weighted aggregate SSB function $\Phi_\omega(X, Y) \coloneqq \sum_{i \in N} \omega_i \phi_i(x_i, y_i)$.
	Then, the set $F(\omega)$ corresponds exactly to the set of mixed maximin strategies of this game, and as such is non-empty and convex by the minimax theorem.
	To see that the graph $\{(\omega, F(\omega))\}$ is closed, let $(\omega^k)$ and $(X^k)$ be convergent sequences in $\Omega_\varepsilon$ and $\mathcal{M}$ respectively, with $\omega^k \to \omega$, $X^k \to X$, and $X^k \in F(\omega^k)$ for all $k$.
	For each $Y \in \mathcal{M}$ and each $k$, we have $\sum_{i \in N} \omega^k_i \phi_i(x^k_i, y_i) \geq 0$.
	Since $\Phi_\omega(X, Y)$ is continuous in both $X$ and $\omega$, we have $\sum_{i \in N} \omega_i \phi_i(x_i, y_i) = \lim_{k \to \infty} \sum_{i \in N} \omega^k_i \phi_i(x^k_i, y_i) \geq 0$.
	Hence, $X \in F(\omega)$, and $\omega \mapsto F(\omega)$ has a closed graph.

	We have proved that the correspondence $\Gamma$ satisfies the conditions of the Kakutani fixed-point theorem. 
	Therefore, there exists a fixed point $(X^\ast, \omega^\ast)$ as required.
\end{proof}

\FixedPointEnvyFreeness*

Towards proving \Cref{proposition:fixed-point-envy-freeness}, first define the following instance-dependent parameter $\sigma$, which is the largest SSB matrix entry taken over all agents.
\begin{equation*}
	\sigma \coloneqq \max_{i \in N, o, \hat o \in O} \phi_i(o, \hat o),
\end{equation*}
and let
\begin{equation*}
	\rho = \min\left\{\nicefrac{1}{2}, \nicefrac{\alpha}{\sigma}\right\}.
\end{equation*}
Note that $\sigma > 0$ unless every agent is completely indifferent between all lotteries, in which case every random assignment is envy-free and \Cref{theorem:alpha-envy-free} holds trivially; we therefore assume $\sigma > 0$ in the following.

\begin{lemma} \label{lemma:weight-ratio}
	Let $X \in F(\omega)$ for some $\omega \in \Omega_\varepsilon$. 
	For any two agents $i, j \in N$, $\phi_i(x_j, x_i) > \alpha$ implies 
	\begin{equation*}
		\omega_j > \rho \omega_i.
	\end{equation*}
	In other words, $\omega_j \leq \rho \omega_i$ implies $\phi_i(x_j, x_i) \leq \alpha$.
\end{lemma}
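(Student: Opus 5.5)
We argue by contradiction via a swap deviation. Fix $X \in F(\omega)$ and suppose $\phi_i(x_j,x_i) > \alpha$ but $\omega_j \le \rho\,\omega_i$. Since $X \in F(\omega)$, we have $\sum_{k} \omega_k \phi_k(x_k, y_k) \ge 0$ for every $Y \in \mathcal{M}$. We test this against the random assignment $Y$ obtained from $X$ by exchanging rows $i$ and $j$. That is, $y_i = x_j$, $y_j = x_i$, and $y_k = x_k$ for all $k \notin \{i,j\}$. Swapping two rows of a bistochastic matrix leaves it bistochastic, so $Y \in \mathcal{M}$.

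For every $k \notin\{i,j\}$ we have $\phi_k(x_k,y_k) = \phi_k(x_k,x_k) = 0$ by skew-symmetry. The welfare inequality therefore reduces to
\[
	\omega_i \phi_i(x_i, x_j) + \omega_j \phi_j(x_j, x_i) \ge 0 .
\]
By skew-symmetry the first term equals $-\omega_i \phi_i(x_j,x_i) < -\omega_i \alpha$. For the second term we use bilinearity: $\phi_j(x_j,x_i) = \sum_{o,\hat o} x_j(o)x_i(\hat o)\phi_j(o,\hat o)$ is a convex combination of matrix entries, so it is at most $\sigma$. Since $\omega_j>0$, the second term is at most $\omega_j \sigma \le \rho\,\omega_i \sigma \le \omega_i \alpha$, where the last step uses $\rho \le \alpha/\sigma$.

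Adding the two bounds gives
\[
	\omega_i\phi_i(x_i,x_j) + \omega_j\phi_j(x_j,x_i) < -\omega_i\alpha + \omega_i\alpha = 0,
\]
which contradicts membership in $F(\omega)$. Hence $\omega_j > \rho\,\omega_i$. The contrapositive is the second formulation in the statement.

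The only step needing care is the direction of the inequalities. We need the strict inequality $\phi_i(x_j,x_i) > \alpha$ together with $\omega_i \ge \varepsilon > 0$, so that $-\omega_i\phi_i(x_j,x_i) < -\omega_i\alpha$ holds strictly and the final sum is strictly negative. We also need $\phi_j(x_j,x_i) \le \sigma$, where $\sigma$ is the largest entry over all agents, including agent $j$. The factor $\nicefrac12$ in $\rho$ is not needed here and is presumably reserved for later use.
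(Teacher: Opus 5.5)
Your proof is correct and follows the paper's argument: you test $X \in F(\omega)$ against the row-swapped assignment, reduce to $\omega_i \phi_i(x_i,x_j) + \omega_j \phi_j(x_j,x_i) \ge 0$, and combine $\phi_j(x_j,x_i) \le \sigma$ with $\rho \le \alpha/\sigma$. The only difference is that the paper divides by $\phi_j(x_j,x_i)$ and so must argue separately that this quantity is positive, whereas your direct bound on the second term avoids that step; the case $i=j$, which you leave implicit, is vacuous because $\phi_i(x_i,x_i)=0<\alpha$.
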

\begin{proof}
	Recall that $\rho$ is defined as $\min\{1/2, \alpha/\sigma\}$.
	We will prove the claimed inequality in the case where $\rho = \alpha / \sigma$.
	The case where $\alpha / \sigma > 1/2$ and $\rho = 1/2$ then follows from the same argument, since $\rho \leq \nicefrac{\alpha}{\sigma}$.
	The case where $i = j$ is also immediate, so we assume $i \neq j$.

	Let $X \in F(\omega)$, $i \neq j$, and assume $\phi_i(x_j, x_i) > \alpha$.
	Let $X^{ij}$ be the random assignment obtained from $X$ by swapping the rows corresponding to agents $i$ and $j$.
	Since $X \in F(\omega)$, we have
	\begin{equation} \label{eq:welfare-maximizing}
		\sum_{k \in N} \omega_k \phi_k(x_k, x^{ij}_k) \geq 0.
	\end{equation}
	Since $\phi_k(x_k, x^{ij}_k)=0$ for all $k \notin \{i, j\}$ and $x^{ij}_i = x_j$ and $x^{ij}_j = x_i$, \Cref{eq:welfare-maximizing} reduces to
	\begin{equation*}
		\omega_i \phi_i(x_i, x_j) + \omega_j \phi_j(x_j, x_i) \geq 0,
	\end{equation*}
	and hence
	\begin{equation*}
		\omega_j \geq \nicefrac{\phi_i(x_j, x_i)}{\phi_j(x_j, x_i)} \omega_i.
	\end{equation*}
	Note that we may divide by $\phi_j(x_j, x_i)$: if $\phi_j(x_j, x_i)$ were negative, the left-hand side of the preceding inequality would be negative, since $\phi_i(x_i, x_j) = -\phi_i(x_j, x_i) < 0$; and if $\phi_j(x_j, x_i) = 0$, then $X^{ij}$ would Pareto dominate $X$, contradicting $X \in F(\omega)$.
	Moreover, since $\phi_i(x_j, x_i) > \alpha$ and $\phi_j(x_j, x_i) \leq \sigma$, we get
	\begin{align*}
        \omega_j > \nicefrac{\alpha}{\sigma} \omega_i = \rho \omega_i
    \end{align*}
	as required.
\end{proof}

We finally set the value of $\varepsilon$, the lower bound on the weights $\omega_i$ in the definition of $\Omega_\varepsilon$.
This value is given by
\begin{equation*}
	\varepsilon \coloneqq \nicefrac{\rho^{n+1}}{n}.
\end{equation*}

Given any random assignment $X$, define the directed \emph{$\alpha$-envy graph} of $X$ to have the set $N$ of agents as its vertex set, and to include a directed edge from agent $i$ to agent $j$ if and only if $\phi_i(x_j, x_i) > \alpha$.
A sink in this graph (a vertex with no outgoing edge) corresponds to an agent who does not envy any other agent by more than $\alpha$.

\begin{lemma} \label{lemma:sink-existence}
	Let $X \in F(\omega)$ for some $\omega \in \Omega_{\varepsilon}$. Then, there exists at least one sink $i^\ast$ in the $\alpha$-envy graph of $X$ which satisfies $\omega_{i^\ast} > \varepsilon$.
\end{lemma}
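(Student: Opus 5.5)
The plan is to start from an agent of maximal weight and follow envy edges, using \Cref{lemma:weight-ratio} to control how fast the weights can decrease along the way. Let $i_0 \in \argmax_{i\in N}\omega_i$. Since $\sum_i \omega_i = 1$, we have $\omega_{i_0} \ge \nicefrac1n$.

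Build a walk $i_0, i_1, i_2, \dots$ in the $\alpha$-envy graph. At each step, if the current vertex $i_k$ is not a sink, pick any outgoing edge $i_k \to i_{k+1}$. By \Cref{lemma:weight-ratio}, an edge $i\to j$ means $\phi_i(x_j,x_i)>\alpha$, which forces $\omega_j > \rho\,\omega_i$. By induction along the walk,
\[
\omega_{i_k} > \rho^k \omega_{i_0} \ge \nicefrac{\rho^k}{n} \quad\text{for } k \ge 1.
\]
For $k\le n$ this gives $\omega_{i_k} \ge \nicefrac{\rho^{n}}{n} > \nicefrac{\rho^{n+1}}{n} = \varepsilon$, using $\rho\le\nicefrac12<1$. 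So every vertex reached within $n$ steps has weight strictly above $\varepsilon$. This also holds for $k=0$.

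The remaining issue, and the main obstacle, is to rule out that the walk never reaches a sink, i.e.\ that it cycles. Suppose the envy graph contains a directed cycle $j_1\to j_2\to\cdots\to j_m\to j_1$ with $m\ge2$; there are no self-loops since $\phi_i(x_i,x_i)=0$. I would take the cyclic permutation $Y$ in which each agent $j_\ell$ receives $x_{j_{\ell+1}}$ and every other agent keeps her lottery. This $Y$ is a random assignment, because permuting rows of a bistochastic matrix keeps it bistochastic. Each agent on the cycle strictly prefers her new lottery, since $\phi_{j_\ell}(x_{j_{\ell+1}},x_{j_\ell})>\alpha>0$, and everyone else is indifferent. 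Then
\[
\sum_k \omega_k\phi_k(x_k,y_k) = -\sum_\ell \omega_{j_\ell}\phi_{j_\ell}(x_{j_{\ell+1}},x_{j_\ell}) < 0,
\]
because all weights are positive. This contradicts $X\in F(\omega)$. Hence the envy graph is acyclic.

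In an acyclic graph on $n$ vertices, any walk consists of distinct vertices, so it reaches a sink $i^\ast = i_k$ after $k\le n-1$ steps. By the bound above, $\omega_{i^\ast} > \varepsilon$, which proves the lemma. The inequality only needs $\rho<1$ and a walk of length at most $n$, so the particular choice $\varepsilon=\nicefrac{\rho^{n+1}}{n}$ has slack to spare.
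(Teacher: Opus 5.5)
Your proof is correct, but it locates the sink differently from the paper. Acyclicity of the $\alpha$-envy graph is established the same way in both. You check directly that the cyclic reassignment makes the weighted sum in the definition of $F(\omega)$ negative, which is slightly more self-contained than the paper's appeal to Pareto dominance.

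After that, the paper argues by pigeonhole. It splits $[\varepsilon,\nicefrac{1}{n})$ into the $n+1$ intervals $[\nicefrac{\rho^{k+1}}{n},\nicefrac{\rho^{k}}{n})$ and picks one that contains no weight. This partitions $N$ into a heavy class $A$ and a light class $D$ with $\omega_j<\rho\,\omega_i$ for all $i\in A$, $j\in D$. \Cref{lemma:weight-ratio} then forbids envy edges from $A$ to $D$. Hence any sink of the acyclic subgraph induced on the non-empty set $A$ is a global sink with weight above $\varepsilon$.

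You instead walk along envy edges from a maximum-weight agent. \Cref{lemma:weight-ratio} limits the loss to a factor $\rho$ per step, and acyclicity caps the walk at $n-1$ steps. Your route is shorter and more elementary. It gives the explicit guarantee $\omega_{i^\ast}\ge\nicefrac{\rho^{n-1}}{n}$, whereas the paper's argument as written guarantees $\nicefrac{\rho^{n}}{n}$, which confirms your remark that $\varepsilon$ has slack.

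The paper's route yields a global structural fact instead of a single path: there is a weight threshold such that no agent above it envies an agent below it by more than $\alpha$. That is more than the lemma, or its use in \Cref{proposition:fixed-point-envy-freeness}, requires.
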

\begin{proof}
	We first argue that the $\alpha$-envy graph is acyclic.
	Indeed, assume that there is a cycle $i_1 \to i_2 \to \dots \to i_r \to i_1$. 
	Then, a Pareto dominating assignment $Y$ can be obtained from $X$ by setting $y_{i_\ell} = x_{i_{\ell + 1}}$ for all $\ell \in \{1, \dots, r-1\}$, $y_{i_r} = x_{i_1}$, and $y_j = x_j$ for all agents not in the cycle.
	This contradicts $X \in F(\omega)$.
	
	Since $\rho \leq 1/2 < 1$, we have $\varepsilon < 1/n$.\footnote{Note that this is important for $\Omega_\varepsilon$ to be non-empty.}
	Since $\omega \in \Omega_\varepsilon$, each component $\omega_i$ is at least $\varepsilon$, and at least one of them is at least $1/n$.
    In other words, at most $n-1$ components $\omega_i$ lie in the interval $[\varepsilon, 1/n)$.
	Consider the partition of this interval into the $n+1$ sub-intervals
    \begin{equation*}
        \left[\nicefrac{\rho^{k+1}}{n}, \nicefrac{\rho^k}{n}\right), \quad k=0,1,\dots,n.
    \end{equation*}
    By the pigeonhole principle, at least one of these sub-intervals contains none of the weights $\omega_i$. 
    Denote this sub-interval by $[\gamma,\gamma/\rho)$, where $\gamma = \nicefrac{\rho^{k+1}}{n}$ is the left endpoint of the chosen sub-interval. 
	
	Further, define $D \coloneqq \{i \in N : \omega_i < \gamma\}$ and $A \coloneqq \{i \in N : \omega_i \geq \gamma/\rho\}$. 
    Since no $\omega_i$ lies in $[\gamma,\gamma/\rho)$, the sets $D$ and $A$ partition the set of agents $N$.
    On top of that, we know that 
	\begin{itemize}
		\item $A$ is non-empty since $\omega_i \geq 1/n$ for some $i \in N$, and 
		\item every $i \in A$ satisfies $\omega_i > \varepsilon$.
	\end{itemize}
	Therefore, it suffices to demonstrate the existence of a sink vertex in $A$.
	To this end, observe that by our previous argument the subgraph induced by $A$ is also acyclic, and hence contains a vertex $i^\ast$ with no outgoing edge within $A$.
	Moreover, for any two agents $i \in A$ and $j \in D$, we have $\omega_j < \gamma \leq \rho \omega_i$, so by \Cref{lemma:weight-ratio}, we have $\phi_i(x_j, x_i) \leq \alpha$.
	Hence, there cannot exist a directed edge in the $\alpha$-envy graph which runs from $A$ to $D$.
	It follows that $i^\ast$ is a sink in the entire $\alpha$-envy graph, and since $i^\ast \in A$, we also get $\omega_{i^\ast} > \varepsilon$.
\end{proof}

Using the above lemma, we can show that whenever there exists an agent who envies another agent by more than $\alpha$, the weight update map $g$ maps to a distinct vector of welfare weights, i.e., $g(X, \omega) \neq \omega$ whenever there are agents $i, j \in N$ with $\phi_i(x_j, x_i) > \alpha$.

\begin{lemma} \label{lemma:weight-decrease}
	Suppose that the random assignment $X$ is not $\alpha$-envy-free. 
	If agent $i^\ast \in N$ is a sink in the $\alpha$-envy graph of $X$ with $\omega_{i^\ast} > \varepsilon$, then $g_{i^\ast}(X, \omega) < \omega_{i^\ast}$.
\end{lemma}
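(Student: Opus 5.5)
The plan is to use the explicit water-filling form of the Euclidean projection onto $\Omega_\varepsilon$, together with two facts: the sink $i^\ast$ receives no weight increase, while the total weight strictly increases before projection.

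\emph{Step 1 (effect of $\mu$).} Since $i^\ast$ is a sink in the $\alpha$-envy graph, we have $\phi_{i^\ast}(x_j,x_{i^\ast})\le\alpha$ for all $j$. Hence $\nu_{i^\ast}(X)=0$ and $\mu_{i^\ast}(X,\omega)=\omega_{i^\ast}$. Since $X$ is not $\alpha$-envy-free, some agent $k$ has $\nu_k(X)>0$. Because $\nu\ge 0$ componentwise, it follows that $\sum_{i}\mu_i(X,\omega)=1+\sum_i\nu_i(X)>1$.

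\emph{Step 2 (structure of the projection).} Next we characterize $\operatorname{proj}_{\Omega_\varepsilon}\mu$ for a vector $\mu$. The problem is to minimize $\tfrac12\norm{\omega'-\mu}^2$ subject to $\omega'_i\ge\varepsilon$ and $\sum_i\omega'_i=1$. This is a strictly convex quadratic program with affine constraints, so the KKT conditions are necessary and sufficient. With multiplier $\tau$ for the equality and $\lambda_i\ge0$ for the bounds, they read $\omega'_i-\mu_i+\tau-\lambda_i=0$ with complementary slackness. Hence there is a scalar $\tau$ such that $g_i=\max\{\varepsilon,\mu_i-\tau\}$ for every $i$, where $\tau$ is determined by $\sum_i g_i=1$.

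\emph{Step 3 ($\tau>0$).} Suppose $\tau\le0$. Then $g_i\ge\mu_i-\tau\ge\mu_i$ for all $i$. Summing gives $1=\sum_i g_i\ge\sum_i\mu_i>1$ by Step 1, a contradiction. Hence $\tau>0$.

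\emph{Step 4 (conclusion).} Combining the previous steps,
\[
g_{i^\ast}(X,\omega)=\max\{\varepsilon,\ \omega_{i^\ast}-\tau\}.
\]
Both arguments of the maximum are strictly smaller than $\omega_{i^\ast}$: the first because $\omega_{i^\ast}>\varepsilon$ by hypothesis, and the second because $\tau>0$. Therefore $g_{i^\ast}(X,\omega)<\omega_{i^\ast}$.

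The only step requiring care is Step 2, the closed form of the projection. I would justify it directly via KKT as above rather than citing a simplex-projection result, since $\Omega_\varepsilon$ is a shifted, scaled simplex. One could alternatively reduce to the standard simplex through the substitution $\omega'=\varepsilon\mathbf 1+(1-n\varepsilon)w$, but the KKT route avoids that bookkeeping.
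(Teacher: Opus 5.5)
Your proposal is correct and matches the paper's proof: both use the thresholded form $g_i(X,\omega)=\max\{\varepsilon,\mu_i(X,\omega)-\tau\}$ of the Euclidean projection onto $\Omega_\varepsilon$, together with $\nu_{i^\ast}(X)=0$, to conclude $g_{i^\ast}(X,\omega)<\omega_{i^\ast}$. The only difference is that you derive this form via KKT and prove $\tau>0$ from $\sum_i\mu_i(X,\omega)>1$, whereas the paper cites both facts as a well-known property of the projection.
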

\begin{proof}
	The assignment $X$ not being $\alpha$-envy-free is equivalent to $\sum_{i \in N} \nu_i(X) > 0$, and $i^\ast$ being a sink vertex is equivalent to $\nu_{i^\ast}(X) = 0$.
	For each $i \in N$, since $\mu_i(X, \omega) = \omega_i + \nu_i(X)$, $\omega_i \geq \varepsilon$ and $\nu_i(X) \geq 0$, we have $\mu_i(X, \omega) \geq \varepsilon$.
	Since $\sum_{i \in N} \omega_i = 1$, we have
	\begin{equation*}
		\sum_{i \in N} \mu_i(X, \omega) = 1 + \sum_{i \in N} \nu_i(X) > 1,
	\end{equation*}
	i.e., $\mu(X, \omega) \notin \Omega_\varepsilon$.
	It is a well-known property of the Euclidean projection onto the truncated probability simplex $\Omega_\varepsilon$ \citep[see, e.g.,][]{WaCa13a} that there exists $\tau > 0$ such that for each $i \in N$,
	\begin{equation*}
		g_i(X, \omega) = \max \{ \varepsilon, \mu_i(X, \omega) - \tau \},
	\end{equation*}
	i.e., the Euclidean projection onto $\Omega_\varepsilon$ subtracts a uniform strictly positive threshold from each component $\mu_i(X, \omega)$ under the constraints of $\Omega_\varepsilon$.
	Since $\nu_{i^\ast}(X) = 0$, we have $\mu_{i^\ast}(X, \omega) = \omega_{i^\ast}$, and hence
	\begin{equation*}
		g_{i^\ast}(X, \omega) = \max \{ \varepsilon, \omega_{i^\ast} - \tau \}.
	\end{equation*}
	Since $\omega_{i^\ast} > \varepsilon$ and $\tau > 0$, it follows that $g_{i^\ast}(X, \omega) < \omega_{i^\ast}$.
\end{proof}

Finally, we are ready to prove \Cref{proposition:fixed-point-envy-freeness}.

\begin{proof}[Proof of \Cref{proposition:fixed-point-envy-freeness}]
	Let $(X^\ast, \omega^\ast)$ be a fixed point of the correspondence $\Gamma$.
	Assume for contradiction that $X^\ast$ is not $\alpha$-envy-free, i.e., there exist agents $i, j \in N$ such that $\phi_i(x^\ast_j, x^\ast_i) > \alpha$.
	Hence, $\nu_i(X^\ast) > 0$, and $\sum_{k \in N} \nu_k(X^\ast) > 0$.
	Since $X^\ast \in F(\omega^\ast)$, by \Cref{lemma:sink-existence} there exists an agent $i^\ast$ with $\omega^\ast_{i^\ast} > \varepsilon$ which is also a sink vertex in the $\alpha$-envy graph of $X^\ast$, i.e., $\nu_{i^\ast}(X^\ast) = 0$.
	Applying \Cref{lemma:weight-decrease} gives $g_{i^\ast}(X^\ast, \omega^\ast) < \omega^\ast_{i^\ast}$, which contradicts that $(X^\ast, \omega^\ast)$ is a fixed point of $\Gamma$.
	Therefore, $X^\ast$ is $\alpha$-envy-free.
\end{proof}

We now provide a proof for \Cref{proposition:eff-not-closed}.

\SSBEfficientSetOpen*

\begin{proof}
	\begin{itemize}
	
	\item[\emph{(i)}] Let $n = 4$.
	Consider the following profile with ordinal preferences over objects in $O = \{a, b, c, d\}$, that are extended to $\Delta$ via the \pc relation (which constitutes a special case of SSB preferences; see \Cref{section:ordinal-ra}): 
		\begin{align*}
			1&: a \succ b \succ c \succ d \\
			2&: a \succ b \succ d \succ c \\
			3&: a \succ b \succ d \succ c \\
			4&: d \succ a \succ c \succ b
		\end{align*} 

	For each $\delta \in (0, \nicefrac{1}{10})$, define the weight vector $\omega^\delta \in \mathbb{R}^n_{>0}$ as 
	\begin{equation*}
		\omega^\delta = \left(\frac{1-\delta}{3}, \frac{1-\delta}{3}, \frac{1-\delta}{3}, \delta\right).
	\end{equation*}

	We now construct a convergent sequence $\left(X^\delta\right)_{\delta \in (0, \nicefrac{1}{10})}$ of random assignments such that $X^\delta \to X$ as $\delta \to 0$, and also $X^\delta \in F(\omega^\delta)$ for each $\delta \in (0, \nicefrac{1}{10})$.
	By \Cref{theorem:efficiency-welfare}, this implies that each $X^\delta$ is efficient.
	Though not required to prove the statement, we remark that our construction also illustrates \Cref{theorem:alpha-envy-free}: each $X^\delta$ is $O(\delta)$-envy-free, and hence the limit assignment $X$ is exactly envy-free.
	However, we claim that $X$ is not efficient.

	For each $\delta \in (0, \nicefrac{1}{10})$, define the random assignment 
		\begin{equation*}
			X^\delta \coloneqq
			\begin{pmatrix}
			\frac{1 - 7\delta}{3(1-\delta)} & \frac{1 + 5\delta}{3(1-\delta)} & \frac{1}{3} & 0\\
			\frac{1 + 2\delta}{3(1-\delta)} & \frac{1 - 4\delta}{3(1-\delta)} & 0 & \frac{1}{3}\\
			\frac{1 + 2\delta}{3(1-\delta)} & \frac{1 - 4\delta}{3(1-\delta)} & 0 & \frac{1}{3}\\
			0 & 0 & \frac{2}{3} & \frac{1}{3}
			\end{pmatrix}.
		\end{equation*}

	To prove each $X^\delta \in F(\omega^\delta)$, it is required to show that \[\sum_{i \in N} \omega^\delta_i \phi_i(x^\delta_i, y_i) \geq 0\] for every random assignment $Y \in \mathcal{M}$.
	We claim that this inequality holds for all $\delta \in (0, \nicefrac{1}{10})$ when $Y$ is each of the 24 deterministic assignments, omitting the explicit computations for brevity. 
	By bilinearity, this suffices to show that $X^\delta \in F(\omega^\delta)$, and hence that $X^\delta$ is efficient.
	As $\delta \to 0$, we have 
	\begin{equation*}
		X^\delta \to X = 
			\begin{pmatrix}
				\nicefrac{1}{3} & \nicefrac{1}{3} & \nicefrac{1}{3} & 0\\
				\nicefrac{1}{3} & \nicefrac{1}{3} & 0 & \nicefrac{1}{3}\\
				\nicefrac{1}{3} & \nicefrac{1}{3} & 0 & \nicefrac{1}{3}\\
				0 & 0 & \nicefrac{2}{3} & \nicefrac{1}{3}
			\end{pmatrix}.
	\end{equation*}
	To see that $X$ is not efficient, consider the random assignment
	\begin{equation*}
		Y = 
			\begin{pmatrix}
				\nicefrac{1}{2} & 0 & \nicefrac{1}{2} & 0\\
				0 & 1 & 0 & 0\\
				\nicefrac{1}{2} & 0 & 0 & \nicefrac{1}{2}\\
				0 & 0 & \nicefrac{1}{2} & \nicefrac{1}{2}
			\end{pmatrix}.
	\end{equation*}
	Then,
	\begin{equation*}
			\phi_1(y_1,x_1)=0,
			\qquad
			\phi_2(y_2,x_2)=0,
			\qquad
			\phi_3(y_3,x_3)=0,
			\qquad
			\phi_4(y_4,x_4)=\nicefrac{1}{6}.
		\end{equation*}
	Hence, $Y$ Pareto dominates $X$.

	We have exhibited a convergent sequence $\left(X^\delta\right)$ of efficient random assignments which converges to a Pareto dominated random assignment $X$.
	Hence, the set of efficient random assignments is not closed. 

	We also remark, as mentioned above, that each $X^\delta$ is $O(\delta)$-envy-free, and hence the limit assignment $X^\ast$ is exactly envy-free.
	We omit the proof of this fact.

	\item[\emph{(ii)}] Let $(X^k)$ be a convergent sequence of weakly efficient random assignments with $X^k \to X$ as $k \to \infty$.
    Assume for contradiction that $X$ is not weakly efficient, and let $Y$ be a random assignment such that $\phi_i(x_i, y_i) < 0$ for all agents $i \in N$.
    Since the map $X' \mapsto \phi_i(x'_i, y_i)$ is continuous for each $i \in N$, there exists a sufficiently large integer $k$ such that $\phi_i(x^k_i, y_i) < 0$ for each $i \in N$.
    This contradicts the weak efficiency of $X^k$.
	\end{itemize}
\end{proof}


\begin{thebibliography}{78}
\providecommand{\natexlab}[1]{#1}
\providecommand{\url}[1]{\texttt{#1}}
\expandafter\ifx\csname urlstyle\endcsname\relax
  \providecommand{\doi}[1]{doi: #1}\else
  \providecommand{\doi}{doi: \begingroup \urlstyle{rm}\Url}\fi

\bibitem[Abdulkadiro{\u g}lu and S{\"o}nmez(1998)]{AbSo98a}
A.~Abdulkadiro{\u g}lu and T.~S{\"o}nmez.
\newblock Random serial dictatorship and the core from random endowments in
  house allocation problems.
\newblock \emph{Econometrica}, 66\penalty0 (3):\penalty0 689--701, 1998.

\bibitem[Allais(1953)]{Alla53a}
M.~Allais.
\newblock Le comportement de l'homme rationnel devant le risque: Critique des
  postulats et axiomes de l'ecole americaine.
\newblock \emph{Econometrica}, 21\penalty0 (4):\penalty0 503--546, 1953.

\bibitem[Anand(1993)]{Anan93a}
P.~Anand.
\newblock The philosophy of intransitive preference.
\newblock \emph{The Economic Journal}, 103\penalty0 (417):\penalty0 337--346,
  1993.

\bibitem[Anand(2009)]{Anan09a}
P.~Anand.
\newblock Rationality and intransitive preference: {F}oundations for the modern
  view.
\newblock In P.~Anand, P.~K. Pattanaik, and C.~Puppe, editors, \emph{The
  Handbook of Rational and Social Choice}, chapter~6. Oxford University Press,
  2009.

\bibitem[Arrow(1951)]{Arro51a}
K.~J. Arrow.
\newblock \emph{Social Choice and Individual Values}.
\newblock New Haven: Cowles Foundation, 1st edition, 1951.
\newblock 2nd edition 1963.

\bibitem[Arrow and Debreu(1954)]{ArDe54a}
K.~J. Arrow and G.~Debreu.
\newblock Existence of an equilibrium for a competitive economy.
\newblock \emph{Econometrica}, 22\penalty0 (3):\penalty0 265--290, 1954.

\bibitem[Azevedo and Budish(2019)]{AzBu19a}
E.~M. Azevedo and E.~Budish.
\newblock Strategyproofness in the large.
\newblock \emph{Review of Economic Studies}, 86\penalty0 (1):\penalty0 81--116,
  2019.

\bibitem[Aziz et~al.(2013)Aziz, Brandt, and Stursberg]{ABS13a}
H.~Aziz, F.~Brandt, and P.~Stursberg.
\newblock On popular random assignments.
\newblock In \emph{Proceedings of the 6th International Symposium on
  Algorithmic Game Theory (SAGT)}, volume 8146 of \emph{Lecture Notes in
  Computer Science (LNCS)}, pages 183--194. Springer-Verlag, 2013.

\bibitem[Aziz et~al.(2015)Aziz, Brandl, and Brandt]{ABB14b}
H.~Aziz, F.~Brandl, and F.~Brandt.
\newblock Universal {P}areto dominance and welfare for plausible utility
  functions.
\newblock \emph{Journal of Mathematical Economics}, 60:\penalty0 123--133,
  2015.

\bibitem[Aziz et~al.(2018)Aziz, Brandl, Brandt, and Brill]{ABBB15a}
H.~Aziz, F.~Brandl, F.~Brandt, and M.~Brill.
\newblock On the tradeoff between efficiency and strategyproofness.
\newblock \emph{Games and Economic Behavior}, 110:\penalty0 1--18, 2018.

\bibitem[{Bar-Hillel} and Margalit(1988)]{BaMa88a}
M.~{Bar-Hillel} and A.~Margalit.
\newblock How vicious are cycles of intransitive choice?
\newblock \emph{Theory and Decision}, 24\penalty0 (2):\penalty0 119--145, 1988.

\bibitem[Berge(1963)]{Berg63a}
C.~Berge.
\newblock \emph{Topological Spaces: Including a Treatment of Multi-Valued
  Functions, Vector Spaces and Convexity}.
\newblock Oliver and Boyd, 1963.

\bibitem[Bergstrom(1992)]{Berg92a}
T.~C. Bergstrom.
\newblock When non-transitive relations take maxima and competitive equilibrium
  can't be beat.
\newblock In W.~Neuefeind and R.~G. Riezmann, editors, \emph{Economic Theory
  and International Trade (Essays in Memoriam of J. Trout Rader)}, pages
  29--52. Springer-Verlag, 1992.

\bibitem[Birkhoff(1946)]{Birk46a}
G.~Birkhoff.
\newblock Three observations on linear algebra.
\newblock \emph{Univ. Nac. Tacuman Rev. Ser. A}, 5:\penalty0 147--151, 1946.

\bibitem[Blavatskyy(2006)]{Blav06a}
P.~R. Blavatskyy.
\newblock Axiomatization of a preference for most probable winner.
\newblock \emph{Theory and Decision}, 60\penalty0 (1):\penalty0 17--33, 2006.

\bibitem[Blyth(1972)]{Blyt72a}
C.~R. Blyth.
\newblock Some probability paradoxes in choice from among random alternatives.
\newblock \emph{Journal of the American Statistical Association}, 67\penalty0
  (338):\penalty0 366--373, 1972.

\bibitem[Bogomolnaia and Moulin(2001)]{BoMo01a}
A.~Bogomolnaia and H.~Moulin.
\newblock A new solution to the random assignment problem.
\newblock \emph{Journal of Economic Theory}, 100\penalty0 (2):\penalty0
  295--328, 2001.

\bibitem[Brandl(2013)]{Bran13b}
F.~Brandl.
\newblock Efficiency and incentives in randomized social choice.
\newblock Master's thesis, Technische Universit{\"a}t M{\"u}nchen, 2013.

\bibitem[Brandl and Brandt(2020)]{BrBr17a}
F.~Brandl and F.~Brandt.
\newblock Arrovian aggregation of convex preferences.
\newblock \emph{Econometrica}, 88\penalty0 (2):\penalty0 799--844, 2020.

\bibitem[Brandl et~al.(2016)Brandl, Brandt, and Seedig]{Bran13a}
F.~Brandl, F.~Brandt, and H.~G. Seedig.
\newblock Consistent probabilistic social choice.
\newblock \emph{Econometrica}, 84\penalty0 (5):\penalty0 1839--1880, 2016.

\bibitem[Brandl et~al.(2018)Brandl, Brandt, Eberl, and Geist]{BBEG16a}
F.~Brandl, F.~Brandt, M.~Eberl, and C.~Geist.
\newblock Proving the incompatibility of efficiency and strategyproofness via
  {SMT} solving.
\newblock \emph{Journal of the ACM}, 65\penalty0 (2):\penalty0 1--28, 2018.

\bibitem[Brandl et~al.(2019)Brandl, Brandt, and Hofbauer]{BBH15c}
F.~Brandl, F.~Brandt, and J.~Hofbauer.
\newblock Welfare maximization entices participation.
\newblock \emph{Games and Economic Behavior}, 14:\penalty0 308--314, 2019.

\bibitem[Brandt et~al.(2017)Brandt, Hofbauer, and Suderland]{BHS17a}
F.~Brandt, J.~Hofbauer, and M.~Suderland.
\newblock Majority graphs of assignment problems and properties of popular
  random assignments.
\newblock In \emph{Proceedings of the 16th International Conference on
  Autonomous Agents and Multiagent Systems (AAMAS)}, pages 335--343, 2017.

\bibitem[Brandt et~al.(2023)Brandt, Lederer, and Suksompong]{BLS22c}
F.~Brandt, P.~Lederer, and W.~Suksompong.
\newblock Incentives in social decision schemes with pairwise comparison
  preferences.
\newblock \emph{Games and Economic Behavior}, 142:\penalty0 266--291, 2023.

\bibitem[Budish(2011)]{Budi11a}
E.~Budish.
\newblock The combinatorial assignment problem: Approximate competitive
  equilibrium from equal incomes.
\newblock \emph{Journal of Political Economy}, 119\penalty0 (6):\penalty0
  1061--1103, 2011.

\bibitem[Butler and Pogrebna(2018)]{BuPo18a}
D.~Butler and G.~Pogrebna.
\newblock Predictably intransitive preferences.
\newblock \emph{Judgment and Decision Making}, 13\penalty0 (3):\penalty0
  217--236, 2018.

\bibitem[Carroll(2010)]{Carr10a}
G.~D. Carroll.
\newblock An efficiency theorem for incompletely known preferences.
\newblock \emph{Journal of Economic Theory}, 145\penalty0 (6):\penalty0
  2463--2470, 2010.

\bibitem[Chen et~al.(2022)Chen, Chen, Peng, and Yannakakis]{CCPY22a}
T.~Chen, X.~Chen, B.~Peng, and M.~Yannakakis.
\newblock Computational hardness of the hylland-zeckhauser scheme.
\newblock In \emph{Proceedings of the 33rd Annual ACM-SIAM Symposium on
  Discrete Algorithms (SODA)}, pages 2253--2268, 2022.

\bibitem[Cho and Dogan(2016)]{ChDo16a}
W.~J. Cho and B.~Dogan.
\newblock Equivalence of efficiency notions for ordinal assignment problems.
\newblock \emph{Economics Letters}, 146:\penalty0 8--12, 2016.

\bibitem[Cole and Tao(2021)]{CoTa21a}
R.~Cole and Y.~Tao.
\newblock On the existence of pareto efficient and envy-free allocations.
\newblock \emph{Journal of Economic Theory}, 193, 2021.

\bibitem[Debreu(1959)]{Debr59a}
G.~Debreu.
\newblock \emph{Theory of Value. An Axiomatic Analysis of Economic
  Equilibrium}, volume~17 of \emph{Cowles Foundation for Research in Economics
  at Yale University}.
\newblock Wiley and Sons, 1959.

\bibitem[Echenique et~al.(2021)Echenique, Miralles, and Zhang]{EMZ21a}
F.~Echenique, A.~Miralles, and J.~Zhang.
\newblock Constrained pseudo-market equilibrium.
\newblock \emph{American Economic Review}, 111\penalty0 (11):\penalty0
  3699--3732, 2021.

\bibitem[Fishburn(1970)]{Fish70c}
P.~C. Fishburn.
\newblock The irrationality of transitity in social choice.
\newblock \emph{Behavioral Science}, 15:\penalty0 119--123, 1970.

\bibitem[Fishburn(1982)]{Fish82c}
P.~C. Fishburn.
\newblock Nontransitive measurable utility.
\newblock \emph{Journal of Mathematical Psychology}, 26\penalty0 (1):\penalty0
  31--67, 1982.

\bibitem[Fishburn(1984{\natexlab{a}})]{Fish84a}
P.~C. Fishburn.
\newblock Probabilistic social choice based on simple voting comparisons.
\newblock \emph{Review of Economic Studies}, 51\penalty0 (4):\penalty0
  683--692, 1984{\natexlab{a}}.

\bibitem[Fishburn(1984{\natexlab{b}})]{Fish84c}
P.~C. Fishburn.
\newblock {SSB} utility theory: {A}n economic perspective.
\newblock \emph{Mathematical Social Sciences}, 8\penalty0 (1):\penalty0 63--94,
  1984{\natexlab{b}}.

\bibitem[Fishburn(1984{\natexlab{c}})]{Fish84d}
P.~C. Fishburn.
\newblock Dominance in {SSB} utility theory.
\newblock \emph{Journal of Economic Theory}, 34\penalty0 (1):\penalty0
  130--148, 1984{\natexlab{c}}.

\bibitem[Fishburn(1988)]{Fish88a}
P.~C. Fishburn.
\newblock \emph{Nonlinear preference and utility theory}.
\newblock The Johns Hopkins University Press, 1988.

\bibitem[Fishburn(1991)]{Fish91a}
P.~C. Fishburn.
\newblock Nontransitive preferences in decision theory.
\newblock \emph{Journal of Risk and Uncertainty}, 4\penalty0 (2):\penalty0
  113--134, 1991.

\bibitem[Fishburn and Rosenthal(1986)]{FiRo86a}
P.~C. Fishburn and R.~W. Rosenthal.
\newblock Noncooperative games and nontransitive preferences.
\newblock \emph{Mathematical Social Sciences}, 12\penalty0 (1):\penalty0 1--7,
  1986.

\bibitem[Gale and {Mas-Colell}(1975)]{GaMA75a}
D.~Gale and A.~{Mas-Colell}.
\newblock An equilibrium existence theorem for a general model without ordered
  preferences.
\newblock \emph{Journal of Mathematical Economics}, 2:\penalty0 9--15, 1975.

\bibitem[Garg et~al.(2026)Garg, Tao, and V{\'e}gh]{GTV26a}
J.~Garg, Y.~Tao, and L.~A. V{\'e}gh.
\newblock Tight efficiency bounds for the probabilistic serial and related
  mechanisms.
\newblock 2026.
\newblock Working paper.

\bibitem[Gibbard(1977)]{Gibb77a}
A.~Gibbard.
\newblock Manipulation of schemes that mix voting with chance.
\newblock \emph{Econometrica}, 45\penalty0 (3):\penalty0 665--681, 1977.

\bibitem[Hara et~al.(2019)Hara, Ok, and Riella]{HOR19a}
K.~Hara, E.~A. Ok, and G.~Riella.
\newblock Coalitional expected multi-utility theory.
\newblock \emph{Econometrica}, 87\penalty0 (3):\penalty0 933--980, 2019.

\bibitem[He et~al.(2018)He, Miralles, Pycia, and Yan]{HMPY18a}
Y.~He, A.~Miralles, M.~Pycia, and J.~Yan.
\newblock A pseudo-market approach to allocation with priorities.
\newblock \emph{American Economic Journal: Microeconomics}, 10\penalty0
  (3):\penalty0 272--314, 2018.

\bibitem[Hylland(1980)]{Hyll80a}
A.~Hylland.
\newblock Strategyproofness of voting procedures with lotteries as outcomes and
  infinite sets of strategies.
\newblock {M}imeo, 1980.

\bibitem[Hylland and Zeckhauser(1979)]{HyZe79a}
A.~Hylland and R.~Zeckhauser.
\newblock The efficient allocation of individuals to positions.
\newblock \emph{The Journal of Political Economy}, 87\penalty0 (2):\penalty0
  293--314, 1979.

\bibitem[Kahneman and Tversky(1979)]{KaTv79a}
D.~Kahneman and A.~Tversky.
\newblock Prospect theory: {A}n analysis of decision under risk.
\newblock \emph{Econometrica}, 47\penalty0 (2):\penalty0 263--292, 1979.

\bibitem[Katta and Sethuraman(2006)]{KaSe06a}
A.-K. Katta and J.~Sethuraman.
\newblock A solution to the random assignment problem on the full preference
  domain.
\newblock \emph{Journal of Economic Theory}, 131\penalty0 (1):\penalty0
  231--250, 2006.

\bibitem[Kavitha et~al.(2011)Kavitha, Mestre, and Nasre]{KMN11a}
T.~Kavitha, J.~Mestre, and M.~Nasre.
\newblock Popular mixed matchings.
\newblock \emph{Theoretical Computer Science}, 412\penalty0 (24):\penalty0
  2679--2690, 2011.

\bibitem[Kuhn(1955)]{Kuhn55a}
H.~W. Kuhn.
\newblock The {H}ungarian method for the assignment problem.
\newblock \emph{Naval Research Logistics Quarterly}, 2\penalty0
  (1--2):\penalty0 83--97, 1955.

\bibitem[Llinares(1998)]{Llin98a}
J.-V. Llinares.
\newblock Unified treatment of the problem of existence of maximal elements in
  binary relations: {A} characterization.
\newblock \emph{Journal of Mathematical Economics}, 29\penalty0 (3):\penalty0
  285--302, 1998.

\bibitem[Machina(1983)]{Mach83a}
M.~J. Machina.
\newblock Generalized expected utility analysis and the nature of observed
  violations of the independence axiom.
\newblock In B.~Stigum and F.~Wenstop, editors, \emph{Foundations of Utility
  and Risk Theory with Applications}, chapter~5. Springer, 1983.

\bibitem[Machina(1989)]{Mach89a}
M.~J. Machina.
\newblock Dynamic consistency and non-expected utility models of choice under
  uncertainty.
\newblock \emph{Journal of Economic Literature}, 27\penalty0 (4):\penalty0
  1622--1668, 1989.

\bibitem[{Mas-Colell}(1974)]{MasC74a}
A.~{Mas-Colell}.
\newblock An equilibrium existence theorem without complete or transitive
  preferences.
\newblock \emph{Journal of Mathematical Economics}, 1:\penalty0 237--247, 1974.

\bibitem[Mas-Colell(1992)]{MasC92a}
A.~Mas-Colell.
\newblock Equilibrium theory with possibly satiated preferences.
\newblock In M.~Majumdar, editor, \emph{Equilibrium and Dynamics: Essays in
  Honour of David Gale}, pages 201--213. Palgrave Macmillan UK, 1992.

\bibitem[May(1954)]{May54a}
K.~May.
\newblock Intransitivity, utility, and the aggregation of preference patters.
\newblock \emph{Econometrica}, 22\penalty0 (1):\penalty0 1--13, 1954.

\bibitem[McClennen(1988)]{McCl88a}
E.~F. McClennen.
\newblock Sure-thing doubts.
\newblock In P.~G{\"a}rdenfors and N.-E. Sahlin, editors, \emph{Decision,
  Probability and Utility}, chapter~10. Cambridge University Press, 1988.

\bibitem[Miralles(2017)]{Mira17a}
A.~Miralles.
\newblock Ex-ante efficiency in assignments with seniority rights.
\newblock \emph{Review of Economic Design}, 21\penalty0 (1):\penalty0 33--48,
  2017.

\bibitem[Morawski(2022)]{Mora22a}
P.~Morawski.
\newblock Random assignment with pairwise comparison preferences.
\newblock Bachelor's thesis, Technische Universit{\"a}t M{\"u}nchen, 2022.

\bibitem[Moulin(1988)]{Moul88b}
H.~Moulin.
\newblock Condorcet's principle implies the no show paradox.
\newblock \emph{Journal of Economic Theory}, 45\penalty0 (1):\penalty0 53--64,
  1988.

\bibitem[Nash(1950)]{Nash50a}
J.~F. Nash.
\newblock Equilibrium points in $n$-person games.
\newblock \emph{Proceedings of the National Academy of Sciences (PNAS)},
  36:\penalty0 48--49, 1950.

\bibitem[Packard(1982)]{Pack82a}
D.~J. Packard.
\newblock Cyclical preference logic.
\newblock \emph{Theory and Decision}, 14\penalty0 (4):\penalty0 415--426, 1982.

\bibitem[Postlewaite and Schmeidler(1986)]{PoSc86a}
A.~Postlewaite and D.~Schmeidler.
\newblock Strategic behaviour and a notion of ex ante efficiency in a voting
  model.
\newblock \emph{Social Choice and Welfare}, 3\penalty0 (1):\penalty0 37--49,
  1986.

\bibitem[Rubinstein and Segal(2012)]{RuSe12a}
A.~Rubinstein and U.~Segal.
\newblock On the likelihood of cyclic comparisons.
\newblock \emph{Journal of Economic Theory}, 147\penalty0 (6):\penalty0
  2483--2491, 2012.

\bibitem[Sato(2010)]{Sato10b}
N.~Sato.
\newblock Satiation and existence of competitive equilibrium.
\newblock \emph{Journal of Mathematical Economics}, 46\penalty0 (4):\penalty0
  534--551, 2010.

\bibitem[Shafer and Sonnenschein(1975)]{ShSo75a}
W.~J. Shafer and H.~Sonnenschein.
\newblock Equilibrium in abstract economies without ordered preferences.
\newblock \emph{Journal of Mathematical Economics}, 2\penalty0 (3):\penalty0
  345--348, 1975.

\bibitem[Sonnenschein(1971)]{Sonn71a}
H.~Sonnenschein.
\newblock Demand theory without transitive preference with applications to the
  theory of competitive equilibrium.
\newblock In J.~Chipman, L.~Hurwicz, M.~Richter, and H.~Sonnenschein, editors,
  \emph{Preferences, Utility and Demand}. Houghton Mifflin Harcourt, 1971.

\bibitem[Steinhaus and Trybula(1959)]{StTr59a}
H.~Steinhaus and S.~Trybula.
\newblock On a paradox in applied probabilities.
\newblock \emph{Bulletin of the Polish Academy of Sciences}, 7:\penalty0
  67--69, 1959.

\bibitem[Tr{\"o}bst and Vazirani(2026)]{TrVa26a}
T.~Tr{\"o}bst and V.~V. Vazirani.
\newblock Cardinal-utility matching markets: {T}he quest for envy-freeness,
  pareto-optimality, and efficient computability.
\newblock \emph{Mathematics of Operations Research}, 2026.
\newblock Forthcoming.

\bibitem[Vazirani and Yannakakis(2025)]{VaYa25a}
V.~V. Vazirani and M.~Yannakakis.
\newblock Computational complexity of the hylland--zeckhauser mechanism for
  one-sided matching markets.
\newblock \emph{SIAM Journal on Computing}, 54\penalty0 (2):\penalty0 193--232,
  2025.

\bibitem[{von Neumann}(1928)]{vNeu28a}
J.~{von Neumann}.
\newblock Zur {T}heorie der {G}esellschaftspiele.
\newblock \emph{Mathematische Annalen}, 100\penalty0 (1):\penalty0 295--320,
  1928.

\bibitem[{von Neumann}(1953)]{vNeu53a}
J.~{von Neumann}.
\newblock A certain zero-sum two-person game equivalent to the optimal
  assignment problem.
\newblock In \emph{Contributions to the Theory of Games II}, number~28 in
  Annals of Mathematics Studies, pages 5--12. Princeton University Press, 1953.

\bibitem[von Neumann and Morgenstern(1947)]{vNM47a}
J.~von Neumann and O.~Morgenstern.
\newblock \emph{Theory of Games and Economic Behavior}.
\newblock Princeton University Press, 2nd edition, 1947.

\bibitem[Walker(1979)]{Walk79a}
M.~Walker.
\newblock A generalization of the maximum theorem.
\newblock \emph{International Economic Review}, 20\penalty0 (1):\penalty0
  267--272, 1979.

\bibitem[Wang and Carreira-Perpi{\~n}{\'a}n(2013)]{WaCa13a}
W.~Wang and M.~{\'A}. Carreira-Perpi{\~n}{\'a}n.
\newblock Projection onto the probability simplex: An efficient algorithm with
  a simple proof, and an application.
\newblock Technical report, https://arxiv.org/pdf/1309.1541, 2013.

\bibitem[Yan and Liu(2026)]{YaLi26a}
Y.~Yan and Z.~Liu.
\newblock Constant approximation for {H}ylland--{Z}eckhauser equilibria.
\newblock Technical report, https://arxiv.org/abs/2606.06317, 2026.

\bibitem[Zhou(1990)]{Zhou90a}
L.~Zhou.
\newblock On a conjecture by {G}ale about one-sided matching problems.
\newblock \emph{Journal of Economic Theory}, 52\penalty0 (1):\penalty0
  123--135, 1990.

\end{thebibliography}
\end{document}